\title{Bialgebraic Semantics for String Diagrams}
\titlerunning{Bialgebraic Semantics for String Diagrams}
\author{Filippo Bonchi}{University of Pisa}{}{}{}
\author{Robin Piedeleu}{University College London}{}{}{}
\author{Pawel Sobocinski}{University of Southampton}{}{}{}
\author{Fabio Zanasi}{University College London}{}{}{}
\authorrunning{F. Bonchi, R. Piedeleu, P. Sobocinski and F. Zanasi}
\keywords{String Diagram, Structural Operational Semantics, Bialgebraic semantics}
\def \N {\mathbb{N}}
\newcommand{\Fbeh}{\mathcal{F}}
\newcommand{\EM}[1]{EM(#1)}
\newcommand{\Alg}[1]{Alg(#1)}
\newcommand{\Coalg}[1]{CoAlg(#1)}
\newcommand{\SCOMPSYM}{\mathcal{S}_{\texttt{SM}}}
\newcommand{\SPROP}{\mathcal{S}_{\texttt{PROP}}}
\newcommand{\SCOMPFR}{\mathcal{S}_{\texttt{FR}}}
\newcommand{\SCW}{\mathcal{S}_{\texttt{CW}}}
\newcommand{\SL}{\mathcal{S}_{\texttt{L}}}
\newcommand{\IdFunc}{\iota}
\def \df {\ensuremath{:=}}
\newcommand{\tr}[1]{\xrightarrow{#1}}    
\newcommand{\tl}[1]{\xleftarrow{#1}}    
\def \To {\Rightarrow}
\def \catC {\mathcal{C}}
\def \catD {\mathcal{D}}
\def \funF {\mathcal{F}}
\def \funG {\mathcal{G}}
\def \funT {\mathcal{T}}
\def \Pow {\mathcal{P}_{\scriptscriptstyle{\kappa}}} 
\newcommand{\free}[1]{{#1}^{\dagger}}
\newcommand{\Span}[1]{\mathsf{Span}(#1)}
\def \PROP {\mathbf{PROP}} 
\def \CWP {\mathbf{CW}} 
\def \SET {\mathsf{Set}} 
\def \SIG {\mathsf{Sig}}
\def \poi {\,\ensuremath{;}\,} 
\def \tns {\ensuremath{\oplus}} 
\newcommand{\from}{\mathrel{:}}
\def \id {\mathit{id}} 
\newcommand{\quot}[2]{{#1}_{\! \raisebox{.1em}{$\scriptstyle /$} \! {\scriptscriptstyle #2}}}
\newcommand{\lbbd}{\mathopen{[\![}}
\newcommand{\rbbd}{\mathclose{]\!]}}
\newcommand{\sem}[1]{\lbbd{#1}\rbbd} 
\theoremstyle{plain}
\newcommand{\Circ}[1]{\mathsf{Circ}_{\scriptstyle #1}} 
\newcommand{\Rig}{\mathsf{R}}
\newcommand{\typ}{\mathrel{:}}
\newcommand{\typeJudgment}[3]{{ {#2} \,\typ\, {#3}}}
\newcommand{\sort}[2]{\ensuremath{(#1,\,#2)}}
\let\tinymatrix\smallmatrix
\patchcmd{\tinymatrix}{\scriptstyle}{\scriptscriptstyle}{}{}
\patchcmd{\tinymatrix}{\scriptstyle}{\scriptscriptstyle}{}{}
\patchcmd{\tinymatrix}{\vcenter}{\vtop}{}{}
\patchcmd{\tinymatrix}{\bgroup}{\bgroup\scriptsize}{}{}
 \newcommand{\derivationRule}[3]{{\prooftree{ #1}\justifies{ #2}\using\ruleLabel{#3}\endprooftree}}
\newcommand{\reductionRule}[2]{{\prooftree{\scriptstyle #1}\justifies{\scriptstyle #2}\endprooftree}}
\newcommand\twarr[2]{%
\mathrel{\mathop{\moverlay{\scriptstyle\xrightarrow{\,#1\,}\cr{\lower.2em\hbox{$\scriptstyle{}_{#2}$}}}}}}
\newcommand\twarrw[2]{%
\mathrel{\mathop{\moverlay{\scriptstyle\Longrightarrow\cr{\lower-.6em\hbox{$\scriptstyle{}_{#1}$}}
\cr{\lower.3em\hbox{$\scriptstyle{}_{#2}$}}}}}}
\newcommand{\dtrans}[2]{\hbox{$\;\twarr{#1}{#2}\;$}}
\newcommand{\dtransw}[2]{\raise1pt\hbox{$\;\twarrw{#1}{#2}\;$}}
\newcommand{\ruleLabel}[1]{#1}
\newcommand{\labelSep}{\,}
\def\moverlay{\mathpalette\mov@rlay}
\def\mov@rlay#1#2{\leavevmode\vtop{%
\baselineskip\z@skip \lineskiplimit-\maxdimen
\ialign{\hfil$#1##$\hfil\cr#2\crcr}}}
\newcommand{\stcong}{\equiv}
\newcommand{\procvar}[1]{\mathtt{#1}}
\newcommand{\syntaxdf}{\mathtt{:=}}
\newcommand{\ArEq}{\mathcal{A}}
\newcommand{\ArEqSM}{\ArEq}
\newcommand{\Eq}{\mathbb{E}}
\newcommand{\EqSM}{\Eq_{\scriptscriptstyle{\mathsf{SM}}}}
\newcommand{\EqCW}{\Eq_{\scriptscriptstyle{\mathsf{CW}}}}
\newcommand{\ArEqCW}{\ArEq_{\scriptscriptstyle{\mathsf{CW}}}}
\newcommand{\lCW}{l_{\scriptscriptstyle{\mathsf{CW}}}}
\newcommand{\rCW}{r_{\scriptscriptstyle{\mathsf{CW}}}}
\newcommand{\lambdaseqcomp}{\lambda^{\scriptstyle{\mathsf{sq}}}}
\newcommand{\lambdaparcomp}{\lambda^{\scriptstyle{\mathsf{mp}}}}
\newcommand{\lambdaid}{\lambda^{\scriptstyle{\mathsf{id}}}}
\newcommand{\lambdazero}{\lambda^{\scriptstyle{\mathsf{\epsilon}}}}
\newcommand{\lambdasym}{\lambda^{\scriptstyle{\mathsf{sy}}}}
\newcommand{\lambdaquot}[1]{\lambda^\dagger {\scriptscriptstyle{/}_{\scriptscriptstyle{\mathsf{#1}}}}}
\newcommand{\Powf}{\mathcal{P}_{\omega}}
\tikzset{x=1em, y=1.5ex, baseline=-0.5ex}
\tikzset{ihbase/.style={inner sep=0,circle,draw,fill=lightgray,minimum size=0.4em,node contents={}}}
\tikzset{ihblack/.style={ihbase,fill=black}}
\tikzset{ihwhite/.style={ihbase,fill=white}}
\tikzset{ihgray/.style=black-faded}
\tikzset{mat/.style={draw,fill=white,rectangle,node font=\scriptsize}}
\tikzset{ha/.style={mat,rounded rectangle,rounded rectangle left arc=none}}
\tikzset{haop/.style={mat,rounded rectangle,rounded rectangle right arc=none}}
\tikzset{blackha/.style={mat,rounded rectangle,rounded rectangle left arc=none,font=\color{white},fill=black}}
\tikzset{blackhaop/.style={mat,rounded rectangle,rounded rectangle right arc=none,font=\color{white},fill=black}}
\tikzset{anti/.style={inner sep=0,isosceles triangle,fill=black,draw=black, minimum width=0.75em, node contents={}}}
\tikzset{antiop/.style={anti,shape border rotate=180}}
\tikzset{antisq/.style={inner sep=0,rectangle,fill=black, minimum height=1em, minimum width=0.6em, node contents={}}}
\tikzset{count/.style={above,inner ysep=0.15em,font=\scriptsize}}
\tikzset{axiom/.style={above,font=\small}}
\tikzset{dir/.style={-Latex}}
\tikzset{st/.style={decoration={markings,
    mark={at position 0.5 with {\draw (0, 2pt) to (0, -2pt);}}},
    postaction=decorate}}
\newcommand{\genericcomult}[2]{
  \begin{tikzpicture}
    \node at (1, 0) [ihbase,solid,name=copy,#1];
    \draw[#2] (copy) .. controls (1.25, 0.75) .. (2, 0.75);
    \draw[#2] (0, 0) -- (copy);
    \draw[#2] (copy) .. controls (1.25, -0.75) .. (2, -0.75);
  \end{tikzpicture}
}
\newcommand{\genericcomultn}[2]{
\begin{tikzpicture}
	\begin{pgfonlayer}{nodelayer}
		\node [style=none] (0) at (1.25, -0.75) {};
		\node [style=none] (1) at (1.25, 0.75) {};
		\node [style=#1] (2) at (0, -0) {};
		\node [style=none] (3) at (-1, -0) {};
		\node [style=none] (4) at (1, -0.25) {\tiny $#2$};
		\node [style=none] (5) at (1, 1.25) {\tiny $#2$};
		\node [style=none] (6) at (-0.75, 0.5) {\tiny $#2$};
	\end{pgfonlayer}
	\begin{pgfonlayer}{edgelayer}
		\draw [in=180, out=60, looseness=1.00] (2) to (1.center);
		\draw [in=180, out=-60, looseness=1.00] (2) to (0.center);
		\draw (3.center) to (2);
	\end{pgfonlayer}
\end{tikzpicture}
}
\newcommand{\genericcounit}[2]{
  \tikz \draw[#2] (0, 0) -- (1, 0) node[ihbase,#1, solid];
}
\newcommand{\genericcounitn}[2]{
  \tikz \draw (0, 0) -- node[count] {#2} (1, 0) node[ihbase,#1];
}
\newcommand{\genericmult}[2]{
  \tikz {
    \node at (1,0) (copy) [ihbase,#1,solid];
    \draw[#2] (0,  0.75) .. controls (0.75,  0.75) .. (copy);
    \draw[#2] (0, -0.75) .. controls (0.75, -0.75) .. (copy);
    \draw[#2] (copy) -- (2, 0);
  }
}
\newcommand{\genericunit}[2]{
  \tikz \draw[#2] (0, 0) node[ihbase,#1, solid] -- (1, 0);
}
\newcommand{\Bcomult}{\genericcomult{ihblack}{}}
\newcommand{\Bcomultn}[1]{\genericcomultn{black}{#1}}
\newcommand{\Bcounit}{\genericcounit{ihblack}{}}
\newcommand{\Bcounitn}[1]{\genericcounitn{ihblack}{#1}}
\newcommand{\Bmult}{\genericmult{ihblack}{}}
\newcommand{\Bunit}{\genericunit{ihblack}{}}
\newcommand{\Wmult}{\genericmult{ihwhite}{}}
\newcommand{\Wunit}{\genericunit{ihwhite}{}}
\newcommand{\Wcomult}{\genericcomult{ihwhite}{}}
\newcommand{\Wcounit}{\genericcounit{ihwhite}{}}
\newcommand{\Wcounitn}[1]{\genericcounitn{ihwhite}}
\newcommand{\Gmult}{\genericmult{ihgray}{}}
\newcommand{\Gunit}{\genericunit{ihgray}{}}
\newcommand{\Gcomult}{\genericcomult{ihgray}{}}
\newcommand{\Gcounit}{\genericcounit{ihgray}{}}
\newcommand{\Gcounitn}[1]{\genericcounitn{ihgray}}
\newcommand\idzero{
\tikzset{x=1em, y=2.1ex}
\begin{tikzpicture}
	\begin{pgfonlayer}{nodelayer}
		\node [style=none] (0) at (-0.5, 0.75) {};
		\node [style=none] (1) at (0.75, 0.75) {};
		\node [style=none] (2) at (-0.5, -0.5) {};
		\node [style=none] (3) at (0.75, -0.5) {};
	\end{pgfonlayer}
	\begin{pgfonlayer}{edgelayer}
		\draw [densely dotted] (0.center) to (1.center);
		\draw [densely dotted] (1.center) to (3.center);
		\draw [densely dotted] (3.center) to (2.center);
		\draw [densely dotted] (2.center) to (0.center);
	\end{pgfonlayer}
\end{tikzpicture}}
\tikzset{x=1em, y=1.5ex}
\newcommand{\idone}{
  \tikz \draw (0, 0) -- (1, 0);
}
\newcommand{\sym}{
  \tikz {
    \draw (0,  0.5) .. controls (0.5,  0.5) and (0.5, -0.5) .. (1, -0.5);
    \draw (0, -0.5) .. controls (0.5, -0.5) and (0.5,  0.5) .. (1,  0.5);
  }
}
\newcommand\scalar[1]{
  \tikz {
    \node[ha] (ha) {#1};
    \draw (ha.west) -- ++(-0.75, 0);
    \draw (ha.east) -- ++(0.75, 0);
  }
}
\newcommand\coscalar[1]{
  \tikz {
    \node[haop] (haop) {#1};
    \draw (haop.west) -- ++(-0.75, 0);
    \draw (haop.east) -- ++(0.75, 0);
  }
}
\tikzset{x=1em, y=1.5ex}
\tikzset{x=1em, y=1.5ex}
\definecolor{light-gray}{gray}{.5}
\definecolor{backcolour}{gray}{1}
\tikzstyle{none}=[inner sep=0pt]
\tikzstyle{plain}=[inner sep=0pt]
\tikzstyle{black}=[circle, draw=black, fill=black, inner sep=0pt, minimum size=4pt]
\tikzstyle{black-faded}=[circle, draw=light-gray, fill=light-gray, inner sep=0pt, minimum size=4pt]
\tikzstyle{white}=[circle, draw=black, fill=white, inner sep=0pt, minimum size=4.5pt]
\tikzstyle{white-faded}=[circle, draw=light-gray, fill=white, inner sep=0pt, minimum size=4.5pt]
\tikzstyle{delay}=[fill=black, regular polygon, regular polygon sides=3,rotate=-90, scale=.55]
\tikzstyle{delay-op}=[fill=black, regular polygon, regular polygon sides=3,rotate=90, scale=.55]
\tikzstyle{reg}=[draw, fill=white, rounded rectangle, rounded rectangle left arc=none, minimum height=1.2em, minimum width=1.4em, node font={\scriptsize}]
\tikzstyle{coreg}=[draw, fill=white, rounded rectangle, rounded rectangle right arc=none, minimum height=1.2em, minimum width=1.4em, node font={\scriptsize}]
\tikzstyle{box}=[draw, fill=white, rectangle, minimum height=1.6em, minimum width=1em, node font={\scriptsize}]
\tikzstyle{tallbox}=[draw, fill=white, rectangle, minimum height=3.5em, minimum width=1em, node font={\scriptsize}]
\tikzstyle{rn}=[circle, draw=red, fill=red, inner sep=0pt, minimum size=4pt]
\tikzstyle{place}=[circle, draw=black, fill=white, inner sep=0pt, minimum size=8pt]
\tikzstyle{pl}=[circle,thick,draw=black!75,fill=white,minimum size=9pt]
\tikzstyle{port}=[circle, fill,inner sep=1.2pt]
\tikzstyle{transition}=[rectangle,thick,draw=black!75,
\tikzstyle{arrow}=[->]
\newcommand{\regvalue}[1]{
\begin{tikzpicture}
	\begin{pgfonlayer}{nodelayer}
		\node [style=none] (0) at (-1.5, -0) {};
		\node [style=none] (1) at (1.5, -0) {};
		\node [style=reg] (2) at (0, -0) {\scriptsize $x$};
		\node [style=none] (3) at (0, 1.5) {\scriptsize $#1$};
	\end{pgfonlayer}
	\begin{pgfonlayer}{edgelayer}
		\draw (0.center) to (2);
		\draw (2) to (1.center);
	\end{pgfonlayer}
\end{tikzpicture}}
\newcommand{\coregvalue}[1]{
\begin{tikzpicture}
	\begin{pgfonlayer}{nodelayer}
		\node [style=none] (0) at (-1.5, -0) {};
		\node [style=none] (1) at (1.5, -0) {};
		\node [style=coreg] (2) at (0, -0) {\scriptsize $x$};
		\node [style=none] (3) at (0, 1.5) {\scriptsize $#1$};
	\end{pgfonlayer}
	\begin{pgfonlayer}{edgelayer}
		\draw (0.center) to (2);
		\draw (2) to (1.center);
	\end{pgfonlayer}
\end{tikzpicture}}
\newcommand{\diagState}[2]{
\begin{tikzpicture}[background rectangle/.style={fill=backcolour},
    show background rectangle]
	\begin{pgfonlayer}{nodelayer}
		\node [style=reg] (0) at (1.75, -0) {\scriptsize $#2$};
		\node [style=none] (1) at (-0.25, -0) {};
		\node [style=none] (2) at (0, 0.6) {\scriptsize $#1$};
	\end{pgfonlayer}
	\begin{pgfonlayer}{edgelayer}
		\draw (1.center) to (0);
	\end{pgfonlayer}
\end{tikzpicture}
}
\newcommand{\semProc}[2]{
\begin{tikzpicture}[background rectangle/.style={fill=backcolour},
    show background rectangle]
	\begin{pgfonlayer}{nodelayer}
		\node [style=reg] (0) at (2.5, -0) {\tiny $\enc{#2}$};
		\node [style=none] (1) at (-0.25, -0) {};
		\node [style=none] (2) at (0, 0.6) {\scriptsize $#1$};
	\end{pgfonlayer}
	\begin{pgfonlayer}{edgelayer}
		\draw (1.center) to (0);
	\end{pgfonlayer}
\end{tikzpicture}
}
\newcommand{\parProc}[4]{
\begin{tikzpicture}[background rectangle/.style={fill=backcolour},
    show background rectangle]
	\begin{pgfonlayer}{nodelayer}
		\node [style=reg] (0) at (1.5, 1.25) {$\enc{#3}$};
		\node [style=none] (1) at (1.25, 1.25) {};
		\node [style={#1}] (2) at (-0.75, -0) {};
		\node [style=none] (3) at (-2.25, -0) {};
		\node [style=reg] (4) at (1.5, -1.25) {$\enc{#4}$};
		\node [style=none] (5) at (1.25, -1.25) {};
		\node [style=none] (6) at (-2, 0.5) {\scriptsize $#2$};
	\end{pgfonlayer}
	\begin{pgfonlayer}{edgelayer}
		\draw [in=75, out=180, looseness=1.00] (1.center) to (2.center);
		\draw (2) to (3.center);
		\draw [in=-75, out=180, looseness=1.00] (5.center) to (2);
		\draw (1.center) to (0);
		\draw (5.center) to (4);
	\end{pgfonlayer}
\end{tikzpicture}}
\newcommand{\parStates}[4]{
\begin{tikzpicture}[background rectangle/.style={fill=backcolour},
    show background rectangle]
	\begin{pgfonlayer}{nodelayer}
		\node [style=reg] (0) at (1, 1.25) {$#3$};
		\node [style=none] (1) at (0.75, 1.25) {};
		\node [style={#1}] (2) at (-0.75, -0) {};
		\node [style=none] (3) at (-2.25, -0) {};
		\node [style=reg] (4) at (1, -1.25) {$#4$};
		\node [style=none] (5) at (0.75, -1.25) {};
		\node [style=none] (6) at (-2, 0.5) {\scriptsize $#2$};
	\end{pgfonlayer}
	\begin{pgfonlayer}{edgelayer}
		\draw [in=75, out=180, looseness=1.00] (1.center) to (2.center);
		\draw (2) to (3.center);
		\draw [in=-75, out=180, looseness=1.00] (5.center) to (2);
		\draw (1.center) to (0);
		\draw (5.center) to (4);
	\end{pgfonlayer}
\end{tikzpicture}}
\newcommand{\hideProc}[3]{
\begin{tikzpicture}[background rectangle/.style={fill=backcolour},
    show background rectangle]
	\begin{pgfonlayer}{nodelayer}
		\node [style=none] (0) at (1.25, -0.5) {};
		\node [style={#1}] (1) at (-0.75, -0.5) {};
		\node [style=none] (2) at (-2, 0.5) {};
		\node [style=none] (3) at (1.25, 0.5) {};
		\node [style=none] (4) at (-1.5, 1) {\scriptsize $#2$};
		\node [style=reg] (5) at (1.5, -0) {$\enc{#3}$};
	\end{pgfonlayer}
	\begin{pgfonlayer}{edgelayer}
		\draw (0.center) to (1.center);
		\draw (2.center) to (3.center);
	\end{pgfonlayer}
\end{tikzpicture}}
\newcommand{\hideState}[3]{
\begin{tikzpicture}[background rectangle/.style={fill=backcolour},
    show background rectangle]
	\begin{pgfonlayer}{nodelayer}
		\node [style=none] (0) at (1.25, -0.5) {};
		\node [style={#1}] (1) at (-0.75, -0.5) {};
		\node [style=none] (2) at (-2, 0.5) {};
		\node [style=none] (3) at (1.25, 0.5) {};
		\node [style=none] (4) at (-1.5, 1) {\scriptsize $#2$};
		\node [style=reg] (5) at (1, -0) {$#3$};
	\end{pgfonlayer}
	\begin{pgfonlayer}{edgelayer}
		\draw (0.center) to (1.center);
		\draw (2.center) to (3.center);
	\end{pgfonlayer}
\end{tikzpicture}}
\newcommand{\permProc}[3]{
\begin{tikzpicture}[background rectangle/.style={fill=backcolour},
    show background rectangle]
	\begin{pgfonlayer}{nodelayer}
		\node [style=reg] (0) at (1.75, -0) {$\enc{#3}$};
		\node [style=none] (1) at (0, 0.5) {\scriptsize $#1$};
		\node [style=box] (2) at (-1.25, -0) {\scriptsize $\overline{#2}$};
		\node [style=none] (3) at (-3.25, -0) {};
		\node [style=none] (4) at (-2.75, 0.5) {\scriptsize $#1$};
	\end{pgfonlayer}
	\begin{pgfonlayer}{edgelayer}
		\draw (3.center) to (2.center);
		\draw (2.center) to (0);
	\end{pgfonlayer}
\end{tikzpicture}}
\newcommand{\permState}[3]{
\begin{tikzpicture}[background rectangle/.style={fill=backcolour},
    show background rectangle]
	\begin{pgfonlayer}{nodelayer}
		\node [style=reg] (0) at (1.25, -0) {$#3$};
		\node [style=none] (1) at (0, 0.5) {\scriptsize $#1$};
		\node [style=box] (2) at (-1.25, -0) {\scriptsize $\overline{#2}$};
		\node [style=none] (3) at (-3.25, -0) {};
		\node [style=none] (4) at (-2.75, 0.5) {\scriptsize $#1$};
	\end{pgfonlayer}
	\begin{pgfonlayer}{edgelayer}
		\draw (3.center) to (2.center);
		\draw (2.center) to (0);
	\end{pgfonlayer}
\end{tikzpicture}}
\newcommand{\newvarProc}[3]{
\begin{tikzpicture}[background rectangle/.style={fill=backcolour},
    show background rectangle]
	\begin{pgfonlayer}{nodelayer}
		\node [style=none] (0) at (-1.75, -0.4) {};
		\node [style={#1}] (1) at (-0.5, -0.4) {};
		\node [style=none] (2) at (-0.75, 0.5) {};
		\node [style=reg] (3) at (1.5, -0) {$\enc{#3}$};
		\node [style=none] (4) at (-1.25, 1) {\scriptsize $#2$};
		\node [style=none] (5) at (-1.75, 0.5) {};
	\end{pgfonlayer}
	\begin{pgfonlayer}{edgelayer}
		\draw (0.center) to (1.center);
		\draw [in=180, out=0, looseness=1.25] (2.center) to (3.center);
		\draw (5.center) to (2.center);
	\end{pgfonlayer}
\end{tikzpicture}}
\newcommand{\newvarState}[3]{
\begin{tikzpicture}[background rectangle/.style={fill=backcolour},
    show background rectangle]
	\begin{pgfonlayer}{nodelayer}
		\node [style=none] (0) at (-1.75, -0.4) {};
		\node [style={#1}] (1) at (-0.5, -0.4) {};
		\node [style=none] (2) at (-0.75, 0.5) {};
		\node [style=reg] (3) at (1.25, -0) {$#3$};
		\node [style=none] (4) at (-1.25, 1) {\scriptsize $#2$};
		\node [style=none] (5) at (-1.75, 0.5) {};
	\end{pgfonlayer}
	\begin{pgfonlayer}{edgelayer}
		\draw (0.center) to (1.center);
		\draw [in=180, out=0, looseness=1.25] (2.center) to (3.center);
		\draw (5.center) to (2.center);
	\end{pgfonlayer}
\end{tikzpicture}}
\newcommand{\stateplusone}[2]{
\begin{tikzpicture}[background rectangle/.style={fill=backcolour},
    show background rectangle]
	\begin{pgfonlayer}{nodelayer}
		\node [style=none] (0) at (0.75, -0.5) {};
		\node [style=none] (1) at (-0.75, 0.5) {};
		\node [style=none] (2) at (0.75, 0.5) {};
		\node [style=none] (3) at (-0.5, 1) {\scriptsize $#1$};
		\node [style=reg] (4) at (1.25, -0) {$#2$};
		\node [style=none] (5) at (-0.75, -0.5) {};
	\end{pgfonlayer}
	\begin{pgfonlayer}{edgelayer}
		\draw (1.center) to (2.center);
		\draw (0.center) to (5.center);
	\end{pgfonlayer}
\end{tikzpicture}}
\newcommand{\procplusone}[2]{
\begin{tikzpicture}[background rectangle/.style={fill=backcolour},
    show background rectangle]
	\begin{pgfonlayer}{nodelayer}
		\node [style=none] (0) at (0.75, -0.5) {};
		\node [style=none] (1) at (-0.75, 0.5) {};
		\node [style=none] (2) at (0.75, 0.5) {};
		\node [style=none] (3) at (-0.5, 1) {\scriptsize $#1$};
		\node [style=reg] (4) at (1.7, -0) {$#2$};
		\node [style=none] (5) at (-0.75, -0.5) {};
	\end{pgfonlayer}
	\begin{pgfonlayer}{edgelayer}
		\draw (1.center) to (2.center);
		\draw (0.center) to (5.center);
	\end{pgfonlayer}
\end{tikzpicture}}
\newcommand{\vardisconnect}[4]{
\begin{tikzpicture}[background rectangle/.style={fill=backcolour},
    show background rectangle]
	\begin{pgfonlayer}{nodelayer}
		\node [style=none] (0) at (1, -0.25) {};
		\node [style=none] (1) at (-0.75, -1) {};
		\node [style=none] (2) at (-1.75, -0) {};
		\node [style=reg] (3) at (1.25, -0) {$#4$};
		\node [style=none] (4) at (1, 0.25) {};
		\node [style=#1] (5) at (-0.5, -0) {};
		\node [style=none] (6) at (-1, 1.5) {\scriptsize $#3 - 1$};
		\node [style=none] (7) at (-1.75, 1) {};
		\node [style=none] (8) at (-0.75, 1) {};
		\node [style=none] (9) at (-1.75, -1) {};
		\node [style=none] (10) at (-1, -1.5) {\scriptsize $#2 - #3$};
	\end{pgfonlayer}
	\begin{pgfonlayer}{edgelayer}
		\draw (2.center) to (5);
		\draw [in=180, out=0, looseness=1.25] (8.center) to (4.center);
		\draw (7.center) to (8.center);
		\draw [in=180, out=0, looseness=1.25] (1.center) to (0.center);
		\draw (9.center) to (1.center);
	\end{pgfonlayer}
\end{tikzpicture}}
\newcommand{\diagEffect}[2]{
\begin{tikzpicture}[show background rectangle, {background rectangle/.style}={fill=backcolour}]
	\begin{pgfonlayer}{nodelayer}
		\node [style=coreg] (0) at (-0.25, -0) {\scriptsize $#2$};
		\node [style=none] (1) at (1.75, -0) {};
		\node [style=none] (2) at (1.5, 0.6) {\scriptsize $#1$};
	\end{pgfonlayer}
	\begin{pgfonlayer}{edgelayer}
		\draw (1.center) to (0);
	\end{pgfonlayer}
\end{tikzpicture}}
\begin{document}

\maketitle

\begin{abstract}
Turi and Plotkin's bialgebraic semantics is an abstract approach to specifying the operational semantics of a system, by means of a distributive law between its syntax (encoded as a monad) and its dynamics (an endofunctor). This setup is instrumental in showing that a semantic specification (a coalgebra) satisfies desirable properties: in particular, that it is compositional. 

In this work, we use the bialgebraic approach to derive well-behaved structural operational semantics of \emph{string diagrams}, a graphical syntax that is increasingly used in the study of interacting systems across different disciplines. Our analysis relies on representing the two-dimensional operations underlying string diagrams in various categories as a monad, and their bialgebraic semantics in terms of a distributive law for that monad. 

As a proof of concept, we provide bialgebraic compositional semantics for a versatile string diagrammatic language which has been used to model both signal flow graphs (control theory) and Petri nets (concurrency theory). Moreover, our approach reveals a correspondence between two different interpretations of the Frobenius equations on string diagrams and two synchronisation mechanisms for processes, \`{a} la Hoare and \`{a} la Milner.
 \end{abstract}

\section{Introduction}

Starting from the seminal works of Hoare and Milner, there was an explosion~\cite{Hoare:1978:CSP,Milner:1982:CCS,Sangiorgi:2001:PiCalculus,Cardelli2000,Castagna2005} of interest in process calculi: formal languages for specifying and reasoning about concurrent systems.
The beauty of the approach, and often the focus of research, lies in \emph{compositionality}: essentially, the behaviour of composite systems---usually understood as some kind of process equivalence, the most famous of which is bisimilarity---ought to be a function of the behaviour of its components. The central place of compositionality led to the study of syntactic formats for semantic specifications~\cite{de1985higher,bloom1995bisimulation,groote1993transition}; succinctly stated, syntactic operations with semantics defined using such formats are homomorphic wrt the semantic space of behaviours.

Another thread of concurrency theory research~\cite{Reisig1985,jensen2013coloured,harel1987statecharts} uses graphical formalisms, such as Petri nets. These often have the advantage of highlighting connectivity, distribution and the communication topology of systems. They tend to be popular with practitioners in part because of their intuitive and human-readable depictions, an aspect that should not be underestimated: indeed, pedagogical texts introducing CCS~\cite{Hoare:1978:CSP} and CSP~\cite{Milner:1982:CCS} often resort to pictures that give intuition about topological aspects of syntactic specifications. However, compositionality has not---historically---been a principal focus of research.  

In this paper we propose a framework that allows us to eat our cake and have it too. We use \emph{string diagrams}~\cite{Selinger2009} which have an intuitive graphical rendering, but also come with algebraic operations for composition. String diagrams combine the best of both worlds: they are a (2-dimensional) syntax, but also convey important topological information about the systems they specify. They have been used in recent years to give compositional accounts of quantum circuits\cite{Abramsky2004, Coecke2008}, signal flow graphs~\cite{Bonchi2015,BaezErbele-CategoriesInControl,Fong2015}, Petri nets~\cite{BonchiHPSZ19}, and electrical circuits~\cite{Fong2013,Ghica13}, amongst several other applications.

Our main contribution is the adaptation of Turi and Plotkin's bialgebraic semantics (\emph{abstract GSOS})~\cite{turi1997towards,klin2011bialgebras} for string diagrams. By doing so, we provide a principled justification and theoretical framework for giving definitions of operational semantics to the generators and operations of string diagrams, which are those of monoidal categories. More precisely we deal with string diagrams for symmetric monoidal categories which organise themselves as arrows of a particularly simple and well-behaved class known as \emph{props}. Similar operational definitions have been used in the work on the algebra of Span(Graph)~\cite{Katis1997a}, tile logic~\cite{Gadducci2000}, the wire calculus~\cite{wirecalc} and recent work on modelling signal flow graphs and Petri nets~\cite{Bonchi2015,BonchiHPSZ19}. In each case, semantics was given either monolithically or via a set of SOS rules. The link with bialgebraic framework---developed in this paper---provides us a powerful theoretical tool that (i) justifies these operational definitions and (ii) guarantees compositionality.
%

In a nutshell, in the bialgebraic approach, the syntax of a language is the initial algebra (the algebra of terms) $T_\Sigma$  for a signature functor $\Sigma$. A certain kind of distributive law, an \emph{abstract GSOS} specification~\cite{turi1997towards}, induces a coalgebra (a state machine) $\beta \colon T_\Sigma \to \Fbeh T_\Sigma$ capturing the operational semantics of the language. The final $\Fbeh$-coalgebra $\Omega$ provides the denotational universe: intuitively, the space of all possible behaviours. The unique coalgebra map $\sem{\cdot}_\beta \colon T_\Sigma \to \Omega$ represents the denotational semantics assigning to each term its behaviour.
 \begin{equation}\label{eq:diagintro}
\vcenter{
\xymatrix@C=50pt{
T_\Sigma \ar@{-->}[r]^{\sem{\cdot}_{\beta}} \ar[d]_{\beta} & \Omega \ar[d]\\
\Fbeh(T_\Sigma) \ar[r]_{\Fbeh(\sem{\cdot}_{\beta})} & \Fbeh(\Omega)
}
} 
\end{equation}
The crucial observation is that \eqref{eq:diagintro} lives in the category of $\Sigma$-algebras: $\Omega$ also carries a $\Sigma$-algebra structure and the denotational semantics is an algebra homomorphism. This means that the behaviour of a composite system is determined by the behaviour of the components, e.g.\ $\sem{s+t}= \sem{s}+\sem{t}$, for an operation $+$ in $\Sigma$.

We show that the above framework can be adapted to the algebra of string diagrams. The end result is a picture analogous to \eqref{eq:diagintro}, but living in the category of props and prop morphism. As a result, the denotational map is a prop morphism, and thus guarantees compositionality with respect to sequential and parallel composition of string diagrams.

Adapting the bialgebraic approach to the 2-dimensional syntax of props requires some technical work since, e.g.\ the composition operation of monoidal categories is a dependent operation. For this reason we need to adapt the usual notion of a syntax endofunctor on the category of sets; instead we work in a category $\SIG$ whose objects are spans $\N \tl{} \Sigma \tr{} \N$, with the two legs giving the number of dangling wires on the left and right of each diagram. The operations of props are captured as a $\SIG$-endofunctor, which yields string-diagrams-as-initial-algebra, and a quotient of the resulting free monad, whose algebras are precisely~props.

In addition to the basic laws of props, we also consider the further imposition of the equations of special Frobenius algebras. 
We illustrate the role of this algebraic structure with our running example, a string diagrammatic process calculus~$\Circ{\Rig}$ that has two Frobenius structures and can be equipped with two different semantics, one which provides a compositional account of signal flow graphs for linear time invariant dynamical systems~\cite{Bonchi2015}, and one which is a compositional account of Petri nets~\cite{BonchiHPSZ19}.

We conclude with an observation that ties our work back to classical concepts of process calculi and show that the two Frobenius structures of~$\Circ{\Rig}$ are closely related to two different, well-known synchronisation patterns, namely those of Hoare's CSP~\cite{Hoare:1978:CSP} and Milner's CCS~\cite{Milner:1982:CCS}.

\emph{Structure of the paper.} In \S\ref{sec:motivating} we introduce our main example and recall some preliminaries, followed by a recapitulation of bialgebraic approach in \S\ref{sec:background}. We develop the technical aspects of string-diagrams-as-syntax in \S\ref{sec:hierarchy} and adapt the bialgebraic approach in \S\ref{sec:SOS}. Finally, we exhibit the connection with classical synchronisation mechanisms in \S\ref{BWFrob} and conclude in \S\ref{sec:conclusion}.

\section{Motivating Example}\label{sec:motivating}
\newcommand{\registerk}{\regvalue{k}} 
\newcommand{\coregisterk}{\coregvalue{k}} 
As our motivating example, we recall from \cite{Bonchi2014b,BonchiSZ17,BonchiHPSZ19} a basic language $\Circ{\Rig}$ given by the grammar below. Values $k$ in $\registerk$ and $\scalar{k}$ range over elements of a given semiring~$\Rig$.
\begin{align}\label{eq:syntax1}
\hspace{-.5cm} c,d \ ::= \ & \Bcounit \mid \Bcomult \mid \registerk \mid \scalar{k} \mid \Wmult \mid \Wunit \mid \Bunit \mid \Bmult \mid  \coregisterk \mid \coscalar{k} \mid  \Wcomult \mid \Wcounit  \mid    \\
 \label{eq:syntax2} & \mid \idzero \mid \idone \mid \sym   \mid c\poi d \mid c \tns d 
\end{align}

The language does not feature variables; on the other hand, a simple sorting discipline is necessary. A sort is a pair $\sort{n}{m}$, with $n, m \in \N$.
Henceforth we will consider only terms sortable according to the rules in Figure~\ref{sort}. An easy induction confirms uniqueness of sorting. 

\begin{figure}[t]
$$\begin{aligned}
\typeJudgment{}{\Bcounit}{\sort{1}{0}} \quad
\typeJudgment{}{\Bcomult}{\sort{1}{2}} \quad
\typeJudgment{}{\registerk}{\sort{1}{1}} \quad
\typeJudgment{}{\scalar{k}}{\sort{1}{1}} \quad
\typeJudgment{}{\Wmult}{\sort{2}{1}} \quad
\typeJudgment{}{\Wunit}{\sort{0}{1}} \\
\typeJudgment{}{\Bunit}{\sort{0}{1}} \quad
\typeJudgment{}{\Bmult}{\sort{2}{1}} \quad
\typeJudgment{}{\coregvalue{k}}{\sort{1}{1}} \quad
\typeJudgment{}{\coscalar{k}}{\sort{1}{1}} \quad
\typeJudgment{}{\Wcomult}{\sort{1}{2}} \quad
\typeJudgment{}{\Wcounit}{\sort{1}{0}}
\\
 \typeJudgment{}{\idzero}{\sort{0}{0}} \quad
 \typeJudgment{}{\idone}{\sort{1}{1}} \quad
 \typeJudgment{}{\sym}{\sort{2}{2}} \quad
\reductionRule{ \typeJudgment{}{c}{\sort{k_1}{k_2}} \quad \typeJudgment{}{d}{\sort{k_2}{k_3}} }
{ \typeJudgment{}{c\poi d}{\sort{k_1}{k_3}} }\quad
\reductionRule{ \typeJudgment{}{c}{\sort{k_1}{l_1}} \quad \typeJudgment{}{d}{\sort{k_2}{l_2}} }
{ \typeJudgment{}{c \tns d}{\sort{k_1+k_2}{l_1+l_2}} }
\end{aligned}$$
\caption{Sorting discipline for $\Circ{\Rig}$}\label{sort}
\end{figure}


\begin{figure}[t]
\[\begin{array}{llllll}
\Bcounit \dtrans{k}{\varepsilon} \Bcounit &
\Bcomult \dtrans{\,k\,}{\,k \, k\, } \Bcomult &
\registerk \dtrans{l}{k} \regvalue{l} &
\scalar{k} \dtrans{l}{l\cdot k} \scalar{k} &
\, \Wmult \dtrans{\,k\,l\,}{k+l} \Wmult &
\Wunit \dtrans{\varepsilon}{0} \Wunit \\
\Bunit \dtrans{\varepsilon}{k} {\Bunit} &
\Bmult \dtrans{\, k\, k\,}{k} \ \Bmult &
\coregisterk \dtrans{k}{l} \coregvalue{l} &
\coscalar{k} \dtrans{l\cdot k} {l}\coscalar{k} &
\, \Wcomult \dtrans{k+l}{\,k\,l\,} \Wcomult &
\Wcounit \dtrans{0} {\varepsilon}\Wcounit \\
\end{array}\]
\caption{Structural Operational Semantics for the generators of $\Circ{\Rig}$. Intuitively, from left to right, these are elementary connectors modelling discard, copy, one-place register, multiplication by a scalar, addition, and the constant zero.\label{sos1}}
\end{figure}
\begin{figure}[t]
\[\begin{array}{lllll}
\derivationRule{c\dtrans{\bm{a}}{\bm{b}} c' \quad d\dtrans{\bm{b}}{\bm{c}} d'}
{c\poi d \dtrans {\bm{a}}{\bm{c}} c' \poi d'}{\lambdaseqcomp} &
\derivationRule{\vspace{30pt}}{\idone  \dtrans{k}{k} \idone}{\lambdaid} &
\derivationRule{s\dtrans{\bm{a}_1}{\bm{b}_1} c'\quad d\dtrans{\bm{a}_2}{\bm{b}_2} d'}
{c\tns d \dtrans{\, \bm{a}_1 \bm{a}_2}{ \bm{b}_1 \bm{b}_2} \ d'\tns d'}{\lambdaparcomp}&
\derivationRule{}{\idzero \dtrans{\varepsilon}{\varepsilon} \idzero}{\lambdazero} &
\derivationRule{}{\sym \dtrans{\,k\,l\,}{\,l\,k\,}\ \sym}{\lambdasym}
\end{array}\]
\caption{Structural operational semantics for the operations of $\Circ{\Rig}$.\label{sos2}}
\end{figure}

The operational meaning of terms is defined recursively by the structural rules in Figs.~\ref{sos1} and~\ref{sos2}
where $k,l$ range over~$\Rig$ and $\bm{a},\bm{b},\bm{c}$ over $\Rig^{\star}$, the set of words over $\Rig$. We denote the empty word by $\varepsilon$ and concatenation of $\bm{a},\bm{b}$ by $\bm{a}\bm{b}$. As expected $+,\cdot$ and $0$ denote respectively the sum, the product and zero of the semiring~$\Rig$. 
For any term $c \colon \sort{n}{m}$, the rules yield a labelled transition system where each transition has form $c \dtrans{\bm{a}}{\bm{b}}d$. By induction, it is immediate that $d$ has the same sort as $c$, the word $\bm{a}$ has length $n$, and $\bm{b}$ has length $m$. 

\medskip

Our chief focus in this paper is the study of semantics specifications of the kind given in Figs.~\ref{sos1} and~\ref{sos2}. So far, the technical difference with typical GSOS examples~\cite{bloom1995bisimulation} 
is the presence of a sorting discipline. A more significant difference, which we will now highlight, is that sorted terms are considered up-to the laws of symmetric monoidal categories. As such, they are ``2-dimensional syntax'' and enjoy a pictorial representation in terms of string diagrams.

%
%

\subsection{From Terms to String Diagrams} \label{sec:stringdiag}


In \eqref{eq:syntax1}-\eqref{eq:syntax2} we purposefully used a graphical rendering of the components. Indeed, terms of $\Circ{\Rig}$ are usually represented graphically, according to the convention that $c \poi c'$ is drawn $\lower9pt\hbox{$\includegraphics[height=.8cm]{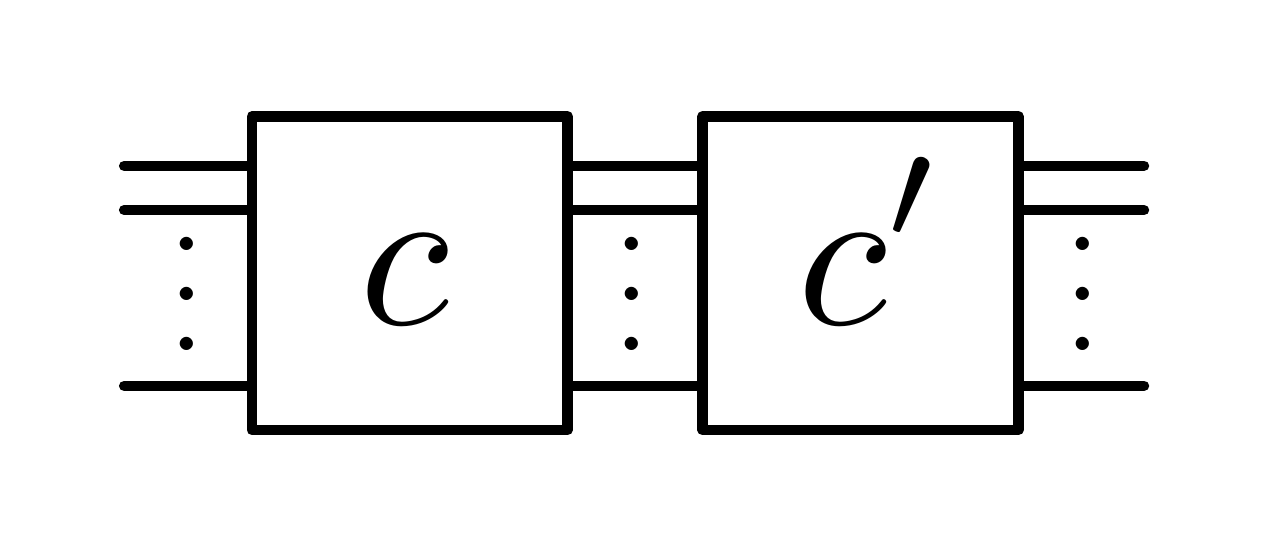}$}$ and $c \tns c'$ is drawn $\lower15pt\hbox{$\includegraphics[height=1.2cm]{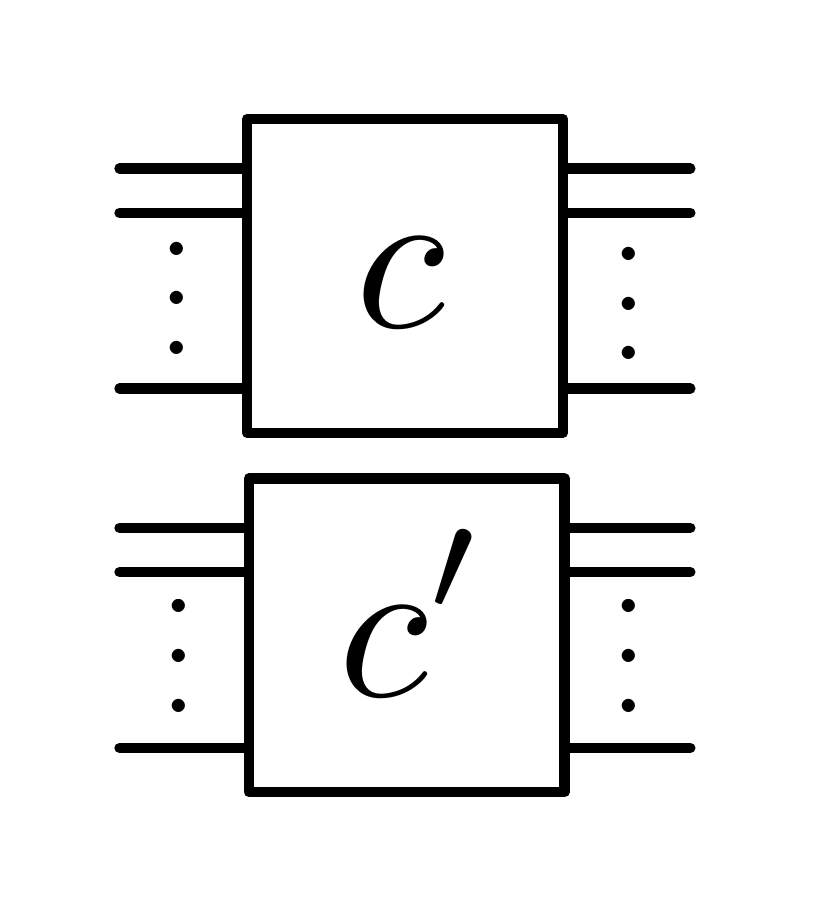}$}$. For instance, the term $(\,(\Bunit \poi \Bcomult )\tns \idone \,) \, \poi \, (\,(\idone \tns (\Wmult \poi \registerk \poi \Wcomult)) \,) \, \poi \, (\,( \Bmult \poi \Bcounit)\tns \idone \,)$ is depicted as the following diagram. 
\begin{equation}\label{petriPlace} p_k::=
\tikzset{x=1em, y=2.1ex}
\begin{tikzpicture}[show background rectangle, {background rectangle/.style}={fill=backcolour}]
	\begin{pgfonlayer}{nodelayer}
		\node [style=none] (0) at (2.5, 1.75) {};
		\node [style=none] (1) at (-2, 0.25) {};
		\node [style=none] (2) at (2.5, 0.25) {};
		\node [style=white] (3) at (-1, -0.5) {};
		\node [style=white] (4) at (1.5, -0.5) {};
		\node [style=none] (5) at (2.5, -1.25) {};
		\node [style=none] (6) at (-2, -1.25) {};
		\node [style=none] (7) at (5, -1.25) {};
		\node [style=none] (8) at (-4.5, -1.25) {};
		\node [style=none] (9) at (-2, 1.75) {};
		\node [style=reg] (10) at (0.25, -0.5) {$x$};
		\node [style=black] (11) at (-3, 1) {};
		\node [style=black] (12) at (-4, 1) {};
		\node [style=black] (13) at (3.5, 1) {};
		\node [style=black] (14) at (4.5, 1) {};
		\node [style=none] (15) at (0.25, 0.5) {\scriptsize $k$};
	\end{pgfonlayer}
	\begin{pgfonlayer}{edgelayer}
		\draw [in=180, out=75, looseness=1.00] (4) to (2.center);
		\draw [in=0, out=105, looseness=1.00] (3) to (1.center);
		\draw [in=-105, out=0, looseness=1.00] (6.center) to (3);
		\draw [in=180, out=-75, looseness=1.00] (4) to (5.center);
		\draw (6.center) to (8.center);
		\draw (5.center) to (7.center);
		\draw (9.center) to (0.center);
		\draw (3) to (10.center);
		\draw (10.center) to (4);
		\draw (12) to (11);
		\draw [in=180, out=-60, looseness=1.00] (11) to (1.center);
		\draw [in=-120, out=0, looseness=1.00] (2.center) to (13);
		\draw (13) to (14);
		\draw [in=120, out=0, looseness=1.00] (0.center) to (13);
		\draw [in=180, out=60, looseness=1.00] (11) to (9.center);
	\end{pgfonlayer}
\end{tikzpicture}}
\tikzset{x=1em, y=1.5ex}
 \end{equation}
Given this graphical convention, 
a sort gives the number of dangling wires on each side of the diagram induced by a term. A transition $c \dtrans{\bm{a}}{\bm{b}}d$ means that $c$ may evolve to $d$ when the values on the dangling wires on the left are $\bm{a}$ and those on the right are $\bm{b}$. When $\Rig$ is the natural numbers, the diagram in \eqref{petriPlace} behaves as a place of a Petri nets containing $k$ tokens: any number of tokens can be inserted  from its left and at most $k$ tokens can be removed from its right. Indeed, by the rules in Figs.~\ref{sos1} and~\ref{sos2}, $p_k\dtrans{i}{o}p_{k'}$ iff $o\leq k$ and $k'=i+k-o$.

\medskip

The graphical notation is appealing as it highlights connectivity and the capability for
resource exchange. However, syntactically different terms 
can yield the same diagram, e.g. $ (\Bunit \tns \idone ) \poi ( \Bcomult \tns \idone) \poi (\idone \tns \Wmult) \poi (\idone \tns \registerk) \poi (\idone \tns \Wcomult) \poi  ( \Bmult \tns \idone) \poi (\Bcounit \tns \idone) $ also yields~\eqref{petriPlace}. 
Indeed, one defines diagrams to be terms modulo \emph{structural congruence}, denoted by $\stcong$. This is the  smallest congruence over terms generated by the equations of strict symmetric monoidal categories (SMCs):
\begin{eqnarray} 
 \qquad  \label{eq:equationCategories}
(f \tns g) \tns h   \, \stcong \,  f \tns (g \tns h)  \qquad (\epsilon \tns f)  \,  \stcong \,  f  \quad (f \tns \epsilon) \, \stcong \,  f\qquad \sigma_{1,1} \poi \sigma_{1,1} \, \stcong \,  id_{2} \label{eq:equationPROPs} \\
 (f \poi g) \tns ( h \poi  i)   \, \stcong \, (f \tns h) \poi ( g \tns  i)  \qquad (f \poi g) \poi h   \, \stcong \, f \poi (g \poi h)  \qquad (f \poi id_m)   \, \stcong \, f
\label{eq:equationPROPs2} \\
 (id_n \! \poi \! f) \, \stcong \, f \qquad (\sigma_{1,n} \!\poi\! ( f \tns id_1)) \stcong (id_1 \tns f) \!\poi\! \sigma_{1,m} \qquad (\sigma_{n,1} \!\poi\! ( id_1 \tns g)) \stcong (g \tns id_1) \!\poi\! \sigma_{m,1}  \label{eq:equationPROPs3}
\end{eqnarray}
where identities $\typeJudgment{}{id_n}{\sort{n}{n}}$ and symmetries $\typeJudgment{}{\sigma_{n,m}}{\sort{n+m}{m+n}}$ can be recursively defined starting from $id_0 \df \idzero$ and $\sigma_{1,1}\df \sym$. 
Therefore, sorted diagrams $\typeJudgment{}{c}{\sort{n}{m}}$ are the arrows $n \to m$ of an SMC with objects the natural numbers, also called a prop~\cite{MacLane1965}.


\subsection{Frobenius Bimonoids} \label{sec:axioms}
We will also consider additional algebraic structure for the black ($\Bcounit$, $\Bcomult$, $\Bunit$, $\Bmult$) and the white ($\Wunit$, $\Wmult$, $\Wcounit$, $\Wcomult$) components. When $\Rig$ is the field of reals, $\Circ{\Rig}$ models linear dynamical systems \cite{BonchiSZ17,Fong2015,BaezErbele-CategoriesInControl} and both the black and the white structures form special Frobenius bimonoids, meaning the axioms of Fig.~\ref{fig:Frob} hold, replacing the gray circles by either black or white. When $\Rig$ is the semiring of natural numbers,  $\Circ{\Rig}$ models Petri nets \cite{BonchiHPSZ19} and only the black structure 
satisfies these equations.
In~\S~\ref{BWFrob}, we shall see that the black Frobenius structure gives rise to the synchronisation mechanism used by Hoare in CSP~\cite{Hoarebook}, while the white Frobenius structure to that used by Milner in CCS~\cite{Milner:1982:CCS}.

\begin{figure}
\begin{align*}
 
\tikzset{x=1em, y=2.1ex}
\begin{tikzpicture}
	\begin{pgfonlayer}{nodelayer}
		\node [style=none] (0) at (1, 1.25) {};
		\node [style=none] (1) at (2.25, -1) {};
		\node [style=black-faded] (2) at (1, -0.25) {};
		\node [style=none] (3) at (2.25, 1.25) {};
		\node [style=black-faded] (4) at (-0.25, 0.5) {};
		\node [style=none] (5) at (-1.5, 0.5) {};
		\node [style=none] (6) at (2.25, 0.5) {};
	\end{pgfonlayer}
	\begin{pgfonlayer}{edgelayer}
		\draw (4) to (5.center);
		\draw [bend right, looseness=1.00] (0.center) to (4);
		\draw [bend left, looseness=1.00] (1.center) to (2);
		\draw [bend right, looseness=1.00] (6.center) to (2);
		\draw [in=180, out=-60, looseness=1.00] (4) to (2);
		\draw (3.center) to (0.center);
	\end{pgfonlayer}
\end{tikzpicture}}
\tikzset{x=1em, y=1.5ex}
\;\stcong\; 
\tikzset{x=1em, y=2.1ex}
\begin{tikzpicture}
	\begin{pgfonlayer}{nodelayer}
		\node [style=black-faded] (0) at (-0.5, -0) {};
		\node [style=none] (1) at (-1.75, -0) {};
		\node [style=none] (2) at (0.75, -0.75) {};
		\node [style=black-faded] (3) at (0.75, 0.75) {};
		\node [style=none] (4) at (2, 1.5) {};
		\node [style=none] (5) at (2, -0) {};
		\node [style=none] (6) at (2, -0.75) {};
	\end{pgfonlayer}
	\begin{pgfonlayer}{edgelayer}
		\draw (0) to (1.center);
		\draw [bend left, looseness=1.00] (2.center) to (0);
		\draw [bend right, looseness=1.00] (4.center) to (3);
		\draw [bend left, looseness=1.00] (5.center) to (3);
		\draw [in=180, out=60, looseness=1.00] (0) to (3);
		\draw (6.center) to (2.center);
	\end{pgfonlayer}
\end{tikzpicture}}
\tikzset{x=1em, y=1.5ex}
\qquad  
\tikzset{x=1em, y=2.1ex}
\begin{tikzpicture}
	\begin{pgfonlayer}{nodelayer}
		\node [style=black-faded] (0) at (0, -0) {};
		\node [style=none] (1) at (0.75, -0.75) {};
		\node [style=none] (2) at (0.75, 0.75) {};
		\node [style=none] (3) at (-1.25, -0) {};
		\node [style=none] (4) at (2.75, -0.75) {};
		\node [style=none] (5) at (2.75, 0.75) {};
	\end{pgfonlayer}
	\begin{pgfonlayer}{edgelayer}
		\draw (0) to (3.center);
		\draw [in=75, out=180, looseness=1.00] (2.center) to (0);
		\draw [in=180, out=-75, looseness=1.00] (0) to (1.center);
		\draw [in=0, out=180, looseness=1.00] (4.center) to (2.center);
		\draw [in=180, out=0, looseness=0.75] (1.center) to (5.center);
	\end{pgfonlayer}
\end{tikzpicture}}
\tikzset{x=1em, y=1.5ex}
\stcong 
\tikzset{x=1em, y=2.1ex}
\begin{tikzpicture}
	\begin{pgfonlayer}{nodelayer}
		\node [style=black-faded] (0) at (0, -0) {};
		\node [style=none] (1) at (-1.25, -0) {};
		\node [style=none] (2) at (1.5, -0.75) {};
		\node [style=none] (3) at (1.5, 0.75) {};
	\end{pgfonlayer}
	\begin{pgfonlayer}{edgelayer}
		\draw (0) to (1.center);
		\draw [bend left, looseness=1.00] (2.center) to (0);
		\draw [bend right, looseness=1.00] (3.center) to (0);
	\end{pgfonlayer}
\end{tikzpicture}}
\tikzset{x=1em, y=1.5ex}
\qquad  
\tikzset{x=1em, y=2.1ex}
\begin{tikzpicture}
	\begin{pgfonlayer}{nodelayer}
		\node [style=black-faded] (0) at (-0.75, 0) {};
		\node [style=black-faded] (1) at (0.5, 0.75) {};
		\node [style=none] (2) at (0.5, -0.75) {};
		\node [style=none] (3) at (-2, 0) {};
		\node [style=none] (4) at (1.5, -0.75) {};
	\end{pgfonlayer}
	\begin{pgfonlayer}{edgelayer}
		\draw (0) to (3.center);
		\draw [bend left, looseness=1.00] (2.center) to (0);
		\draw [in=180, out=60, looseness=1.00] (0) to (1);
		\draw (4.center) to (2.center);
	\end{pgfonlayer}
\end{tikzpicture}}
\tikzset{x=1em, y=1.5ex}
\stcong\;
\tikzset{x=1em, y=2.1ex}
\begin{tikzpicture}
	\begin{pgfonlayer}{nodelayer}
		\node [style=black-faded] (0) at (0.25, 0) {};
		\node [style=black-faded] (1) at (-1, 0.75) {};
		\node [style=none] (2) at (-1, -0.75) {};
		\node [style=none] (3) at (1.5, 0) {};
		\node [style=none] (4) at (-2, -0.75) {};
	\end{pgfonlayer}
	\begin{pgfonlayer}{edgelayer}
		\draw (0) to (3.center);
		\draw [bend right, looseness=1.00] (2.center) to (0);
		\draw [in=0, out=120, looseness=1.00] (0) to (1);
		\draw (4.center) to (2.center);
	\end{pgfonlayer}
\end{tikzpicture}}
\tikzset{x=1em, y=1.5ex}
 \stcong 
\tikzset{x=1em, y=2.1ex}
\begin{tikzpicture}
	\begin{pgfonlayer}{nodelayer}
		\node [style=none] (0) at (1.75, -0) {};
		\node [style=none] (1) at (-1.75, -0) {};
	\end{pgfonlayer}
	\begin{pgfonlayer}{edgelayer}
		\draw (0.center) to (1.center);
	\end{pgfonlayer}
\end{tikzpicture}}
\tikzset{x=1em, y=1.5ex}
\;\stcong
\tikzset{x=1em, y=2.1ex}
\begin{tikzpicture}
	\begin{pgfonlayer}{nodelayer}
		\node [style=none] (0) at (0, 0.75) {};
		\node [style=black-faded] (1) at (1.25, -0) {};
		\node [style=none] (2) at (0, -0.75) {};
		\node [style=none] (3) at (0, 0.75) {};
		\node [style=black-faded] (4) at (-1.25, -0) {};
		\node [style=none] (5) at (0, -0.75) {};
		\node [style=none] (6) at (2.5, -0) {};
		\node [style=none] (7) at (-2.5, -0) {};
	\end{pgfonlayer}
	\begin{pgfonlayer}{edgelayer}
		\draw [bend left, looseness=1.00] (0.center) to (1);
		\draw [in=0, out=-120, looseness=1.00] (1) to (2.center);
		\draw [bend right, looseness=1.00] (3.center) to (4);
		\draw [in=180, out=-60, looseness=1.00] (4) to (5.center);
		\draw (7.center) to (4);
		\draw (1) to (6.center);
	\end{pgfonlayer}
\end{tikzpicture}}
\tikzset{x=1em, y=1.5ex}
 \\

\tikzset{x=1em, y=2.1ex}
\begin{tikzpicture}
	\begin{pgfonlayer}{nodelayer}
		\node [style=black-faded] (0) at (0.75, -0) {};
		\node [style=none] (1) at (2, -0) {};
		\node [style=none] (2) at (-0.5, -0.75) {};
		\node [style=black-faded] (3) at (-0.5, 0.75) {};
		\node [style=none] (4) at (-1.75, 1.5) {};
		\node [style=none] (5) at (-1.75, -0) {};
		\node [style=none] (6) at (-1.75, -0.75) {};
	\end{pgfonlayer}
	\begin{pgfonlayer}{edgelayer}
		\draw (0) to (1.center);
		\draw [bend right, looseness=1.00] (2.center) to (0);
		\draw [bend left, looseness=1.00] (4.center) to (3);
		\draw [bend right, looseness=1.00] (5.center) to (3);
		\draw [in=0, out=120, looseness=1.00] (0) to (3);
		\draw (6.center) to (2.center);
	\end{pgfonlayer}
\end{tikzpicture}}
\tikzset{x=1em, y=1.5ex}
\;\stcong\: 
\tikzset{x=1em, y=2.1ex}
\begin{tikzpicture}
	\begin{pgfonlayer}{nodelayer}
		\node [style=none] (0) at (-0.25, 1) {};
		\node [style=none] (1) at (-1.5, -1.25) {};
		\node [style=black-faded] (2) at (-0.25, -0.5) {};
		\node [style=none] (3) at (-1.5, 1) {};
		\node [style=black-faded] (4) at (1, 0.25) {};
		\node [style=none] (5) at (2.25, 0.25) {};
		\node [style=none] (6) at (-1.5, 0.25) {};
	\end{pgfonlayer}
	\begin{pgfonlayer}{edgelayer}
		\draw (4) to (5.center);
		\draw [bend left, looseness=1.00] (0.center) to (4);
		\draw [bend right, looseness=1.00] (1.center) to (2);
		\draw [bend left, looseness=1.00] (6.center) to (2);
		\draw [in=0, out=-120, looseness=1.00] (4) to (2);
		\draw (3.center) to (0.center);
	\end{pgfonlayer}
\end{tikzpicture}}
\tikzset{x=1em, y=1.5ex}
\qquad  
\tikzset{x=1em, y=2.1ex}
\begin{tikzpicture}
	\begin{pgfonlayer}{nodelayer}
		\node [style=black-faded] (0) at (1.5, -0) {};
		\node [style=none] (1) at (0.75, -0.75) {};
		\node [style=none] (2) at (0.75, 0.75) {};
		\node [style=none] (3) at (2.75, -0) {};
		\node [style=none] (4) at (-1.25, -0.75) {};
		\node [style=none] (5) at (-1.25, 0.75) {};
	\end{pgfonlayer}
	\begin{pgfonlayer}{edgelayer}
		\draw (0) to (3.center);
		\draw [in=105, out=0, looseness=1.00] (2.center) to (0);
		\draw [in=0, out=-105, looseness=1.00] (0) to (1.center);
		\draw [in=180, out=0, looseness=1.00] (4.center) to (2.center);
		\draw [in=0, out=180, looseness=0.75] (1.center) to (5.center);
	\end{pgfonlayer}
\end{tikzpicture}}
\tikzset{x=1em, y=1.5ex}
\stcong\; 
\tikzset{x=1em, y=2.1ex}
\begin{tikzpicture}
	\begin{pgfonlayer}{nodelayer}
		\node [style=black-faded] (0) at (0.25, -0) {};
		\node [style=none] (1) at (1.5, -0) {};
		\node [style=none] (2) at (-1.25, -0.75) {};
		\node [style=none] (3) at (-1.25, 0.75) {};
	\end{pgfonlayer}
	\begin{pgfonlayer}{edgelayer}
		\draw (0) to (1.center);
		\draw [bend right, looseness=1.00] (2.center) to (0);
		\draw [bend left, looseness=1.00] (3.center) to (0);
	\end{pgfonlayer}
\end{tikzpicture}}
\tikzset{x=1em, y=1.5ex}
\qquad 
\tikzset{x=1em, y=2.1ex}
\begin{tikzpicture}
	\begin{pgfonlayer}{nodelayer}
		\node [style=none] (0) at (0, -1.5) {};
		\node [style=none] (1) at (-2.25, -0.75) {};
		\node [style=none] (2) at (2.25, -1.5) {};
		\node [style=none] (3) at (2.25, 0.75) {};
		\node [style=none] (4) at (0, 1.5) {};
		\node [style=none] (5) at (-2.25, 1.5) {};
		\node [style=none] (6) at (0, -0) {};
		\node [style=black-faded] (7) at (1.25, 0.75) {};
		\node [style=none] (8) at (0, -0) {};
		\node [style=black-faded] (9) at (-1.25, -0.75) {};
	\end{pgfonlayer}
	\begin{pgfonlayer}{edgelayer}
		\draw [bend left, looseness=1.00] (4.center) to (7);
		\draw [in=0, out=-120, looseness=1.00] (7) to (8.center);
		\draw [bend right, looseness=1.00] (6.center) to (9);
		\draw [in=180, out=-60, looseness=1.00] (9) to (0.center);
		\draw (1.center) to (9);
		\draw (0.center) to (2.center);
		\draw (7) to (3.center);
		\draw (4.center) to (5.center);
	\end{pgfonlayer}
\end{tikzpicture}}
\tikzset{x=1em, y=1.5ex}
\;\stcong\; 
\tikzset{x=1em, y=2.1ex}
\begin{tikzpicture}
	\begin{pgfonlayer}{nodelayer}
		\node [style=none] (0) at (1.75, 0.75) {};
		\node [style=black-faded] (1) at (0.5, -0) {};
		\node [style=none] (2) at (1.75, -0.75) {};
		\node [style=none] (3) at (-2, 0.75) {};
		\node [style=black-faded] (4) at (-0.75, -0) {};
		\node [style=none] (5) at (-2, -0.75) {};
	\end{pgfonlayer}
	\begin{pgfonlayer}{edgelayer}
		\draw [bend right, looseness=1.00] (0.center) to (1);
		\draw [in=180, out=-60, looseness=1.00] (1) to (2.center);
		\draw [bend left, looseness=1.00] (3.center) to (4);
		\draw [in=0, out=-120, looseness=1.00] (4) to (5.center);
		\draw (1) to (4);
	\end{pgfonlayer}
\end{tikzpicture}}
\tikzset{x=1em, y=1.5ex}
\;\stcong\; 
\tikzset{x=1em, y=2.1ex}
\begin{tikzpicture}
	\begin{pgfonlayer}{nodelayer}
		\node [style=none] (0) at (0, -1.5) {};
		\node [style=none] (1) at (2.25, -0.75) {};
		\node [style=none] (2) at (-2.25, -1.5) {};
		\node [style=none] (3) at (-2.25, 0.75) {};
		\node [style=none] (4) at (0, 1.5) {};
		\node [style=none] (5) at (2.25, 1.5) {};
		\node [style=none] (6) at (0, -0) {};
		\node [style=black-faded] (7) at (-1.25, 0.75) {};
		\node [style=none] (8) at (0, -0) {};
		\node [style=black-faded] (9) at (1.25, -0.75) {};
	\end{pgfonlayer}
	\begin{pgfonlayer}{edgelayer}
		\draw [bend right, looseness=1.00] (4.center) to (7);
		\draw [in=180, out=-60, looseness=1.00] (7) to (8.center);
		\draw [bend left, looseness=1.00] (6.center) to (9);
		\draw [in=0, out=-120, looseness=1.00] (9) to (0.center);
		\draw (1.center) to (9);
		\draw (0.center) to (2.center);
		\draw (7) to (3.center);
		\draw (4.center) to (5.center);
	\end{pgfonlayer}
\end{tikzpicture}}
\tikzset{x=1em, y=1.5ex}
  
\end{align*}
\caption{Axioms of special Frobenius bimonoids}\label{fig:Frob}
\end{figure}

\section{Background: Bialgebras and GSOS Specifications}\label{sec:background}
We recall the fundamentals of (co)algebras and monads in Appendix~\ref{app:preliminaries}.
We refer to Appendix~\ref{app:exampleautomata} for a simple example showcasing the notions recalled below.

\smallskip
\noindent \textbf{Distributive laws and bialgebras.} 
%
A \emph{distributive law} of a monad $(\mathcal{T}, \eta, \mu)$ over an endofunctor $\funF$ is a natural transformation $\lambda\colon \mathcal{T}\funF \To \funF\mathcal{T}$ s.t.\ $\lambda \circ \eta_\funF = \funF \eta$ and $\lambda \circ \mu_\funF = \funF\mu \circ \lambda_\mathcal{T} \circ \mathcal{T}\lambda$. A \emph{$\lambda$-bialgebra} is a triple $(X,\alpha,\beta)$ s.t.\  $(X,\alpha)$ is an Eilenberg-Moore algebra for $\mathcal{T}$, $(X,\beta)$ is a $\funF$-coalgebra and $\funF\alpha\circ \lambda_X \circ \mathcal{T}\beta = \beta \circ \alpha$. Bialgebra morphisms are both $\mathcal{T}$-algebra and $\funF$-coalgebra morphisms. 

Given a coalgebra $\beta\colon X \to \funF\mathcal{T}X$ for a monad $(\mathcal{T}, \eta, \mu)$ and a functor $\funF$, if there exists a distributive law $\lambda\colon \mathcal{T}\funF \To \funF\mathcal{T}$, one can form a coalgebra $\beta^{\sharp}\colon \mathcal{T}X \to \funF\mathcal{T}X $ defined as $\mathcal{T}X \tr{\mathcal{T}\beta}\mathcal{T}\funF\mathcal{T}X \tr{\lambda_{\mathcal{T}X}} \funF \mathcal{T}\mathcal{T}X \tr{\funF\mu}\funF\mathcal{T}X$. Most importantly, $(\mathcal{T}X,\mu, \beta^{\sharp})$ is a $\lambda$-bialgebra. 

\noindent \textbf{Free monads.} We recall the construction of the monad $\free{\funF} \colon \catC \to \catC$ \emph{freely generated} by a functor $\funF\colon \catC \to \catC$. Assume that $\catC$ has coproducts and that, for all objects $X$ of $\catC$, there exists an initial $X+\funF$-algebra that we denote as
$X + \funF(\free{\funF} X) \tr{[\eta_X,\kappa_X]} \free{\funF}X$. It is easy to check that the assignment $X \mapsto \free{\funF}X$ induces a functor $\free{\funF} \colon \catC \to \catC$.
The map $\eta_X\colon X \to \free{\funF}X$ gives rise to the unit of the monad; the multiplication $\mu_X \colon \free{\funF}\free{\funF} X \to \free{\funF} X$ is the unique algebra morphism from the initial $ \free{\funF}X+\funF$-algebra to the algebra $\free{\funF}X+\funF(\free{\funF}X) \tr{[id,\kappa_X]} \free{\funF}X$. 

\smallskip
\noindent \textbf{GSOS specifications.} An \emph{abstract GSOS specification} is a natural transformation $\lambda\colon \mathcal{S} \funF \To \funF\mathcal{S}^\dagger$, where $\funF$ is a functor representing the coalgebraic behaviour, $\mathcal{S}$ is a functor representing the syntax. It is important to recall the following fact. 
%

\begin{proposition}[\!\!\cite{LPW-distrlawsmoncom}] \label{prop:GSOSToDistrLaw} Any GSOS spec.\ $\lambda\colon \mathcal{S} \funF \To \funF\mathcal{S}^\dagger$ yields a distrib.\ law $\lambda^{\dagger}\colon \mathcal{S}^{\dagger} \funF \To \funF\mathcal{S}^\dagger$.
\end{proposition}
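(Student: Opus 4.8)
The plan is to construct $\lambda^{\dagger}$ by the universal property of the free monad $\mathcal{S}^{\dagger}$ recalled above, and then verify the two distributive-law axioms by repeatedly invoking uniqueness of algebra homomorphisms out of initial algebras. Concretely, for each object $X$ the object $\mathcal{S}^{\dagger}\funF X$ is the initial $\funF X + \mathcal{S}$-algebra, with structure $[\eta_{\funF X},\kappa_{\funF X}]$. I equip $\funF\mathcal{S}^{\dagger}X$ with an $\funF X + \mathcal{S}$-algebra structure whose left leg is $\funF\eta_{X}\colon \funF X \to \funF\mathcal{S}^{\dagger}X$ and whose right leg is $\funF\mu_{X}\circ\lambda_{\mathcal{S}^{\dagger}X}\colon \mathcal{S}\funF\mathcal{S}^{\dagger}X \to \funF\mathcal{S}^{\dagger}X$, and define $\lambda^{\dagger}_{X}$ to be the unique $\funF X + \mathcal{S}$-algebra homomorphism into it. Equivalently, $\lambda^{\dagger}_{X}$ is characterised by the two equations $\lambda^{\dagger}_{X}\circ\eta_{\funF X} = \funF\eta_{X}$ and $\lambda^{\dagger}_{X}\circ\kappa_{\funF X} = \funF\mu_{X}\circ\lambda_{\mathcal{S}^{\dagger}X}\circ\mathcal{S}\lambda^{\dagger}_{X}$. (Here we use that $\catC$ admits the free monad $\mathcal{S}^{\dagger}$, which is the standing assumption.)

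Next I would check naturality of $\lambda^{\dagger}$: for $f\colon X \to Y$ both $\funF\mathcal{S}^{\dagger}f\circ\lambda^{\dagger}_{X}$ and $\lambda^{\dagger}_{Y}\circ\mathcal{S}^{\dagger}\funF f$ are $\funF X + \mathcal{S}$-algebra homomorphisms from $\mathcal{S}^{\dagger}\funF X$ to $\funF\mathcal{S}^{\dagger}Y$ (viewed as an $\funF X + \mathcal{S}$-algebra by reindexing the structure above along $\funF f$), as one verifies using naturality of $\eta$, $\mu$, $\kappa$, naturality of $\lambda$ at morphisms of the form $\mathcal{S}^{\dagger}f$, and the two defining equations; initiality then forces them to agree. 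The first distributive-law axiom $\lambda^{\dagger}\circ\eta_{\funF} = \funF\eta$ is literally the first defining equation, so there is nothing further to do there.

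The substance is the second axiom, $\lambda^{\dagger}\circ\mu_{\funF} = \funF\mu\circ\lambda^{\dagger}_{\mathcal{S}^{\dagger}}\circ\mathcal{S}^{\dagger}\lambda^{\dagger}$. The key observation is that, by the construction of the multiplication of a free monad recalled in the excerpt, $\mathcal{S}^{\dagger}\mathcal{S}^{\dagger}\funF X$ is the initial $\mathcal{S}^{\dagger}\funF X + \mathcal{S}$-algebra with structure $[\eta_{\mathcal{S}^{\dagger}\funF X},\kappa_{\mathcal{S}^{\dagger}\funF X}]$. I put an $\mathcal{S}^{\dagger}\funF X + \mathcal{S}$-algebra structure on $\funF\mathcal{S}^{\dagger}X$ using $\lambda^{\dagger}_{X}$ on the left and $\funF\mu_{X}\circ\lambda_{\mathcal{S}^{\dagger}X}$ on the right, and then show that \emph{both} sides of the axiom are homomorphisms from the former to the latter; uniqueness closes the argument. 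For the left-hand side $\lambda^{\dagger}_{X}\circ\mu_{\funF X}$ this uses the monad law $\mu_{\funF X}\circ\eta_{\mathcal{S}^{\dagger}\funF X} = \id$ and the fact that $\mu_{\funF X}$ is itself an $\mathcal{S}^{\dagger}\funF X + \mathcal{S}$-algebra homomorphism into $[\id,\kappa_{\funF X}]$ (so $\mu_{\funF X}\circ\kappa_{\mathcal{S}^{\dagger}\funF X} = \kappa_{\funF X}\circ\mathcal{S}\mu_{\funF X}$), followed by the second defining equation of $\lambda^{\dagger}$. For the right-hand side one peels off $\mathcal{S}^{\dagger}\lambda^{\dagger}_{X}$ using naturality of $\eta$ and $\kappa$ with respect to $\lambda^{\dagger}_{X}$, then applies the already-established unit axiom and the defining equations of $\lambda^{\dagger}$ at the object $\mathcal{S}^{\dagger}X$, together with the monad laws for $\mu$.

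I expect this last verification to be the main obstacle: it is a fairly long diagram chase in which one must be disciplined about which naturality squares and which instance of the free-monad universal property is invoked at each step — in particular the interplay between $\kappa$ at $\funF X$, at $\mathcal{S}^{\dagger}\funF X$, and at $\mathcal{S}^{\dagger}X$. Everything else (the construction, naturality, the unit axiom) is essentially formal. It is worth noting that no hypothesis on $\lambda$ beyond naturality is used anywhere, which is precisely why an arbitrary abstract GSOS specification suffices.
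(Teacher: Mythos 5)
Your construction is correct and is exactly the standard argument: define $\lambda^{\dagger}_X$ by initiality of $\mathcal{S}^{\dagger}\funF X$ as the free $\mathcal{S}$-algebra on $\funF X$, then verify the multiplication axiom by exhibiting both sides as $\mathcal{S}^{\dagger}\funF X+\mathcal{S}$-algebra homomorphisms out of $\mathcal{S}^{\dagger}\mathcal{S}^{\dagger}\funF X$. The paper itself gives no proof of Proposition~\ref{prop:GSOSToDistrLaw} --- it is imported from \cite{LPW-distrlawsmoncom} --- and your reconstruction matches the proof found there, so there is nothing to object to.
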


%

\noindent\textbf{Coproduct of GSOS specifications.} Suppose we have two functors $\mathcal{S}_1,\mathcal{S}_2 \colon \catC \to \catC$ capturing two syntaxes, a functor $\funF\colon \catC \to \catC$ for the coalgebraic behaviour, and two GSOS specifications $\lambda_1\colon \mathcal{S}_1 \funF \To \funF\mathcal{S}_1^\dagger$ and $\lambda_2\colon \mathcal{S}_2 \funF \To \funF\mathcal{S}_2^\dagger$. Then we can construct a GSOS specification 
$\lambda_1\cdot \lambda_2 \colon (\mathcal{S}_1+\mathcal{S}_2) \funF \To  \funF  (\mathcal{S}_1+\mathcal{S}_2)^{\dagger}$. The details are in Appendix~\S~\ref{app:preliminaries}.

\smallskip

\noindent \textbf{Quotients of monads and distributive laws.} Given the correspondence between finitary monads and algebraic theories~\cite{hyland2007category}, it natural to consider \emph{quotients} of monads by additional equations. 
Following~\cite{DBLP:conf/concur/BuscemiM02,BHKR13,RotPhDThesis}, for a monad $\funT$ on a category $\catC$, \emph{$\funT$-equations} can be defined as a tuple $\Eq = (\ArEq, l , r)$ consisting of a functor $\ArEq \colon \catC \to \catC$ and natural transformations $l, r \colon \ArEq \To \funT$. The intuition is that $\ArEq$ acts as the variables of each equation, whose left- and right-hand sides are $l$ and $r$, respectively. Assuming mild conditions that generalise the properties of $\SET$ (see \cite[Ass. 7.1.2]{RotPhDThesis}), one constructs the \emph{quotient} of $\funT$ by $\funT$-equations.  The conditions hold in our setting: categories of presheaves over a discrete index category.

\begin{proposition}[\emph{cf.} \cite{RotPhDThesis}] \label{prop:monadquotient}
If $\catC = \SET^{\catD}$ for discrete $\catD$, $\funT$-equations $\Eq$ yield a monad $\quot{\funT}{\Eq} \colon \catC \to \catC$ 
with algebras precisely $\funT$-algebras $\funT A \tr{\alpha} A$ that satisfy $\Eq$, in the sense that $\alpha \circ l_A = \alpha \circ r_A$. Moreover, there exists a monad morphism $q^\Eq \colon \funT \to \quot{\funT}{\Eq}$ with epi components.
\end{proposition}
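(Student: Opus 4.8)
The plan is to derive this statement by instantiating the general theory of equational quotients of monads from~\cite{RotPhDThesis} to the category $\catC = \SET^{\catD}$: first I would check that this $\catC$ has the structural properties the construction relies on (the ``mild conditions'' generalising $\SET$), and then recall the construction in enough detail to read off the two claimed properties. The relevant properties of $\catC = \SET^{\catD}$, for $\catD$ discrete, are that it is a small power of $\SET$ (hence a presheaf topos): it is complete and cocomplete with all (co)limits computed pointwise; it has an (epi, mono) factorisation system; and its epimorphisms are precisely the pointwise surjections, which are stable under pullback, closed under composition, and preserved by every endofunctor of $\catC$ (being split). These are exactly the features the construction below uses, so the remaining work is to carry it out and verify the statement.

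\emph{Construction of the monad.} Working pointwise in $\SET$, for each object $A$ let ${\equiv_A} \subseteq \funT A \times \funT A$ be the least \emph{monad congruence} on the free algebra $(\funT A, \mu_A)$ containing all pairs $(l_A(v), r_A(v))$, $v \in \ArEq A$ --- where by a monad congruence I mean an equivalence relation on $\funT A$ that is also a $\funT$-subalgebra of $\funT A \times \funT A$ (equivalently: closed under substitution of $\equiv$-related arguments into $\funT$-contexts). Such a least one exists as the intersection of all monad congruences containing these pairs, a nonempty family since the total relation qualifies, and both ``equivalence relation'' and ``$\funT$-subalgebra'' are preserved by intersection. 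Put $\quot{\funT}{\Eq}A \df \funT A / {\equiv_A}$ with quotient $q^\Eq_A \colon \funT A \twoheadrightarrow \quot{\funT}{\Eq}A$, a pointwise surjection hence an epi in $\catC$. Any $f \colon A \to B$ makes $\funT f$ a $\funT$-algebra homomorphism $(\funT A, \mu_A) \to (\funT B, \mu_B)$ (naturality of $\mu$) sending generating pairs to generating pairs (naturality of $l, r$), hence carrying $\equiv_A$ into $\equiv_B$; so $\funT f$ descends along the quotients, making $\quot{\funT}{\Eq}$ an endofunctor and $q^\Eq \colon \funT \To \quot{\funT}{\Eq}$ a natural transformation with epi components. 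Set $\eta' \df q^\Eq \circ \eta$. Since $q^\Eq \ast q^\Eq$ again has pointwise surjective components, there is at most one $\mu'$ with $\mu' \circ (q^\Eq \ast q^\Eq) = q^\Eq \circ \mu$, and its existence is exactly the statement that $q^\Eq \circ \mu$ coequalises the kernel pair of $q^\Eq \ast q^\Eq$ --- this is where the fact that $\equiv$ is a \emph{monad} congruence is used. The monad laws for $(\quot{\funT}{\Eq}, \eta', \mu')$ and the fact that $q^\Eq$ is a monad morphism then follow from the monad laws of $\funT$ by cancelling the epimorphisms $q^\Eq$, $q^\Eq \ast q^\Eq$, and $q^\Eq \ast q^\Eq \ast q^\Eq$.

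\emph{Eilenberg--Moore category.} The monad morphism $q^\Eq$ induces a functor $\EM{\quot{\funT}{\Eq}} \to \EM{\funT}$ sending $\alpha' \colon \quot{\funT}{\Eq}A \to A$ to $\alpha' \circ q^\Eq_A \colon \funT A \to A$; it is the identity on carriers and on underlying maps, hence faithful and injective on objects, and full by a one-line chase cancelling $q^\Eq$. I would then show its image is exactly the $\funT$-algebras satisfying $\Eq$. If $\alpha = \alpha' \circ q^\Eq_A$ then $\alpha \circ l_A = \alpha' \circ q^\Eq_A \circ l_A = \alpha' \circ q^\Eq_A \circ r_A = \alpha \circ r_A$, since $q^\Eq_A$ identifies $l_A(v)$ and $r_A(v)$. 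Conversely, if $\alpha \colon \funT A \to A$ is a $\funT$-algebra with $\alpha \circ l_A = \alpha \circ r_A$, the kernel pair of $\alpha$ is a monad congruence on $(\funT A, \mu_A)$ containing all generating pairs, hence contains $\equiv_A$, so $\alpha$ factors uniquely as $\alpha = \alpha' \circ q^\Eq_A$; cancelling the epi $q^\Eq$ in the $\funT$-algebra laws for $\alpha$ shows $\alpha'$ is a $\quot{\funT}{\Eq}$-algebra, and the two passages are mutually inverse. Hence $\EM{\quot{\funT}{\Eq}}$ is isomorphic to the full subcategory of $\EM{\funT}$ on the algebras satisfying $\Eq$, which is the assertion.

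I expect the main obstacle to be the well-definedness of $\mu'$ --- equivalently, showing that the generated congruence $\equiv$ is closed under substitution of $\equiv$-related subterms into arbitrary $\funT$-contexts on both sides. In practice this is most cleanly handled not via the ``intersection of all monad congruences'' description but via an explicit bottom-up generation of $\equiv_A$ from the pairs $(l_A(v), r_A(v))$ (closing under reflexivity, symmetry, transitivity and $\mu$-substitution) together with an induction on that generation; every remaining step is a routine diagram chase finished by cancelling epimorphisms.
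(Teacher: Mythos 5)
The paper offers no proof of this proposition---it is imported wholesale from Rot's thesis (the text only checks that the ``mild conditions'' of Assumption 7.1.2 there hold in $\SET^{\catD}$)---so the comparison is with the standard construction being cited. Your reconstruction has a genuine gap: the congruence you generate is too small. You take $\equiv_A$ to be the least equivalence relation that is a $\funT$-subalgebra of $\funT A \times \funT A$ and contains the pairs $(l_A(v), r_A(v))$ for $v \in \ArEq A$, i.e.\ only the \emph{variable} instances of the equations. Closure under the subalgebra operations gives closure under placing $\equiv$-related elements into $\funT$-contexts, but it does \emph{not} give the substitution instances of the equation scheme itself, i.e.\ the pairs $\bigl(\mu_A(l_{\funT A}(w)),\, \mu_A(r_{\funT A}(w))\bigr)$ for $w \in \ArEq \funT A$. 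Concretely, take $\catD$ trivial, $\funT$ the free monad on one binary operation, and $\Eq$ associativity with $\ArEq A = A \times A \times A$, $l_A(a,b,c) = a(bc)$, $r_A(a,b,c) = (ab)c$. Over a four-element set, the terms $(ab)(cd)$ and $((ab)c)d$ are \emph{not} identified by your $\equiv_A$: every generating pair has single variables in all three slots, no subterm of $(ab)(cd)$ matches such an instance, and context-closure plus transitivity never reaches one whose outer slot is the compound term $ab$. Consequently $\quot{\funT}{\Eq}A$ is not the free associative algebra on $A$, the quotient algebra fails to satisfy $\Eq$ at its own carrier, and---by your own criterion---$q^{\Eq}\circ\mu$ does not coequalise the kernel pair of $q^{\Eq}\ast q^{\Eq}$ (take the two elements of $\funT\funT A$ with leaves $ab$, $c$, $d$ arranged as $x(yz)$ and $(xy)z$), so $\mu'$ does not exist. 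Your closing paragraph conflates two distinct closure properties: substituting $\equiv$-related subterms into contexts (which you have) and substituting arbitrary terms for the metavariables of the equations (which you lack).

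The repair is standard and leaves the rest of your argument essentially intact: generate $\equiv_A$ from the pairs indexed by $\ArEq\funT A$ rather than $\ArEq A$ (the variable instances are recovered via $\ArEq\eta_A$ together with $\mu_A \circ \funT\eta_A = \id$); equivalently, define $\quot{\funT}{\Eq}A$ as the reflection of the free algebra $(\funT A, \mu_A)$ into the full subcategory of $\funT$-algebras satisfying $\Eq$, or as the coequaliser of the two natural transformations $\mu \circ l\funT,\ \mu \circ r\funT \colon \ArEq\funT \To \funT$ computed in $\EM{\funT}$. One then needs $\ArEq$ and $\funT$ to preserve epis, so that equations verified on the image of $\ArEq q^{\Eq}_A$ hold at all of $\ArEq(\quot{\funT}{\Eq}A)$---this is where your (correct) observation that epis in $\SET^{\catD}$ are split is actually used. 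With that correction, your Eilenberg--Moore analysis and the epi-cancellation arguments go through as written.
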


One may also quotient distributive laws, provided these are compatible with the new equations. Fix an endofunctor $\funF$ and a monad $\funT$ on $\SET^{\catD}$, together with $\funT$-equations $\Eq = (\ArEq, l, r)$. We say that a distributive law $\lambda \colon \funT \funF \Rightarrow \funF \funT$ \emph{preserves equations $\Eq$} if, for all $A \in  \catC$, the following diagram commutes: 
$\xymatrix@L=5pt{\ArEq \funF A \ar@<-.4ex>[r]^-{l_{\funF A}} \ar@<.4ex>[r]_-{r_{\funF A}} & \funT \funF A \ar[r]^-{\lambda_A} & \funF \funT A \ar[r]^-{\funF q^\Eq_{A}} & \funF \quot{\funT}{\Eq} A}$.

\begin{proposition}[\emph{cf.} \cite{RotPhDThesis}]  \label{prop:distrlawquotient} If $\lambda \colon \funT \funF \to \funF \funT$ preserves equations $\Eq$ then there exists a (unique) distributive law $\lambda_{/\Eq} \colon \quot{\funT}{\Eq} \funF \Rightarrow \funF \quot{\funT}{\Eq}$ such that $\lambda_{/\Eq} \circ q^\Eq \funF = \funF q^\Eq \circ \lambda$.
\end{proposition}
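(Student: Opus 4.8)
The plan is to use the universal property of the quotient monad $\quot{\funT}{\Eq}$ furnished by Proposition~\ref{prop:monadquotient} — namely that its algebras are exactly the $\funT$-algebras satisfying $\Eq$, and that $q^\Eq\colon \funT \to \quot{\funT}{\Eq}$ has epi components — to transport $\lambda$ along $q^\Eq$. Concretely, recall that a distributive law $\funT\funF \Rightarrow \funF\funT$ is the same thing as a lifting of $\funF$ to the category of Eilenberg--Moore algebras $\EM{\funT}$. So I would first reinterpret $\lambda$ as an endofunctor $\widehat{\funF}$ on $\EM{\funT}$ sending a $\funT$-algebra $(A,\alpha)$ to the $\funT$-algebra on $\funF A$ with structure $\funF\alpha \circ \lambda_A$. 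The goal is then to show this lifting restricts to the full subcategory $\EM{\quot{\funT}{\Eq}} \hookrightarrow \EM{\funT}$ of algebras satisfying $\Eq$; a lifting of $\funF$ to $\EM{\quot{\funT}{\Eq}}$ is precisely the sought distributive law $\lambda_{/\Eq}$, and since $\EM{\quot{\funT}{\Eq}}$ is a full subcategory the restricted lifting is automatically unique with the stated compatibility $\lambda_{/\Eq}\circ q^\Eq\funF = \funF q^\Eq \circ \lambda$.

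The heart of the argument is therefore: if $(A,\alpha)$ satisfies $\Eq$, i.e.\ $\alpha \circ l_A = \alpha \circ r_A$, then $(\funF A, \funF\alpha \circ \lambda_A)$ also satisfies $\Eq$, i.e.\ $(\funF\alpha\circ\lambda_A)\circ l_{\funF A} = (\funF\alpha\circ\lambda_A)\circ r_{\funF A}$. This is exactly where the hypothesis that $\lambda$ \emph{preserves equations} $\Eq$ is used. Unwinding that hypothesis: the diagram
$\xymatrix@L=5pt{\ArEq \funF A \ar@<-.4ex>[r]^-{l_{\funF A}} \ar@<.4ex>[r]_-{r_{\funF A}} & \funT \funF A \ar[r]^-{\lambda_A} & \funF \funT A \ar[r]^-{\funF q^\Eq_{A}} & \funF \quot{\funT}{\Eq} A}$
commutes, so $\funF q^\Eq_A \circ \lambda_A \circ l_{\funF A} = \funF q^\Eq_A \circ \lambda_A \circ r_{\funF A}$. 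Now compose with the structure map of the $\quot{\funT}{\Eq}$-algebra that $(A,\alpha)$ determines via Proposition~\ref{prop:monadquotient}: since $(A,\alpha)$ satisfies $\Eq$, there is a (unique) $\quot{\funT}{\Eq}$-algebra structure $\overline{\alpha}\colon \quot{\funT}{\Eq}A \to A$ with $\overline{\alpha}\circ q^\Eq_A = \alpha$. Applying $\funF$ gives $\funF\overline{\alpha}\circ \funF q^\Eq_A = \funF\alpha$, and precomposing the previous equality with $\funF\overline{\alpha}$ yields $\funF\alpha\circ\lambda_A\circ l_{\funF A} = \funF\alpha\circ\lambda_A\circ r_{\funF A}$, as required.

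Having established that $\widehat{\funF}$ restricts to $\EM{\quot{\funT}{\Eq}}$, I read off the distributive law: the restricted lifting sends $(A,\overline{\alpha})$ to a $\quot{\funT}{\Eq}$-algebra on $\funF A$, and the corresponding natural transformation $\lambda_{/\Eq}\colon \quot{\funT}{\Eq}\funF \Rightarrow \funF\quot{\funT}{\Eq}$ is obtained pointwise by instantiating at the free algebra $(\quot{\funT}{\Eq}X, \mu^{/\Eq}_X)$. Naturality of $\lambda_{/\Eq}$ and the two distributive-law axioms ($\lambda_{/\Eq}\circ \eta^{/\Eq}\funF = \funF\eta^{/\Eq}$ and the compatibility with $\mu^{/\Eq}$) follow formally from the fact that a lifting of $\funF$ to Eilenberg--Moore algebras is equivalent to a distributive law — alternatively, they can be checked directly by precomposing with the epi $q^\Eq\funF$ and using the compatibility equation together with the corresponding axioms for $\lambda$ and the monad morphism laws for $q^\Eq$. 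Uniqueness is immediate: any $\lambda'$ with $\lambda'\circ q^\Eq\funF = \funF q^\Eq\circ\lambda$ agrees with $\lambda_{/\Eq}$ after precomposition with the epi $q^\Eq\funF$, hence $\lambda' = \lambda_{/\Eq}$.

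I expect the main obstacle to be purely bookkeeping: making sure the universal property of Proposition~\ref{prop:monadquotient} is invoked in the precise form that produces the algebra lifting $\overline{\alpha}$ from $\alpha$, and checking naturality of $\lambda_{/\Eq}$ in $X$ — which reduces, via the epi components of $q^\Eq$, to naturality of $\lambda$. There is no genuine difficulty once the problem is phrased in terms of liftings of $\funF$ to Eilenberg--Moore categories; the content is entirely in the ``preserves equations'' hypothesis doing exactly the work of closing the subcategory $\EM{\quot{\funT}{\Eq}}$ under $\widehat{\funF}$.
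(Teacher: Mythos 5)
The paper does not prove this proposition: it is imported verbatim (``\emph{cf.}'' \cite{RotPhDThesis}) from Rot's thesis, so there is no in-paper argument to compare against. Judged on its own terms, your proof is correct and is essentially the standard one from the cited source: translate $\lambda$ into a lifting $\widehat{\funF}$ of $\funF$ to $\EM{\funT}$, observe that the ``preserves equations'' hypothesis says exactly that $\widehat{\funF}$ maps algebras satisfying $\Eq$ to algebras satisfying $\Eq$ (your computation post-composing the preservation diagram with $\funF\overline{\alpha}$, where $\overline{\alpha}\circ q^{\Eq}_A=\alpha$, is the right one --- note you say ``precomposing'' where you mean composing on the left), and then read the quotient law off the restricted lifting, with uniqueness from the epi components of $q^{\Eq}$. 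The only step you lean on without justification is that the comparison functor identifies $\EM{\quot{\funT}{\Eq}}$ with a \emph{full} subcategory of $\EM{\funT}$; this is where the epi components of $q^{\Eq}$ are needed a second time (a $\funT$-algebra map $f$ between algebras satisfying $\Eq$ satisfies $f\circ\overline{\alpha}=\overline{\beta}\circ\quot{\funT}{\Eq}f$ after cancelling the epi $q^{\Eq}_A$), and it is also what makes the compatibility equation $\lambda_{/\Eq}\circ q^{\Eq}\funF=\funF q^{\Eq}\circ\lambda$ come out of the restriction rather than having to be imposed separately. With that remark added the argument is complete.
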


\section{Diagrammatic Syntax as Monads}\label{sec:hierarchy}

\subsection{The Category of Signatures}
Syntax and semantics of string diagrams will be specified in the category $\SIG \df \Span{\SET}(\N,\N)$, where objects are spans $\N \tl{} \Sigma \tr{} \N$ in $\SET$ and arrows are span morphisms: given objects $\N \tl{s} X \tr{t} \N$ and $\N \tl{s'} \Sigma' \tr{t'} \N$, an arrow is a function $f \colon \Sigma \to \Sigma'$ such that $t' \circ f = t$ and $s' \circ f = s$. We think of an object of $\SIG$ as a \emph{signature}, i.e.\ a set of symbols $\Sigma$ equipped with arity and coarity functions $a, c\colon \Sigma \to \N$. We write $\Sigma(n,m)$ for the set $\{d\in \Sigma \;|\; \langle a,c \rangle (d)=(n,m)\}$ of operations with arity $n$ and coarity $m$. Note that we allow coarities different from $1$: this is because string diagrams express \emph{monoidal} algebraic theories, not merely \emph{cartesian} ones.

Since the objects in $\SIG$ are spans with identical domain and codomain, we will often write $\Sigma$ for the entire span $\N \tl{a} \Sigma \tr{c} \N$. In particular, $\N$ means the identity span $\N \tl{id} \N \tr{id} \N$.

\begin{example}\label{ex:signature}
Recall the language $\Circ{\Rig}$ from \S~\ref{sec:motivating}. Line \eqref{eq:syntax1} of its syntax together with the first two lines of the sorting discipline in Fig.~\ref{sort} define a signature $\Sigma$:  every axiom $\typeJudgment{}{d}{\sort{n}{m}}$ gives the symbol $d$ arity $n$ and coarity $m$. For instance, $\Sigma(1,2)= \{\Bcomult,\, \Wcomult\}$.
\end{example}

For computing (co)limits, it is useful to observe that $\SIG$ is isomorphic to the presheaf category $\SET^{\N \times \N}$, where $\N \times \N$ is the discrete category with objects pairs $(n,m)\in \N \times \N$. 


\subsection{Functors on Signatures}\label{sec:functors}

We turn to (co)algebras of endofunctors  $\funF\colon \SIG \to \SIG$
generated by the following grammar:
 $$\funF \quad ::= \quad Id \ \mid \  \Sigma \ \mid \  \N \ \mid \    \funF\poi \funF \ \mid \  \funF \tns \funF \ \mid  \  \funF+ \funF \ \mid \   \funF\times \funF \ \mid \  \overline{\funG}$$
where $\funG$ ranges over functors $\funG \: \colon \SET \to \SET$ and $\Sigma$ is a span $\N \tl{}\Sigma \tr{} \N$. 
In more detail:
\begin{itemize}
\item $Id\colon \SIG \to \SIG$ is the identity functor. 
\item 
$\Sigma\colon \SIG \to \SIG$ is the constant functor mapping every object to $\N \tl{}\Sigma \tr{} \N$ and every arrow to $id_{\Sigma}$; an important special case is $\N \colon \SIG \to \SIG$ the constant functor to $\N \tl{id} \N \tr{id} \N$.
\item $(\cdot) \poi (\cdot) \colon \SIG^2 \to \SIG$ is \emph{sequential composition} for signatures. 
On objects, 
$\Sigma_1 \poi \Sigma_2$
is  $$\N \tl{s_1 \circ \pi_1}\{(d_1,d_2)\in \Sigma_1\times \Sigma_2 \;|\; t_1(d_1)=s_2(d_2)\} \tr{t_2 \circ \pi_2} \N \text{.}$$
Since the above is a $\SET$-pullback, the action on arrows is inducted by the universal property. 
Note that, up to iso, $(\cdot) \poi (\cdot) \colon \SIG^2 \to \SIG$ is associative with unit $\N \colon \SIG \to \SIG$. 
\item $(\cdot) \tns (\cdot) \colon \SIG^2 \to \SIG$ is \emph{parallel composition} for signatures, with
$\Sigma_1 \tns \Sigma_2$ given by:
$$ \N \tl{+\circ (s_1\times s_2)} \Sigma_1\times \Sigma_2 \tr{+ \circ(t_1 \times t_2)} \N $$
where $+\colon \N \times \N \to \N$ is usual $\N$-addition. Again $(\cdot) \tns (\cdot) \colon \SIG^2 \to \SIG$ associates up to iso.

\item For the remaining functors, we use the fact that $\SIG\cong\SET^{\N \times \N}$, which guarantees (co)completeness, with limits and colimits constructed pointwise in $\SET$. Thus, 
for spans 
$\Sigma_1$ and $\Sigma_2$, 
their coproduct is $\N \tl{[s_1,s_2]} \Sigma_1+\Sigma_2 \tr{[t_1,t_2]} \N$ 
and the carrier of the product is  $\{(d_1,d_2) \;|\; s_1(d_1)=s_2(d_2) \text{ and } t_1(d_1)=t_2(d_2) \}$, with the two obvious morphisms to $\N$. 
\item The isomorphism $\SIG \cong \SET^{\N \times \N}$ also yields the extension of an arbitrary endofunctor $\funG\colon \SET \to \SET$ to a functor $\bar{\funG} \colon\SIG \to \SIG$ defined by post-composition with $\funG$, that is $\bar{\funG}(\Sigma)=\funG\circ \Sigma$ for all $\Sigma\colon \N \times \N \to \SET$. In particular, we shall often use the functor $\overline{\Pow}$ obtained by post-composition with the $\kappa$-bounded powerset functor $\Pow\colon \SET \to \SET$.\footnote{Boundedness is needed to ensure the existence of a final coalgebra, see \S~\ref{sec:comp-prop}. In our leading example $\Circ{\Rig}$, $\kappa$ can be taken to be the cardinality of the semiring $\Rig$.}
\end{itemize}

Next we use these endofunctors to construct monads that capture the two-dimensional algebraic structure of string diagrams. In \S~\ref{sec:prop-monad} we construct the monad encoding the symmetric monoidal structure of props and in \S~\ref{sec:cw-monad} we construct the monad for the Frobenius structure of Carboni-Walters props. Later, in \S~\ref{sec:SOS}, we shall use these monads to define compositional bialgebraic semantics for string diagrams of each of these categorical structures.



\subsection{The Prop Monad}
\label{sec:prop-monad}

Here we
define a monad on $\SIG$ with algebras precisely props: 
symmetric strict monoidal categories with objects the natural numbers, where the monoidal product on objects is addition. Together with identity-on-objects symmetric monoidal functors they form a category $\PROP$. 
The first step is to encapsulate the operations of props as a $\SIG$-endofunctor. 
\begin{equation}\label{eq:signatureProps}
\SCOMPSYM  \df  (Id\poi Id) + \IdFunc + (Id\tns Id)  +\epsilon +  \sigma \colon \SIG \to \SIG.
\end{equation}
In the type of $\SCOMPSYM$, $Id\poi Id \colon \SIG \to \SIG$ is sequential composition and $\IdFunc$ the identity arrow on object $1$, i.e.\ the constant functor to $\N \tl{h} \{ id_1\} \tr{h} \N$, with $h \colon \id_1 \mapsto 1$. Similarly, $Id\tns Id$ is the monoidal product with unit $\epsilon$, i.e.\ the constant functor to $\N \tl{q} \{ 0 \} \tr{q} \N$, with $q \colon 0 \mapsto 0$.  Finally, $\sigma$ is the basic symmetry: the constant functor to $\N \tl{f} \{ \sigma_{1,1}\} \tr{f} \N$, with $f \colon  \sigma_{1,1} \mapsto 2$. 

The free monad $\SCOMPSYM^{\dagger}$ on $\SCOMPSYM$ is the functor mapping a span $\Sigma$ to the span of $\Sigma$-terms obtained by sequential and parallel composition, together with symmetries and identities ---with the identity $id_n$ defined by parallel composition of $n$ copies of $id_1$. 

Algebras for this monad are spans $\Sigma$ together with span morphisms $identity\colon \iota \to \Sigma$, $composition \colon \Sigma \poi \Sigma \to \Sigma$, $parallel \colon \Sigma \tns \Sigma \to \Sigma$, $unit\colon \epsilon\to \Sigma$, and $swap\colon \sigma\to \Sigma$. This information \emph{almost} defines a 
prop $\catC_\Sigma$: the carrier $\Sigma$ of the span is the set of arrows of $\catC_\Sigma$, containing special arrows $id_n$ and $\sigma_{n,m}$ for identities and symmetries, $compose$ assigns to every pairs of composable arrows their composition, and $\tns$ assigns to every pair of arrows their monoidal product. The missing data  is the usual equations~\eqref{eq:equationCategories}-\eqref{eq:equationPROPs3} of symmetric monoidal categories. Thus, in order to obtain props as algebras, we quotient the monad $\SCOMPSYM^{\dagger}$ by those equation, expressed abstractly as a triple $\EqSM = (\ArEqSM, l, r)$, as described in \S~\ref{sec:background}. The functor $\ArEqSM \colon \SIG \to \SIG$, defined below, 
has summands following the order \eqref{eq:equationPROPs}-\eqref{eq:equationPROPs3}:
\begin{equation}\label{eq:typeEquations}
(Id \tns Id \tns Id) + Id + Id + \sigma + ((Id \poi\!\! Id) \tns ( Id \poi\!\! Id)) + (Id \poi \!\! Id \poi\!\! Id) + Id + Id + Id^{+1} + Id^{+1} \end{equation}
Here, $Id^{+1}$ is the functor adding $1$ to the arity/coarity of each element of a given span $\N \tl{a} \Sigma \tr{c} \N$. We also need natural transformations $l,r \colon \ArEqSM \to \SCOMPSYM^{\dagger}$ that define the left- and right-hand side of each equation. For instance, for fixed $\Sigma \in \SIG$ and $(n,m) \in \N \times \N$:
\begin{itemize}
\item an element of $\Sigma \poi \Sigma \poi \Sigma$ (sixth summand of \eqref{eq:typeEquations}) is a tuple $(f,g,h)$ of $\Sigma$-elements, where $f$ is of type $(n,w)$, $g$ of type $(w,v)$, and $h$ of type $(v,m)$, for arbitrary $w,v \in \N$. We let $l_{\Sigma}$ map $(f,g,h)$ to the term $(f \poi g) \poi h$ of type $(n,m)$ in $\SCOMPSYM^{\dagger}(\Sigma)$, and $r_{\Sigma}$ to the term $f \poi (g \poi h)$. Thus this component gives the second equation in~\eqref{eq:equationPROPs2} (associativity).
\item the seventh summand $Id$ in \eqref{eq:typeEquations} yields a $\Sigma$-term $f$, which $l_{\Sigma} \colon \Sigma \to \SCOMPSYM^{\dagger}(\Sigma)$ maps to $f \poi \id_m$ and $r_{\sigma}\colon \Sigma \to \SCOMPSYM^{\dagger}(\Sigma)$ maps to $f$, thus yielding the final equation in~\eqref{eq:equationPROPs2}.
\item an element in $\Sigma^{+1}$ (last summand of \eqref{eq:typeEquations}) of type $(n+1,m+1)$ is a $\Sigma$-term $g$ of type $(n,m)$, which is mapped by $l_{\Sigma}$ to $(\sigma_{n,1} \poi ( id_1 \tns g))$ and by $r_{\sigma}$ to $(g \tns id_1) \poi \sigma_{m,1}$, both elements of $\SCOMPSYM^{\dagger}(\Sigma)$ of type $(n+1,m+1)$, thus giving the final equation in~\eqref{eq:equationPROPs3}.
\end{itemize}
The remainder of the definition of $l,r \colon \ArEqSM \to \SCOMPSYM^{\dagger}$, handles the remaining equations in \eqref{eq:equationPROPs}-\eqref{eq:equationPROPs3}, and should be clear from the above.
Now, using Proposition~\ref{prop:monadquotient}, we quotient the monad $\SCOMPSYM^{\dagger}$ by $(\ArEqSM, l, r)$, obtaining a monad that we call $\SPROP$. We can then conclude by construction that the Eilenberg-Moore category $\EM{\SPROP}$ for the monad $\SPROP$ (with objects the $\SPROP$-algebras, and arrows the $\SPROP$-algebra homomorphisms) is precisely $\PROP$.
\begin{proposition}\label{prop:algebrasprops}
$\EM{\SPROP}\cong \PROP$.
\end{proposition}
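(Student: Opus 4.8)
The plan is to reduce the claim to the quotient‑of‑monads machinery of \S\ref{sec:background}. Since $\SIG\cong\SET^{\N\times\N}$ with $\N\times\N$ discrete, Proposition~\ref{prop:monadquotient} applies to $\funT=\SCOMPSYM^{\dagger}$ and the equations $\EqSM=(\ArEqSM,l,r)$, so that $\EM{\SPROP}$ is (isomorphic to) the full subcategory of $\EM{\SCOMPSYM^{\dagger}}$ on those algebras $\alpha\colon\SCOMPSYM^{\dagger}\Sigma\to\Sigma$ with $\alpha\circ l_\Sigma=\alpha\circ r_\Sigma$. The first step is then the standard identification $\EM{\SCOMPSYM^{\dagger}}\cong\Alg{\SCOMPSYM}$ of the Eilenberg--Moore category of a free monad with the functor‑algebras: unwinding \eqref{eq:signatureProps}, a $\SCOMPSYM$-algebra is precisely a span $\Sigma\in\SIG$ together with span morphisms $identity\colon\IdFunc\to\Sigma$, $composition\colon\Sigma\poi\Sigma\to\Sigma$, $parallel\colon\Sigma\tns\Sigma\to\Sigma$, $unit\colon\epsilon\to\Sigma$, $swap\colon\sigma\to\Sigma$ --- i.e.\ the ``almost a prop'' data discussed just before the statement. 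The key observation is that, being arrows of $\SIG\cong\SET^{\N\times\N}$, these maps automatically carry the correct arity/coarity bookkeeping (composition defined only on composable pairs, parallel composition adding boundaries, etc.), so the only data missing from a genuine prop is the validity of the equations.

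The second step is to check that, under this identification, the condition $\alpha\circ l_\Sigma=\alpha\circ r_\Sigma$ is exactly the assertion that $(\Sigma,identity,composition,parallel,unit,swap)$ satisfies the symmetric‑monoidal‑category axioms \eqref{eq:equationCategories}--\eqref{eq:equationPROPs3}. This is a summand‑by‑summand verification against \eqref{eq:typeEquations}: the ten summands of $\ArEqSM$ were chosen in bijection with the ten equations, and $l,r$ were defined to select the two sides of each; the sample clauses given after \eqref{eq:typeEquations} discharge the representative cases and the rest are entirely analogous. One must then observe that the resulting structure really is a prop with object monoid $(\N,+,0)$: the derived identities $id_n$ (the $n$-fold parallel composite of $id_1$) and symmetries $\sigma_{n,m}$ (recursively built from $\sigma_{1,1}$, as in \S\ref{sec:stringdiag}) satisfy in any such algebra all the remaining laws --- $id_n\tns id_m=id_{n+m}$ holds definitionally, while full bifunctoriality of $\tns$, associativity/unit coherence, and the hexagon and involutivity of symmetries follow from the ten equations. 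Conversely, every prop yields such an algebra by restricting its operations to the generators, and the two assignments are mutually inverse. This is precisely the standard finite axiomatisation of props by the generators $id_1,\sigma_{1,1}$ and the relations \eqref{eq:equationPROPs}--\eqref{eq:equationPROPs3}~\cite{MacLane1965}, and it is the main obstacle: everything else is bookkeeping, but verifying completeness of this finite presentation --- and that the chosen recursive representatives of $id_n$ and $\sigma_{n,m}$ behave coherently --- is the substantial content.

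The last step handles morphisms. A $\SCOMPSYM^{\dagger}$-algebra homomorphism is the same as a span morphism $f\colon\Sigma\to\Sigma'$ commuting with $identity$, $composition$, $parallel$, $unit$ and $swap$. Being an arrow of $\SIG$, $f$ preserves arities and coarities, hence is identity‑on‑objects; commuting with $composition$ and $identity$ makes it a functor (whence $f(id_n)=id_n$ follows via commutation with $parallel$); commuting with $parallel$ and $unit$ makes it strict monoidal; and commuting with $swap$ makes it symmetric, since every $\sigma_{n,m}$ is generated from $\sigma_{1,1}$. Thus $\SCOMPSYM^{\dagger}$-algebra homomorphisms between props are exactly identity‑on‑objects strict symmetric monoidal functors, i.e.\ prop morphisms, and restricting to the full subcategory cut out by $\EqSM$ changes nothing on arrows. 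Combining the three steps gives $\EM{\SPROP}\cong\PROP$.
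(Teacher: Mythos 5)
Your proposal is correct and follows essentially the same route as the paper, which treats the proposition as holding ``by construction'': Proposition~\ref{prop:monadquotient} identifies $\SPROP$-algebras with those $\SCOMPSYM^{\dagger}$-algebras (equivalently, $\SCOMPSYM$-algebras) satisfying $\EqSM$, and these are exactly props by the design of $(\ArEqSM,l,r)$. You usefully make explicit the one ingredient the paper leaves implicit behind the citation of \cite{MacLane1965}, namely that the ten equations together with the recursive definitions of $id_n$ and $\sigma_{n,m}$ constitute a complete presentation of props, and that morphisms match up.
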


\begin{example}\label{ex:freePROP}
The monad $\SPROP$ takes $\Sigma$ to the prop 
freely generated by $\Sigma$. Taking $\Sigma$ as in Example \ref{ex:signature}, one obtains $\SPROP(\Sigma)$
with arrows $n \to m$ string diagrams of $\Circ{\Rig}$ of sort $\sort{n}{m}$.
\end{example}

\subsection{The Carboni-Walters Monad}
\label{sec:cw-monad}

The treatment we gave to props may be applied to other categorical structures. For space reasons, we 
only consider one additional such structure: \emph{Carboni-Walters} (CW) props, also called `hypergraph categories' \cite{Fong-hypergraph}. 
Here each object $n$ carries a distinguished special Frobenius bimonoid compatible with the monoidal product: it can be defined recursively using parallel compositions of the Frobenius structure on the generating object $1$. 
\begin{definition}\label{def:cw-prop}
  A \emph{CW prop} is a prop with morphisms $\Gcomult\from 1\to 2$, $\Gcounit\from 1\to 0$, $\Gmult\from 2\to 1$, $\Gunit\from 0\to 1$ satisfying the equations of special Frobenius bimonoids (Fig.~\ref{fig:Frob}).
\end{definition}

CW props with prop morphisms preserving the Frobenius bimonoid form a subcategory $\CWP$ of $\PROP$.
We can now extend the prop monad of \S~\ref{sec:prop-monad} to obtain a monad with algebras CW props.
The signature is that of a prop with the additional Frobenius structure. Let $\Gcomult\from \SIG\to\SIG$ be the functor constant at $\N \xleftarrow{s} \{\Gcomult\}\xrightarrow{t} \N$ with $s(\Gcomult) = 1$ and $t(\Gcomult) = 2$. Similarly, we introduce the constant functors $\Gcounit\from \SIG\to\SIG$, $\Gmult\from \SIG\to\SIG$ and $\Gunit\from \SIG\to\SIG$ for the other generators. Let $\SCOMPFR \df \SPROP + \Gcomult+\Gcounit+\Gmult+\Gunit$.

We now need to quotient $\SCOMPFR$ by the defining equations of special Frobenius bimonoids (Fig.~\ref{fig:Frob}). We omit the detailed encoding of these equations as a triple $\EqCW = (\ArEqCW , \lCW, \rCW)$ since it presents no conceptual difficulty. Let $\SCW$ be the quotient of $\SCOMPFR$ by these equations. As for props, we obtain $\EM{\SCW}\cong \CWP$ by construction.

\section{Bialgebraic Semantics for String Diagrams}\label{sec:SOS}

Now that we have established monads for our categorical structures of interest, we study coalgebras that capture behaviour for string diagrams in these categories, and distributive laws that yield the desired bialgebraic semantics. We  fix our `behaviour' functor to $$\Fbeh \df \overline{\Pow}(L\poi Id \poi L)\colon \SIG \to \SIG$$
where $L \colon \SIG \to \SIG$ is the \emph{label functor} constant at the span $\N \tl{|\cdot |}A^*\tr{|\cdot |} \N$, with $A^*$ the set of words on some set of labels $A$. The map $|\cdot | \colon A^* \to \N$ takes $w\in A^*$ to its length $|w|\in \N$. An $\Fbeh$-coalgebra is a span morphism $\Sigma \to \overline{\Pow}(L\poi \Sigma \poi L)$; a function that takes $f\in \Sigma({n,m})$ to a set of transitions $(v,g,w)$ with the appropriate sorts, i.e.\ $g\in \Sigma(n,m)$, $|v|=n$ and $|w|=m$.

The data of an $\Fbeh$-coalgebra $\beta \colon \Sigma \to \overline{\Pow}(L\poi \Sigma \poi L)$ is that of a transition relation. For instance, fix labels $A = \{a,b\}$ and let $x,y\in\Sigma(1,2)$ and $z\in\Sigma(1,1)$; suppose also that $\beta$ maps $x$ to $\{(b \poi y \poi ab), (a \poi x \poi aa) \}$, $y$ to $\emptyset$ and $z$ to $\{(b \poi z \poi a)\}$. Then $\beta$ can be written:
\begin{equation}\label{eq:exFBehcoalgebra}
x \dtrans{\,b\,}{a \labelSep b} y\quad\quad
x \dtrans{\,a\,}{a \labelSep a} x \quad\quad z \dtrans{\,b\,}{\,a\,} z
\end{equation}

\begin{example}\label{ex:coalgsig}
In our main example, Fig.~\ref{sos1} defines a coalgebra $\beta \colon \Sigma \to \overline{\Pow}(L\poi \Sigma \poi L)$ where $\Sigma$ is the signature from Example~\ref{ex:signature} and the set of labels is $\Rig$. For instance $\beta(\Bcounit)=\{(k,\Bcounit, \varepsilon) \mid k\in \Rig\}$. Note the $\kappa$ bounding $\Pow$ is the cardinality of $\Rig$.
\end{example}

In the sequel we shall construct distributive laws between the above behaviour functor and monads encoding the various categorical structures defined in the previous section.

\subsection{Bialgebraic Semantics for Props}\label{sec:comp-prop}
The modularity of $\SPROP$ can be exploited to define a distributive law of the $\SPROP$ over $\Fbeh$. Recall from~\S~\ref{sec:prop-monad} that $\SPROP$ is a quotient of $\SCOMPSYM^{\dagger}$. We start by letting $\Fbeh = \overline{\Pow}(L\poi Id \poi L)$ interact with the individual summands of $\SCOMPSYM$ (see \eqref{eq:signatureProps}), corresponding to the operations of props. This amounts to defining GSOS specifications: 
\begin{align*}
&\lambdaseqcomp \colon  \overline{\Pow}(L\poi Id \poi L) \poi  \overline{\Pow}(L\poi Id \poi L)  \To  \overline{\Pow}(L\poi (Id\poi Id)^{\dagger} \poi L) 
& \text{(sequential composition)} \\ 
&\lambdaid \colon \iota \To \overline{\Pow}(L\poi \iota^{\dagger} \poi L) 
& \text{(identity)} \\
&\lambdaparcomp \colon  \overline{\Pow}(L\poi Id \poi L)  \tns  \overline{\Pow}(L\poi Id \poi L)\To  \overline{\Pow}(L\poi (Id \tns Id)^{\dagger} \poi L) 
&\text{(monoidal product)}  \\
&\lambdazero\colon  \epsilon \To  \overline{\Pow}(L\poi \epsilon^{\dagger} \poi L) 
&\text{(product unit)} \\
&\lambdasym \colon  \sigma \To  \overline{\Pow}(L\poi \sigma^{\dagger} \poi L)
&\text{(symmetry)}
\end{align*}
Definitions of these maps are succinctly given via derivation rules, see Fig.~\ref{sos2}.

We explain this in detail for $\lambdaseqcomp$, the others are similar. Given $\Sigma \in \SIG$, an element of type $(n,m)$  in the domain $ \overline{\Pow}(L\poi \Sigma \poi L) \poi  \overline{\Pow}(L\poi \Sigma \poi L)$ 
is a pair $(A,B)$, where, for some $z \in \N$,
$$A \text{ is a set of triples }(\bm{a},c',\bm{b}) \in L(n,n) \times \Sigma(n,z) \times L(z,z),\text{ and}$$
$$B \text{ is a set of triples }(\bm{b},d',\bm{c}) \in L(z,z) \times \Sigma(z,m) \times L(m,m).$$
 Then $\lambdaseqcomp_\Sigma(A,B) \df \{(\bm{a}, c'\poi d', \bm{c}) \;|\; (\bm{a},c',\bm{b})\in A, \; (\bm{b},d',\bm{c})\in B \}$. Following the convention~\eqref{eq:exFBehcoalgebra}, we can write this data as: $\boxed{\dtrans {\bm{a}}{\bm{c}} c' \poi d' } \in  \lambdaseqcomp_\Sigma(A,B)$ if $\boxed{\dtrans{\bm{a}}{\bm{b}} c' } \in A$ and $\boxed{\dtrans{\bm{b}}{\bm{c}} d'}\in B$.
This leads us to the more compact version of $\lambdaseqcomp$ as the transition rule in Fig.\ref{sos2}.
Next, take the coproduct of GSOS specifications $\lambdaseqcomp$, $\lambdaid$, $\lambdaparcomp$, $\lambdazero$ and $\lambdasym$ (see Appendix~\ref{app:preliminaries}) to obtain ${\lambda} \colon \SCOMPSYM \Fbeh \To \Fbeh \SCOMPSYM^{\dagger}$. 
By Proposition~\ref{prop:GSOSToDistrLaw}, this yields distributive law 
$\lambda^{\dagger} \colon \SCOMPSYM^{\dagger} \Fbeh \To \Fbeh \SCOMPSYM^{\dagger}$.

The last step is to upgrade $\lambda^{\dagger}$ to a distributive law $\lambdaquot{SMC}$ over the quotient $\SPROP$ of $\SCOMPSYM^{\dagger}$ by the equations~\eqref{eq:equationPROPs}-\eqref{eq:equationPROPs3} of SMCs. By Proposition~\ref{prop:distrlawquotient}, this is well-defined if $\lambda^{\dagger}$ preserves $\EqSM$. 
We show compatibility with associativity of sequential composition---the other equations can be verified similarly. This amounts to checking that if $\lambda^{\dagger}$ allows the derivation for $s_1 \poi (s_2 \poi s_3)$ as below left, then there exists a derivation for $(s_1 \poi s_2) \poi s_3$ as on the right, and vice-versa.
\begin{equation*}
\derivationRule{
s_1 \dtrans{u}{v} s_1
\quad
\derivationRule{
s_2\dtrans{v}{w} s_2 \quad s_3\dtrans{w}{x} s_3
}{s_2 \poi s_3\dtrans{v}{x}s_2 \poi s_3 }{}
 }
{s_1 \poi (s_2 \poi s_3) \dtrans {u}{x} s_1 \poi (s_2 \poi s_3)}{}
\qquad \quad \qquad
\derivationRule{
\derivationRule{
s_1\dtrans{u}{v} s_1 \quad s_2\dtrans{v}{w} s_2
}{s_1 \poi s_2\dtrans{u}{w} s_1 \poi s_2 }{}
\quad s_3 \dtrans{w}{x} s_3}
{(s_1 \poi s_2) \poi s_3 \dtrans {u}{x} (s_1 \poi s_2) \poi s_3}{}
\end{equation*}
By Proposition \ref{prop:distrlawquotient}, we can therefore upgrade $\lambda^{\dagger}$ to a distributive law $\lambdaquot{SM} \colon \SPROP \Fbeh \To \Fbeh \SPROP$.

We are now ready to construct the compositional semantics as a morphism into the final coalgebra. One starts with a coalgebra $\beta \colon \Sigma \to \Fbeh (\SPROP(\Sigma))$ that describes the behaviour of $\Sigma$-operations, assigning to each a set of transitions, as in \eqref{eq:exFBehcoalgebra}. The difference with \eqref{eq:exFBehcoalgebra} is that, because $\Fbeh$ is applied to $\SPROP(\Sigma)$ instead of just $\Sigma$, the right-hand side of each transition contains not just a $\Sigma$-operation, but a \emph{string diagram}: a $\Sigma$-term modulo the laws of SMCs. 

As recalled in \S~\ref{sec:background}, using the distributive law $\lambdaquot{SM}$ we can lift $\beta \colon \Sigma \to \Fbeh (\SPROP(\Sigma))$ to a $\lambdaquot{SM}$-bialgebra, $\beta^{\sharp}\from\SPROP(\Sigma) \to \Fbeh (\SPROP(\Sigma))$. Since this is a $\Fbeh$-coalgebra, the final $\Fbeh$-coalgebra $\Omega$ (the existence of which is shown in Appendix~\ref{sec:final}) yields a semantics $\sem{\cdot}_{\beta}$ as below. 
The operational semantics of a string diagram $c$ is $\beta^{\sharp}(c)$, obtained from \emph{(i)} transitions for $\Sigma$-operations given by $\beta$ and \emph{(ii)} the derivation rules (Fig.~\ref{sos2}) of $\lambdaquot{SM}$. Instead, $\sem{c}_{\beta}$ is the observable behaviour:
intuitively, its transition systems modulo bisimilarity.


\begin{minipage}{0.4\textwidth}
\[
\xymatrix@C=40pt{
\SPROP(\Sigma) \ar@{-->}[r]^-{\sem{\cdot}_{\beta}} \ar[d]^{\beta^{\sharp}} & \Omega \ar[d]\\
\Fbeh(\SPROP(\Sigma)) \ar[r]_-{\Fbeh(\sem{\cdot}_{\beta})} & \Fbeh(\Omega)
}\]
\end{minipage}
\begin{minipage}{0.55\textwidth}
The bialgebraic semantics framework ensures that $\SPROP(\Sigma)$ and $\Omega$ are $
\SPROP$-algebras,
which by Proposition \ref{prop:algebrasprops} are props. This means that the final coalgebra $\Omega$ is a prop and that $\sem{\cdot}_{\beta}$ is a prop morphism, preserving identities, symmetries and guaranteeing\end{minipage}
 compositionality:
 $\sem{s \poi t}_{\beta} = \sem{s}_{\beta} \poi \sem{t}_{\beta}$ and $\sem{s \tns t}_{\beta} = \sem{s}_{\beta} \tns \sem{t}_{\beta}$. 


\begin{example}\label{ex:finalsemleadingex}
Coming back to our running example,
in Example~\ref{ex:coalgsig} we showed that rules in Fig.~\ref{sos1} induce a coalgebra of type $ \Sigma \to \Fbeh(\Sigma)$. Since each operation in $\Sigma$ is itself a string diagram (formally, via the unit $\eta_{\Sigma} \colon \Sigma \to \SPROP(\Sigma)$), the same rules induce a coalgebra $\beta \from \Sigma \to \Fbeh\SPROP(\Sigma)$, which has the type required for the above construction. The resulting 
 coalgebra $\beta^\sharp \colon \SPROP(\Sigma) \to \Fbeh\SPROP(\Sigma)$ assigns to each diagram of 
$\SPROP(\Sigma)$  the set of transitions specified by the combined operational semantics of Figs.~\ref{sos1} and~\ref{sos2}.
The preceding discussion implies that, when e.g.\ $\Rig=\N$, bisimilarity for the Petri nets of~\cite{BonchiHPSZ19} is a congruence.
\end{example}

\subsection{Bialgebraic Semantics for Carboni-Walters Props}
\label{sec:comp-cw}
In this section we shall see two different ways of extending the GSOS specification of \S~\ref{sec:comp-prop} for CW props (see \S~\ref{sec:cw-monad}). They correspond to the operational semantics of the black and white (co)monoids as given in Fig.~\ref{sos1}. In the next section, we will see that these two different extensions give rise to two classic forms of synchronisation: \`{a} la Hoare and \`{a} la Milner.

\noindent\textbf{Black distributive law.}
The first interprets the operations of the Frobenius structure as label synchronisation: from the black node derivations on the left of Fig.~\ref{sos1} we get GSOS specifications given by natural transformations $\Bcomult\Rightarrow \funF( \Bcomult^{\dagger})$, $\Bcounit\Rightarrow \funF( \Bcounit^{\dagger})$, $\Bmult\Rightarrow  \funF(\Bmult^{\dagger} )$, and $\Bunit\Rightarrow  \funF( \Bunit^{\dagger})$. Recall that, here, we use the diagrams to denote their associated functors $\SIG\to\SIG$. 
By taking the coproduct of these and $\lambda$, the GSOS specification for props from \S~\ref{sec:comp-prop}, we obtain a specification $\lambda_{\bullet}$ for $\SCOMPFR$. It is straightforward to verify that $\lambda_{\bullet}^{\dagger}\from \SCOMPFR^\dagger\Fbeh\To\Fbeh\SCOMPFR^\dagger$
preserves the equations of special Frobenius bimonoids (Fig.~\ref{fig:Frob}), yielding  a distributive law $\lambda_{\bullet/\scriptscriptstyle{\mathsf{CW}}}^{\dagger}\from \SCW^\dagger\Fbeh\To\Fbeh\SCW^\dagger$.
As before, with $\lambda_{\bullet/\scriptscriptstyle{\mathsf{CW}}}^{\dagger}$ we obtain a bialgebra $\beta_{\bullet}^\sharp \colon \SCW(\Sigma) \to \Fbeh\SCW(\Sigma)$ from any coalgebra $\beta \colon \Sigma \to \Fbeh\SCW(\Sigma)$.

\noindent\textbf{White distributive law.}
When the set of labels $A$ is an \emph{Abelian group}, it is possible to give a different GSOS specifications for the Frobenius structure, capturing the group operation of $A$: from the white node derivations on the right of Fig.~\ref{sos1} we get GSOS specifications $\Wcomult\Rightarrow  \funF (\Wcomult^{\dagger})$, $\Wcounit\Rightarrow  \funF( \Wcounit^{\dagger})$,  $\Wmult\Rightarrow \funF( \Wmult^{\dagger})$, and $\Wunit\Rightarrow  \funF( \Wunit^{\dagger} )$.
Using a now familiar procedure we obtain a GSOS specifications $\lambda_{\circ}$ for $\SCOMPFR$. The group structure on $A$ guarantees~\cite{pavlovic2009quantum} that $\lambda_{\circ}^{\dagger}\from \SCOMPFR^\dagger\Fbeh\To\Fbeh\SCOMPFR^\dagger$
preserves the equations of special Frobenius bimonoids (Fig.~\ref{fig:Frob}). Therefore we get a distributive law $\lambda_{\circ/\scriptscriptstyle{\mathsf{CW}}}^{\dagger}\from \SCW^\dagger\Fbeh\To\Fbeh\SCW^\dagger$.

\begin{remark}\label{rmk:comp-Lawvere}
Given the results in this section, one could ask if bialgebraic semantics works for \emph{any} categorical structure. A notable case in which it fails is that of Lawvere theories \cite{hyland2007category}. These can be seen as props with a \emph{natural} comonoid structure on each object \cite{BonchiSZ18}. One may define a monad for Lawvere theories following the same recipe as above. However, it turns out that this monad is incompatible with the GSOS specification for the comonoid given in Fig.~\ref{sos1}. To see the problem consider a term $d$ that can perform two transitions nondeterministically: $d \dtrans{\epsilon}{a} d$ and  $d \dtrans{\epsilon}{b} d$. The naturality of the comonoid forces $d\poi \Bcomult\approx d\tns d$ but $d\poi\Bcomult$ can only perform the $a\, a$ and $b\, b$ transitions while $d\tns d$ can also perform $a\, b$ or $b\, a$. Thus the specification would not be compositional. For more details, see Appendix~\ref{sec:Lawvere-fail}.
\end{remark}


\newcommand{\Names}{\mathcal{N}}
\newcommand{\Var}{\mathcal{V}}
\newcommand{\Proc}{\mathbb{P}}
\newcommand{\Alphabet}{al}
\newcommand{\perm}{\sigma}
\newcommand{\CW}{\mathsf{CW}}
\newcommand{\zerodv}{\bullet}

\newcommand{\Bdtrans}[2]{\dtrans{#2}{#1}_b}
\newcommand{\Wdtrans}[2]{\dtrans{#2}{#1}_w}

\newcommand{\Htr}[1]{\tr{#1}_H}
\newcommand{\Mtr}[1]{\tr{#1}_M}

\newcommand{\enc}[1]{\langle\! \langle #1 \rangle \! \rangle}

\section{Black and White Frobenius as Hoare and Milner Synchronisation}\label{BWFrob}
%
The role of this section is twofold: on the one hand we demonstrate how classical process calculus syntax benefits from a string diagrammatic treatment; on the other we draw attention towards a surprising observation, namely that the black and white Frobenius structures discussed previously provide the synchronisation mechanism of, respectively, CSP and CCS.

\subsection{Syntax}
We consider a minimal process calculus for simplicity. Assume a countable set $\Names$ of \emph{names}, $a_1$, $a_2$, \dots and a set $\Var$ of \emph{process variables}, $\procvar{f}$, $\procvar{g}$, \dots, equipped with a function $ar\colon \Var \to \N$ that assigns the set of names that the process may use: $ar(\procvar{f})=n$ means that the process $\procvar{f}$ uses only names $\{a_1, \dots, a_n\}$. This is Hoare's~\cite{Hoarebook} notion of \emph{alphabet} for process variables.

Roughly speaking, in a string diagram, dangling wires perform the job of variables.
To ease the translation of terms to diagrams, we include permutations of names in the syntax, hereafter denoted by $\perm$. For a permutation $\perm \colon \Names \to \Names$, its support is the set $supp(\perm)=\{a_i \mid a_i\neq \perm(a_i)\}$; $\perm$ is \emph{finitely supported} if $supp(\perm)$ is finite. For each finitely supported permutation $\perm$ its \emph{degree} is defined as the greatest $i\in \N$ such that $a_i\in supp(\perm)$.

The set of processes is defined recursively as follows
$$P:= P|P, \;\; \nu a_i(P),\;\;\procvar{f}, \;\;  P\perm$$
where $a_i\in \Names$, $\procvar{f}\in \Var$ and $\perm$ is a finitely supported permutation of names. The symbol $|$ stands for the parallel composition of processes. The symbol $\nu a_i$ stands for the restriction, or hiding, of the name $a_i$. Observe that there are no primitives for prefixes, non-deterministic choice or recursion: these will appear in the declaration of process variables which we will describe in \S~\ref{sec:pcsemantics}. The idea here is to separate the behaviour, specified in the declaration of process variables, and the communication topology of the network, given by the syntax above. The notion of alphabet can be defined for all processes as follows:
$$
\Alphabet(P|Q) \! = \! \Alphabet(P)\cup \Alphabet(Q) \quad
\Alphabet(\nu a_i(P)) \! = \! \Alphabet(P)\setminus \{a_i\} \quad
\Alphabet(\procvar{f}) \! = \! \{a_1, \dots, a_{ar(\procvar{f})}\}  \quad
\Alphabet(P\perm) \! = \! \perm[\Alphabet(P)]
$$

\smallskip\noindent\textbf{From one-dimensional to two-dimensional syntax.}
We use a typing discipline to guide the translation of terms to string diagrams:
\begin{equation}\label{eq:typingdiscipline}
	\derivationRule{n\vdash P \quad n\vdash Q}{n\vdash P|Q}{} \quad
\derivationRule{n+1\vdash P }{n\vdash \nu a_{n+1}(P)}{} \quad
\derivationRule{ar(\procvar{f})=n}{n\vdash \procvar{f}}{} \quad
\derivationRule{n\vdash P\quad degree(\perm)\leq n}{n\vdash P\perm}{} \quad
\derivationRule{n\vdash P}{n+1\vdash P}{}
\end{equation}
 The meaning of the types is explained by the following lemma, easily proven by induction.

\begin{lemma}
If $n \vdash P$ then $\Alphabet(P)\subseteq \{a_1, \dots a_n\}$.
\end{lemma}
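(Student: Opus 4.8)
The plan is to proceed by structural induction on the derivation of $n \vdash P$, following the five typing rules in~\eqref{eq:typingdiscipline}. In each case I would unfold the definition of $\Alphabet(-)$ on the corresponding syntactic constructor and check that the resulting set is contained in $\{a_1,\dots,a_n\}$, using the induction hypothesis for the premises.

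Concretely: for the rule introducing $P \mid Q$, the premises give $n \vdash P$ and $n \vdash Q$, so by induction $\Alphabet(P), \Alphabet(Q) \subseteq \{a_1,\dots,a_n\}$, hence $\Alphabet(P \mid Q) = \Alphabet(P) \cup \Alphabet(Q) \subseteq \{a_1,\dots,a_n\}$. For the restriction rule, the premise gives $n+1 \vdash P$, so $\Alphabet(P) \subseteq \{a_1,\dots,a_{n+1}\}$, and therefore $\Alphabet(\nu a_{n+1}(P)) = \Alphabet(P) \setminus \{a_{n+1}\} \subseteq \{a_1,\dots,a_n\}$. For a process variable $\procvar{f}$ with $ar(\procvar{f}) = n$, by definition $\Alphabet(\procvar{f}) = \{a_1,\dots,a_n\}$, which is trivially contained in itself. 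For the weakening rule, from $n \vdash P$ we get $\Alphabet(P) \subseteq \{a_1,\dots,a_n\} \subseteq \{a_1,\dots,a_{n+1}\}$, as required.

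The only case requiring a moment's thought is $P\perm$, typed by the rule with premises $n \vdash P$ and $degree(\perm) \leq n$. By induction $\Alphabet(P) \subseteq \{a_1,\dots,a_n\}$, and $\Alphabet(P\perm) = \perm[\Alphabet(P)]$. One must observe that a finitely supported permutation $\perm$ with $degree(\perm) \leq n$ fixes every $a_i$ with $i > n$, and moreover maps the set $\{a_1,\dots,a_n\}$ into itself: indeed, if $a_i \in \{a_1,\dots,a_n\}$ then either $a_i \notin supp(\perm)$, so $\perm(a_i) = a_i \in \{a_1,\dots,a_n\}$, or $a_i \in supp(\perm)$, in which case $\perm(a_i) \in supp(\perm)$ as well (permutations restrict to bijections on their support), and every element of $supp(\perm)$ has index at most $degree(\perm) \leq n$. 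Hence $\perm[\Alphabet(P)] \subseteq \perm[\{a_1,\dots,a_n\}] \subseteq \{a_1,\dots,a_n\}$.

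I do not anticipate any real obstacle: the statement is a routine invariant of the typing system, and the induction goes through mechanically. The mild subtlety, as noted, is making the argument about $\perm$ precise — specifically that $degree(\perm) \leq n$ forces $\perm$ to preserve the prefix $\{a_1,\dots,a_n\}$, which follows from the fact that a permutation maps its support bijectively onto itself.
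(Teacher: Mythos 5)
Your proof is correct and is exactly the routine structural induction on the typing rules that the paper intends when it says the lemma is ``easily proven by induction'' (the paper gives no further detail). Your careful handling of the $P\perm$ case --- observing that $degree(\perm)\leq n$ puts $supp(\perm)$ inside $\{a_1,\dots,a_n\}$ and that a permutation maps its support bijectively onto itself, so $\perm$ preserves the prefix $\{a_1,\dots,a_n\}$ --- is the only point needing thought, and you have it right.
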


We will translate processes to the CW prop freely generated
from $\Sigma=\{\procvar{f}\colon(n,0) \mid \procvar{f}\in \Var \text{ and } ar(\procvar{f})=n\}$; in
particular a typed process $n\vdash P$ results in a string diagram of $\SCW(\Sigma)(n,0)$. The translation $\enc{\cdot}$ is defined recursively on typed terms as follows:


\begin{gather*}
\enc{n\vdash P|Q}=\parProc{black-faded}{n}{P}{Q} \qquad
\enc{n \vdash \nu a_{n+1}(P)}=\hideProc{black-faded}{n}{P} \\
\enc{n \vdash \procvar{f}}=\diagState{n}{\procvar{f}} \qquad
\enc{n \vdash P\perm}= \permProc{n}{\sigma}{P} \qquad \enc{n+1 \vdash P}=\newvarProc{black-faded}{n}{P} 
\end{gather*}
where for $\perm$ with $degree(\perm)<n$, $\overline{\perm}\colon n \to n$ is the obvious corresponding arrow in $\SCW(\Sigma)$.

\medskip
\noindent\begin{minipage}{.63\textwidth}
\begin{example}\label{ex:pcfirst}
Let $\Var=\{\procvar{f},\procvar{g}\}$ with $ar(\procvar{f})=1$ and $ar(\procvar{g})=2$. Let $[a_2/a_1]\colon \Names \to \Names$ be the permutation swapping $a_1$ and $a_2$. One can easily check that $1\vdash \nu a_2 (\procvar{f}[a_2/a_1]\,|\, \procvar{g})$. Then $\enc{1\vdash \nu a_2 (\procvar{f}[a_2/a_1]\,|\, \procvar{g})}$ is as on the right.
\end{example}
\end{minipage}
\begin{minipage}{.37\textwidth}
\vspace{-.3cm} \	\[
  
\tikzset{x=1em, y=2.1ex}
\begin{tikzpicture}[show background rectangle, {background rectangle/.style}={fill=backcolour}]
	\begin{pgfonlayer}{nodelayer}
		\node [style=none] (0) at (-4.25, 0.75) {};
		\node [style=black-faded] (1) at (-2.5, 0.75) {};
		\node [style=none] (2) at (-1, -0.5) {};
		\node [style=none] (3) at (-1, 1.5) {};
		\node [style=black-faded] (4) at (-3.5, -0.25) {};
		\node [style=none] (5) at (-1, -1) {};
		\node [style=black-faded] (6) at (-2.5, -0.25) {};
		\node [style=none] (7) at (-1, 0.75) {};
		\node [style=none] (8) at (0.75, 1.5) {};
		\node [style=black-faded] (9) at (0.75, 0.75) {};
		\node [style=none] (10) at (1.75, 1.5) {};
		\node [style=none] (11) at (1.75, -0.5) {};
		\node [style=none] (12) at (1.75, -1) {};
		\node [style=reg] (13) at (2, 1.5) {\normalsize $\procvar{f}$};
		\node [style=reg] (14) at (2, -0.75) {\normalsize $\procvar{g}$};
	\end{pgfonlayer}
	\begin{pgfonlayer}{edgelayer}
		\draw [in=75, out=180, looseness=1.00] (3.center) to (1);
		\draw (1) to (0.center);
		\draw [in=-75, out=180, looseness=1.00] (2.center) to (1);
		\draw [in=75, out=180, looseness=1.00] (7.center) to (6);
		\draw (6) to (4);
		\draw [in=-75, out=180, looseness=1.00] (5.center) to (6);
		\draw [in=180, out=0, looseness=1.00] (3.center) to (9);
		\draw [in=180, out=0, looseness=1.25] (7.center) to (8.center);
		\draw (8.center) to (10.center);
		\draw [in=180, out=0, looseness=1.00] (2.center) to (11.center);
		\draw [in=180, out=0, looseness=1.00] (5.center) to (12.center);
	\end{pgfonlayer}
\end{tikzpicture}}
\tikzset{x=1em, y=1.5ex}
 \stcong  
\tikzset{x=1em, y=2.1ex}
\begin{tikzpicture}[show background rectangle, {background rectangle/.style}={fill=backcolour}]
	\begin{pgfonlayer}{nodelayer}
		\node [style=reg] (0) at (2, 1) {\normalsize $\procvar{f}$};
		\node [style=none] (1) at (1, 1) {};
		\node [style=black-faded] (2) at (-0.5, -0) {};
		\node [style=black-faded] (3) at (-1.5, -0) {};
		\node [style=reg] (4) at (2, -0.75) {\normalsize $\procvar{g}$};
		\node [style=none] (5) at (1, -1) {};
		\node [style=none] (6) at (-2, 1) {};
		\node [style=none] (7) at (1.25, -0.5) {};
		\node [style=none] (8) at (1.25, -1) {};
		\node [style=none] (9) at (1.75, 1) {};
		\node [style=none] (10) at (1.75, -0.5) {};
		\node [style=none] (11) at (1.75, -1) {};
		\node [style=none] (12) at (-0.75, 1) {};
	\end{pgfonlayer}
	\begin{pgfonlayer}{edgelayer}
		\draw [in=75, out=180, looseness=1.00] (1.center) to (2.center);
		\draw (2) to (3.center);
		\draw [in=-75, out=180, looseness=1.00] (5.center) to (2);
		\draw [in=180, out=0, looseness=1.00] (5.center) to (8.center);
		\draw (1.center) to (9.center);
		\draw (8.center) to (11.center);
		\draw (7.center) to (10.center);
		\draw (6.center) to (12.center);
		\draw [in=180, out=0, looseness=1.00] (12.center) to (7.center);
	\end{pgfonlayer}
\end{tikzpicture}}
\tikzset{x=1em, y=1.5ex}

\]
\end{minipage}
\medskip

%
%
%

\subsection{Semantics}\label{sec:pcsemantics}
In order to give semantics to the calculus, we assume a set $\mathcal{A}$ of actions, $\alpha$, $\beta$, \dots. Since, we will consider different sets of actions (for Hoare and Milner synchronisation), we assume them to be functions of type $\Names \to M$ for some monoid $(M,+,0)$. The support of an action $\alpha$ is the set $\{a_i \mid \alpha(a_i)\neq 0\}$. The alphabet of $\alpha$, written $\Alphabet(\alpha)$ is identified with its support.

For Hoare synchronisation, the monoid $M$ is $(2,\cup,0)$, while for Milner it is $(\mathbb{Z},+,0)$. In both cases, we will write $a_i$ for the function mapping the name $a_i$ to $1$ and all the others to $0$. For Milner synchronisation, write $\overline{a_i}$ for the function mapping $a_i$ to $-1$.

To give semantics to processes, we need a \emph{process declaration} for each $\procvar{f} \in \Var$. That is, an expression $\procvar{f} \syntaxdf \sum_{i \in I}\alpha_i.P_i$, for some finite set $I$, $\alpha_i \in \mathcal{A}$ and processes $P_i$ such that 
\begin{equation}\label{eq:gooddecl}
\{a_1, \dots a_{ar(\procvar{f})}\} \subseteq \bigcup_{i\in I}\Alphabet(\alpha_i) \cup \bigcup_{i\in I}\Alphabet(P_i)\end{equation}
The basic behaviour of process declarations is captured by the three rules below.
\begin{equation}\label{eq:opgeneral}\derivationRule{}{\procvar{f} \tr{0}\procvar{f}}{} \qquad \derivationRule{\procvar{f} \syntaxdf \sum_{i \in I}\alpha_i.P_i}{\procvar{f} \tr{\alpha_i}P_i}{} \qquad \derivationRule{P \tr{\alpha}P'}{P\perm \tr{\alpha\circ \perm} P' \perm}{}\end{equation}

\begin{example}\label{ex:pcseconf}
Recall $\procvar{f}$ and $\procvar{g}$ from Example \ref{ex:pcfirst}. Now assume declarations $\procvar{f} \syntaxdf  a_1.\nu a_2(\procvar{f}[a_2/a_1] \,| \, \procvar{g})$ and $\procvar{g} \syntaxdf a_1.\procvar{g}+a_2.\procvar{g}$. Observe that they respect \eqref{eq:gooddecl}. We have that $\procvar{g}\tr{a_1}\procvar{g}$ and $\procvar{g}\tr{a_2}\procvar{g}$ while $\procvar{f}\tr{a_1}\nu a_2 (\procvar{f}[a_2/a_1] \, |\, \procvar{g})$. Similarly $\procvar{f} [a_2/a_1]\tr{a_2}(\nu a_2 (\procvar{f}[a_2/a_1] \, |\, \procvar{g}))[a_2/a_1]$.
\end{example}
To define the semantics of parallel and restriction, we need to distinguish between the Hoare and Milner  
synchronisation patterns.

\smallskip
\noindent\textbf{Hoare synchronisation.\label{sec:Hoare}} 
Here actions are 
functions $\alpha\colon \Names \to 2$, 
which can equivalently be thought of as subsets of $\Names$.
The synchronisation mechanism presented below is analogous to the one used in CSP~\cite{Hoarebook}.
The main difference is the level of concurrency: the classical semantics~\cite{Hoarebook} is purely interleaving, while for us it is a step semantics. Essentially, in $P|Q$, the processes $P$ and $Q$ may evolve independently on the non-shared names, i.e. the evolution of two or more processes may happen at the same time. It is for this reason that our actions are sets of names.
The operational semantics of parallel and restriction is given by rules
\begin{equation}\label{eq:Hoare}\derivationRule{P\tr{\alpha}P' \qquad Q\tr{\beta} Q' \quad \alpha\cap \Alphabet(Q)=\beta \cap \Alphabet(P)}{P|Q \tr{\alpha \cup \beta} P'|Q'}{} \qquad \derivationRule{P \tr{\alpha} P'}{\nu a_i(P) \tr{\alpha \setminus \{a_i\}} \nu a_i(P') }{}{}\end{equation}
We write $\Htr{\alpha}$ for the transition systems 
generated by the rules~\eqref{eq:opgeneral}, \eqref{eq:Hoare}.
By a simple inductive argument, using \eqref{eq:gooddecl} as base case, we see that for all processes $P$, if $P \tr{\alpha} P'$ then $\alpha \subseteq \Alphabet(P)$.
The rule for parallel, therefore, ensures that $P$ and $Q$ synchronise over all of their shared names. 
%
The rule for restriction hides $a_i$ from the environment. For instance, if $\alpha=\{a_i\}$, then  $\nu a_i(P)\tr{\emptyset}\nu a_i(P')$. If $\alpha=\{a_j\}$ with $a_j\neq a_i$, then $\nu a_i(P)\tr{\{a_j\}}\nu a_i(P')$.

\begin{example}
Recall $\procvar{f}$ and $\procvar{g}$ from Example \ref{ex:pcseconf}. We have that $\procvar{f}\Htr{a_1}\nu a_2 (\procvar{f}[a_2/a_1] \, |\, \procvar{g})$.
From $\nu a_2 (\procvar{f}[a_2/a_1] \, |\, \procvar{g})$, there are two possibilities: either $\procvar{f}[a_2/a_1]$ and $\procvar{g}$ synchronise on $a_2$, and in this case we have $\nu a_2 (\procvar{f}[a_2/a_1] \, |\, \procvar{g}) \tr{\emptyset}$, or $\procvar{g}$ proceeds without synchronising on $a_1$, therefore $\nu a_2 (\procvar{f}[a_2/a_1] \, |\, \procvar{g}) \Htr{\{a_1\}}$ since  $a_1$ belongs to $\Alphabet(\procvar{g})$ and not to $\Alphabet(\procvar{f}[a_2/a_1])$. 
\end{example}

\smallskip
\noindent\textbf{Milner synchronisation.} We take $\mathcal{A}=\mathbb{Z}^\Names$. 
Sum of functions, denoted by $+$, is defined pointwise and we 
write $0$ for its unit, the constant $0$ function. 
\begin{equation}\label{eq:Milner}
\derivationRule{P\tr{\alpha}P' \qquad Q\tr{\beta} Q'}{P|Q \tr{\alpha+\beta} P'|Q'}{} \qquad \derivationRule{P \tr{\alpha} P'}{\nu a_i(P) \tr{\alpha} \nu a_i(P') }{\alpha(a_i)=0}
\end{equation}
We write $\Mtr{\alpha}$ for the transition system generated by the rules \eqref{eq:opgeneral}, \eqref{eq:Milner}.

Functions in $\mathbb{Z}^\Names$ to represent concurrent occurrences of CCS send and receive actions. A single CCS action $a$ is the function mapping $a$ to $1$ and all other names to $0$. Similarly, the action $\bar{a}$ maps $a$ to $-1$ and the other names to $0$. The silent action $\tau$ is the function $0$.
With this in mind, it is easy to see that, similarly to CCS, the rightmost rule forbids $\nu a_i(P) \tr{\alpha} \nu a_i(P')$ whenever $\alpha=a_i$ or $\alpha=\bar{a_i}$. CCS-like synchronisation is obtained by the leftmost rule: when $\alpha=a_i$ and $\beta=\bar{a_i}$, one has that $P|Q \tr{0}P'|Q'$.

A simple inductive argument confirms that $P\tr{0}P$ for any process $P$. Then, by the leftmost rule again, one has that whenever $Q\tr{\beta}Q'$, then $P|Q\tr{\beta}P|Q'$. Note, however, that as in~\S~\ref{sec:Hoare}, while our synchronisation mechanism is essentially Milner's CSS handshake, our semantics is not interleaving and allows for step concurrency.  It is worth remarking that the operational rules in \eqref{eq:Milner} have already been studied by Milner in its work on SCCS~\cite{milner1983calculi}.

\smallskip
\noindent\textbf{Semantic correspondence.} For an action $\alpha\colon \Names \to M$ with $\Alphabet(\alpha)\subseteq \{a_1, \dots a_n\}$, we write $n\vdash \alpha$ for the restriction $\{a_1,\dots, a_n\} \to M$.
Define coalgebras $\beta_b, \beta_w \colon \Sigma \to \overline{\Pow}(L\poi \SCW(\Sigma) \poi L)$ for each $\procvar{f}\in \Sigma_{n,0}$ where 
$\procvar{f} \syntaxdf \sum_{i \in I}\alpha_i.P_i$ as
$$\beta_b(\procvar{f})=\beta_w(\procvar{f})=\left\{\big( n\vdash \alpha_i, \enc{P_i}, \zerodv  \big) \mid i\in I\} \cup \{\big(n\vdash 0 , \procvar{f}, \zerodv \big)\right\}\text{.}$$
For both $\beta_b$ and $\beta_w$, $L$ is the span $\N \tl{|\cdot |}A^*\tr{|\cdot |} \N$, but $A=2$ for $\beta_b$ and $A=\mathbb{Z}$ for $\beta_w$.

Via the distributive law  (\S~\ref{sec:comp-cw}) for the black Frobenius, we obtain the coalgebra $\beta_b^{\sharp} \colon \SCW(\Sigma) \to \overline{\Pow}(L\poi \SCW(\Sigma) \poi L)$. Via the white Frobenius, we obtain 
$\beta_w^{\sharp} \colon \SCW(\Sigma) \to \overline{\Pow}(L\poi \SCW(\Sigma) \poi L)$. We write $c \Bdtrans{\alpha}{\beta}d$ for $(\alpha,\beta, d) \in \beta_b^\sharp (c)$ and $c \Wdtrans{\alpha}{\beta}d$ for  $(\alpha,\beta, d) \in \beta_w^\sharp (c)$.
The correspondence can now be stated formally.

\begin{theorem}\label{thmHMbw} Let $n \vdash P$ and $n \vdash \alpha$ such that $\Alphabet(\alpha)\subseteq \Alphabet(P)$.
\begin{itemize}
\item {\bf Hoare is black.} If $P\Htr{\alpha} P'$ then $\semProc{n}{P}\Bdtrans{\zerodv}{n\vdash \alpha}\semProc{n}{P'}$.
Vice versa, if 
$\semProc{n}{P}\Bdtrans{\zerodv}{n\vdash \alpha}\!\diagState{n}{d}$
then there is $n\vdash P'$ s.t. $P\!\Htr{\alpha}\! P'$ and $\semProc{n}{P'}\!=\!\diagState{n}{d}$. 
\item{\bf Milner is white.} If $P\Mtr{\alpha} P'$ then $\semProc{n}{P}\Wdtrans{\zerodv}{n\vdash \alpha}\semProc{n}{P'}$. Vice versa, if $\semProc{n}{P} \!\Wdtrans{\zerodv}{n\vdash \alpha}\diagState{n}{d}$
then there is $n\vdash P'$ s.t. $P\!\Mtr{\alpha}\! P'$ and $\semProc{n}{P'}\!=\!\diagState{n}{d}$. 
\end{itemize}
\end{theorem}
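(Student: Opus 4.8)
The plan is to reduce Theorem~\ref{thmHMbw} to a single biconditional about the transitions of the encoding $\enc{\cdot}$, which I would prove by induction on the typing derivation $n\vdash P$, treating the black and white cases in parallel. The key preliminary is the description of $\beta_b^{\sharp}$ and $\beta_w^{\sharp}$ provided by \S\ref{sec:comp-cw}: being the liftings of $\beta_b$, $\beta_w$ along the distributive laws $\lambda_{\bullet/\scriptscriptstyle{\mathsf{CW}}}^{\dagger}$, $\lambda_{\circ/\scriptscriptstyle{\mathsf{CW}}}^{\dagger}$, the transitions that $\beta_b^{\sharp}$ (resp.\ $\beta_w^{\sharp}$) assigns to a diagram $c\in\SCW(\Sigma)$ are exactly those derivable from the transitions $\beta_b$ (resp.\ $\beta_w$) assigns to the generators in $\Sigma$, using the SOS rules for the prop operations (Fig.~\ref{sos2}) together with the black (resp.\ white) rules for $\Gcomult$, $\Gcounit$, $\Gmult$, $\Gunit$ of Fig.~\ref{sos1}. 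This reduces the theorem to unfolding $\enc{n\vdash P}$ clause by clause and reading off its transitions.

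The biconditional I would prove strengthens the theorem to \emph{arbitrary} labels. Writing $\gamma|_{\Alphabet(P)}$ for the action agreeing with $\gamma$ on $\Alphabet(P)$ and equal to $0$ elsewhere, the \emph{black lemma} reads: for every $n\vdash P$ and every $\gamma\colon\{a_1,\dots,a_n\}\to 2$, $\enc{n\vdash P}\Bdtrans{\zerodv}{n\vdash\gamma}d$ iff $d=\enc{n\vdash P'}$ for some $P'$ with $P\Htr{\gamma|_{\Alphabet(P)}}P'$; and the \emph{white lemma} reads: $\enc{n\vdash P}\Wdtrans{\zerodv}{n\vdash\gamma}d$ iff $\Alphabet(\gamma)\subseteq\Alphabet(P)$, $d=\enc{n\vdash P'}$ and $P\Mtr{\gamma}P'$. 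Theorem~\ref{thmHMbw} is then immediate by taking $\gamma:=(n\vdash\alpha)$, since the standing hypothesis $\Alphabet(\alpha)\subseteq\Alphabet(P)$ forces $\gamma|_{\Alphabet(P)}=\alpha$. The strengthening is made necessary by the $P|Q$ clause in the black setting: there the black comonoid copies the value of each wire to both $\enc{P}$ and $\enc{Q}$, and on a wire outside $\Alphabet(Q)$ that value is absorbed by a counit $\Bcounit$ of $\enc{Q}$, which by the rule $\Bcounit\dtrans{k}{\varepsilon}\Bcounit$ accepts \emph{any} value; hence one needs to know that $\enc{Q}$ fires on labels that may be nonzero outside $\Alphabet(Q)$.

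The induction then has one case per clause of $\enc{\cdot}$. For $\procvar{f}$ the claim is exactly the definition of $\beta_b=\beta_w$ on $\Sigma$, which mirrors the two rules for process variables in~\eqref{eq:opgeneral}. For $P\sigma$, $\enc{P\sigma}$ precomposes $\enc{P}$ with the permutation diagram $\overline{\sigma}$, whose behaviour (determined by $\lambdasym$ and $\lambdaid$) relabels a transition along $\sigma$, matching the permutation rule of~\eqref{eq:opgeneral}; one also observes $\overline{\sigma}\poi\enc{P'}=\enc{P'\sigma}$, so the target is again in the image of $\enc{\cdot}$. For $\nu a_{n+1}(P)$, $\enc{\nu a_{n+1}(P)}$ feeds $\Gunit$ into the $(n{+}1)$-st wire of $\enc{P}$: in the black case $\Bunit\dtrans{\varepsilon}{k}\Bunit$ supplies an arbitrary value on $a_{n+1}$, matching the restriction rule of~\eqref{eq:Hoare} (which discards whatever $P$ does on $a_{n+1}$), while in the white case $\Wunit\dtrans{\varepsilon}{0}\Wunit$ forces $0$, matching the side condition $\alpha(a_i)=0$ of~\eqref{eq:Milner}. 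Weakening is analogous, attaching $\Gcounit$ to the new wire ($\Bcounit$ absorbs anything, $\Wcounit$ forces $0$) and leaving the alphabet unchanged. Finally, for $P|Q$ the shared wires pass through the comonoid on $n$: black comultiplication $\Bcomult\dtrans{k}{k\,k}\Bcomult$ makes $\enc{P}$ and $\enc{Q}$ fire on the same label $\gamma$ as $\enc{P|Q}$, whence the induction hypothesis gives $P\Htr{\gamma|_{\Alphabet(P)}}P'$ and $Q\Htr{\gamma|_{\Alphabet(Q)}}Q'$, and the parallel rule of~\eqref{eq:Hoare} applies since $\gamma|_{\Alphabet(P)}$ and $\gamma|_{\Alphabet(Q)}$ automatically agree on $\Alphabet(P)\cap\Alphabet(Q)$, producing the transition labelled $\gamma|_{\Alphabet(P)}\cup\gamma|_{\Alphabet(Q)}=\gamma|_{\Alphabet(P|Q)}$; the white case is identical with $\Wcomult\dtrans{k+l}{k\,l}\Wcomult$ splitting each value additively, so that $\enc{P}$, $\enc{Q}$ fire on $\gamma_1$, $\gamma_2$ with $\gamma_1+\gamma_2=\gamma$, and the Milner parallel rule of~\eqref{eq:Milner} concludes.

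I expect the main obstacle to be the bookkeeping in the black $P|Q$ case: one has to argue precisely that the counits sitting on the wires outside $\Alphabet(P)$, resp.\ $\Alphabet(Q)$, decouple the two subdiagrams on the non-shared part of the alphabet, which is exactly what the $\gamma|_{\Alphabet(\cdot)}$ formulation records and what Hoare's equaliser condition $\alpha\cap\Alphabet(Q)=\beta\cap\Alphabet(P)$ realises operationally. A smaller and purely mechanical point is establishing once and for all that $\overline{\sigma}$ acts as a relabelling and that composition with it commutes with $\enc{\cdot}$, so that the induction transports successor terms without leaving the image of the encoding.
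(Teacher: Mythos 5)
Your proposal is correct, and its skeleton (induction on the typing derivation of $n\vdash P$, unfolding $\enc{\cdot}$ clause by clause and matching the SOS rules for the prop and Frobenius generators against the process rules) coincides with the paper's. The difference lies in how the awkward bookkeeping of the parallel case is organised. The paper proves the two directions as two separate inductive propositions over the \emph{exact} label $\alpha$ with $\Alphabet(\alpha)\subseteq\Alphabet(P)$, and discharges the mismatch created by the black comultiplication --- which forces $\enc{P}$ and $\enc{Q}$ to fire on the \emph{same} label even though each component's process transition only mentions its own alphabet --- via an auxiliary padding lemma (Lemma~\ref{lemma:namesHoare}: names outside $\Alphabet(P)$ can be freely added to or removed from the label), itself proved from a structural decomposition (Lemma~\ref{lemma:i-disconnect}: a wire outside the alphabet is literally disconnected and capped by $\Bcounit$). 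You instead absorb exactly this content into a strengthened induction invariant quantified over arbitrary $\gamma$, relating $\enc{P}\Bdtrans{\zerodv}{\gamma}d$ to $P\Htr{\gamma|_{\Alphabet(P)}}P'$; your observation that Hoare's side condition $\alpha\cap\Alphabet(Q)=\beta\cap\Alphabet(P)$ is precisely what forces $\alpha=\gamma|_{\Alphabet(P)}$ and $\beta=\gamma|_{\Alphabet(Q)}$ is the key point and is correct, as is the white variant where $\Wcounit$ forces $0$ outside the alphabet and $\Wcomult$ splits labels additively. Your route yields a single biconditional induction and avoids the standalone disconnection lemma; the paper's route keeps the induction hypothesis minimal and isolates the ``unused wires are disconnected'' fact as a reusable structural statement about the diagrams. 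Both are sound; if you write yours up, do spell out the inversion step in the converse parallel case (that a transition of $\enc{P|Q}$ must factor through the comonoid rule with the same $\gamma$ fed to both components) and the successor-shape claims $\overline{\perm}\poi\enc{P'}=\enc{P'\perm}$ and their analogues, which is where the remaining work sits.
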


\begin{example}
We illustrate the semantic correspondence by returning to Example~\ref{ex:pcseconf}. Diagrammatically, it yields the following transitions:
\[
\tikzset{x=1em, y=2.1ex}
\begin{tikzpicture}[show background rectangle, {background rectangle/.style}={fill=backcolour}]
	\begin{pgfonlayer}{nodelayer}
		\node [style=none] (0) at (1, 1) {};
		\node [style=black] (1) at (-0.5, -0) {};
		\node [style=black] (2) at (-1.5, -0) {};
		\node [style=none] (3) at (1, -1) {};
		\node [style=none] (4) at (-2, 1) {};
		\node [style=none] (5) at (1.25, -0.5) {};
		\node [style=none] (6) at (1.25, -1) {};
		\node [style=none] (7) at (1.75, 1) {};
		\node [style=none] (8) at (1.75, -0.5) {};
		\node [style=none] (9) at (1.75, -1) {};
		\node [style=none] (10) at (-0.75, 1) {};
		\node [style=reg] (11) at (2, 1) {\normalsize $\procvar{f}$};
		\node [style=reg] (12) at (2, -0.75) {\normalsize $\procvar{g}$};
	\end{pgfonlayer}
	\begin{pgfonlayer}{edgelayer}
		\draw [in=75, out=180, looseness=1.00] (0.center) to (1.center);
		\draw (1) to (2.center);
		\draw [in=-75, out=180, looseness=1.00] (3.center) to (1);
		\draw [in=180, out=0, looseness=1.00] (3.center) to (6.center);
		\draw (0.center) to (7.center);
		\draw (6.center) to (9.center);
		\draw (5.center) to (8.center);
		\draw (4.center) to (10.center);
		\draw [in=180, out=0, looseness=1.00] (10.center) to (5.center);
	\end{pgfonlayer}
\end{tikzpicture}}
\tikzset{x=1em, y=1.5ex}
\Bdtrans{\zerodv}{1} 
\tikzset{x=1em, y=2.1ex}
\begin{tikzpicture}[show background rectangle, {background rectangle/.style}={fill=backcolour}]
	\begin{pgfonlayer}{nodelayer}
		\node [style=none] (0) at (1, 1) {};
		\node [style=black] (1) at (-0.5, -0) {};
		\node [style=black] (2) at (-1.5, -0) {};
		\node [style=none] (3) at (1, -1) {};
		\node [style=none] (4) at (-2, 1) {};
		\node [style=none] (5) at (1.25, -0.5) {};
		\node [style=none] (6) at (1.25, -1) {};
		\node [style=none] (7) at (1.75, 1) {};
		\node [style=none] (8) at (1.75, -0.5) {};
		\node [style=none] (9) at (1.75, -1) {};
		\node [style=none] (10) at (-0.75, 1) {};
		\node [style=reg] (11) at (2, 1) {\normalsize $\procvar{f}$};
		\node [style=reg] (12) at (2, -0.75) {\normalsize $\procvar{g}$};
	\end{pgfonlayer}
	\begin{pgfonlayer}{edgelayer}
		\draw [in=75, out=180, looseness=1.00] (0.center) to (1.center);
		\draw (1) to (2.center);
		\draw [in=-75, out=180, looseness=1.00] (3.center) to (1);
		\draw [in=180, out=0, looseness=1.00] (3.center) to (6.center);
		\draw (0.center) to (7.center);
		\draw (6.center) to (9.center);
		\draw (5.center) to (8.center);
		\draw (4.center) to (10.center);
		\draw [in=180, out=0, looseness=1.00] (10.center) to (5.center);
	\end{pgfonlayer}
\end{tikzpicture}}
\tikzset{x=1em, y=1.5ex}
\quad \text{ and } \quad 
\tikzset{x=1em, y=2.1ex}
\begin{tikzpicture}[show background rectangle, {background rectangle/.style}={fill=backcolour}]
	\begin{pgfonlayer}{nodelayer}
		\node [style=none] (0) at (1, 1) {};
		\node [style=black] (1) at (-0.5, -0) {};
		\node [style=black] (2) at (-1.5, -0) {};
		\node [style=none] (3) at (1, -1) {};
		\node [style=none] (4) at (-2, 1) {};
		\node [style=none] (5) at (1.25, -0.5) {};
		\node [style=none] (6) at (1.25, -1) {};
		\node [style=none] (7) at (1.75, 1) {};
		\node [style=none] (8) at (1.75, -0.5) {};
		\node [style=none] (9) at (1.75, -1) {};
		\node [style=none] (10) at (-0.75, 1) {};
		\node [style=reg] (11) at (2, 1) {\normalsize $\procvar{f}$};
		\node [style=reg] (12) at (2, -0.75) {\normalsize $\procvar{g}$};
	\end{pgfonlayer}
	\begin{pgfonlayer}{edgelayer}
		\draw [in=75, out=180, looseness=1.00] (0.center) to (1.center);
		\draw (1) to (2.center);
		\draw [in=-75, out=180, looseness=1.00] (3.center) to (1);
		\draw [in=180, out=0, looseness=1.00] (3.center) to (6.center);
		\draw (0.center) to (7.center);
		\draw (6.center) to (9.center);
		\draw (5.center) to (8.center);
		\draw (4.center) to (10.center);
		\draw [in=180, out=0, looseness=1.00] (10.center) to (5.center);
	\end{pgfonlayer}
\end{tikzpicture}}
\tikzset{x=1em, y=1.5ex}
\Bdtrans{\zerodv}{0} 
\tikzset{x=1em, y=2.1ex}
\begin{tikzpicture}[show background rectangle, {background rectangle/.style}={fill=backcolour}]
	\begin{pgfonlayer}{nodelayer}
		\node [style=none] (0) at (2, 1.5) {};
		\node [style=black] (1) at (-0.5, -0) {};
		\node [style=black] (2) at (-1.5, -0) {};
		\node [style=none] (3) at (1, -1) {};
		\node [style=none] (4) at (-2, 1) {};
		\node [style=none] (5) at (1.25, -0.5) {};
		\node [style=none] (6) at (1.25, -1) {};
		\node [style=none] (7) at (1.75, -0.5) {};
		\node [style=none] (8) at (1.75, -1) {};
		\node [style=none] (9) at (-0.75, 1) {};
		\node [style=none] (10) at (4.75, -0.5) {};
		\node [style=none] (11) at (4.25, -0.5) {};
		\node [style=none] (12) at (4.75, 1.5) {};
		\node [style=none] (13) at (4.5, -0) {};
		\node [style=none] (14) at (2.25, 1.5) {};
		\node [style=black] (15) at (1.5, 0.5) {};
		\node [style=none] (16) at (4.75, -0) {};
		\node [style=black] (17) at (2.5, 0.5) {};
		\node [style=none] (18) at (4, -0.5) {};
		\node [style=none] (19) at (4, 1.5) {};
		\node [style=reg] (20) at (5, 1.5) {\normalsize $\procvar{f}$};
		\node [style=reg] (21) at (5, -0.25) {\normalsize $\procvar{g}$};
		\node [style=reg] (22) at (2, -0.75) {\normalsize $\procvar{g}$};
	\end{pgfonlayer}
	\begin{pgfonlayer}{edgelayer}
		\draw [in=75, out=180, looseness=1.00] (0.center) to (1.center);
		\draw (1) to (2.center);
		\draw [in=-75, out=180, looseness=1.00] (3.center) to (1);
		\draw [in=180, out=0, looseness=1.00] (3.center) to (6.center);
		\draw (6.center) to (8.center);
		\draw (5.center) to (7.center);
		\draw (4.center) to (9.center);
		\draw [in=180, out=0, looseness=1.00] (9.center) to (5.center);
		\draw [in=75, out=180, looseness=1.00] (19.center) to (17);
		\draw (17) to (15);
		\draw [in=-75, out=180, looseness=1.00] (18.center) to (17);
		\draw [in=180, out=0, looseness=1.00] (18.center) to (11.center);
		\draw (19.center) to (12.center);
		\draw (11.center) to (10.center);
		\draw (13.center) to (16.center);
		\draw [in=180, out=0, looseness=1.00] (14.center) to (13.center);
		\draw [in=180, out=0, looseness=1.25] (0.center) to (14.center);
	\end{pgfonlayer}
\end{tikzpicture}}
\tikzset{x=1em, y=1.5ex}
\Bdtrans{\zerodv}{0}\dots\]
\end{example}

\section{Related and Future Work}\label{sec:conclusion}
The terminology \emph{Hoare and Milner synchronisation} is used in Synchronised Hyperedge Replacement (SHR)~\cite{degano1987model,DBLP:journals/entcs/LaneseM06}. 
Our work is closely related to SHR: indeed, the prop
$\SCW(\Sigma)$ has arrows open hypergraphs, where hyperedges are labeled with elements of $\Sigma$~\cite{BonchiGKSZ16}. To define a coalgebra $\beta \colon \Sigma \to \Fbeh\SCW(\Sigma)$ is to specify a transition system for each label in $\Sigma$. Then, constructing the coalgebra $\beta^{\sharp} \colon \SCW(\Sigma) \to \Fbeh\SCW(\Sigma)$ from a distributive law amounts to giving a transition system to all hypergraphs according to some synchronisation policy (e.g.\ \`{a} la Hoare or \`{a} la Milner).
%
%
%
SHR systems equipped with Hoare and Milner synchronisation are therefore instances of our approach. A major difference 
is our focus on the algebraic aspects: e.g.\ since string diagrams can be regarded as syntax as well as  combinatorial entities, their syntactic nature allows for the bialgebraic approach, and simple inductive proofs. 
The operational rules in Figure~\ref{sos2} are also those of tile systems~\cite{Gadducci2000}. However, in the context of tiles, transitions are arrows of the vertical \emph{category}: this forces every state to perform at least one identity transition. For example, it is not possible to consider empty sets of transitions, which can be a useful feature in the string diagrammatic approach, see~\cite{BonchiLICS2019}.

Amongst the many other related models,
 it is worth mentioning bigraphs~\cite{milner2009space}. While also graphical, bigraphs can be nested hierarchically, a capability that we have not considered. Moreover, the behaviour functor $\Fbeh$ in~\S~\ref{sec:SOS} forces the labels and the arriving states to have the same sort as the starting states. Therefore, fundamental mobility mechanisms such as scope-extrusion cannot immediately be addressed within our framework. We are confident, however, that 
 the solid algebraic foundation we have laid here for the operational semantics of two-dimensional syntax will be needed to shed light on such concepts as hierarchical composition and mobility.
Some ideas may come from~\cite{brunihierarchical}.

\medskip
\noindent\textbf{Acknowledgements.} RP and FZ acknowledge support from EPSRC grant EP/R020604/1.



\bibliography{catBib3}

\appendix

\section{Additional Background on Bialgebraic Semantics}
\label{app:preliminaries}

\textbf{(Co)algebras for endofunctors.} Given a category $\catC$ and a functor $\funF\colon \catC \to \catC$, an \emph{$\funF$-algebra} is a pair $(X,\alpha)$ for an object $X$  and an arrow $\alpha\colon \funF X \to X$  in $\catC$. An $\funF$-algebra morphism $f\colon (X,\alpha) \to (X',\alpha')$ is an arrow $f \colon X \to X'$ such that $f\circ \alpha=\alpha' \circ \funF f$. The category of $\funF$-algebras and their morphisms is denoted by $\Alg{\funF}$.

Coalgebras are defined dually: an \emph{$\funF$-coalgebra} is a pair $(X,\beta)$ for an object $X$  and an arrow $\alpha\colon X \to \funF X$  in $\catC$; an $\funF$-coalgebra morphism $f\colon (X,\beta) \to (X',\beta')$ is an arrow $f \colon X \to X'$ such that $ \alpha' \circ f= \funF f \circ \alpha$. The category of $\funF$-coalgebras and their morphisms is denoted by $\Coalg{\funF}$. 

When  $\catC$ is $\SET$, coalgebras can be thought as state-based systems: an \emph{$\funF$-coalgebra} $(X,\beta)$ consists of a set of states $X$ and a ''transition'' function $\beta\colon X \to \funF X$. The functor $\funF$ define the type of the transitions: for instance, coalgebras for the functor $\funF X=2 \times X^A$ are deterministic automata (see Appendix~\ref{app:exampleautomata}). When a final object $(\Omega,\omega)$ exists in $\Coalg{\funF}$, this can be thought as a universe of all possible $\funF$-behaviours. The unique $\funF$-coalgebra morphism  $\sem{\cdot} \colon (X,\beta) \to (\Omega,\omega)$
is a function mapping every state of $X$ into its behaviour.

\medskip

\noindent \textbf{Monads.} An endofunctor $\funT$ forms a \emph{monad} when it is equipped with natural transformations $\eta \colon Id \to \funT$ and $\mu \colon \funT \funT \to \funT$ such that $\mu \circ \eta T = Id_{\funT} = \mu \circ T \eta$ and $\mu \circ T \mu = \mu \circ \mu T$. An \emph{Eilenberg-Moore algebra for a monad $(\mathcal{T} , \eta, \mu)$} is a $\funT$-algebra $\alpha \colon \funT X \to X$ such that 
$\alpha \circ \eta_X =id_X$ and $g \circ \mu_X=g \circ \funT g$. Morphisms are simply morphisms of $\funT$-algebras. The category of 
Eilenberg-Moore algebras for $\mathcal{T}$ and their morphisms is denoted by $\EM{\mathcal{T}}$. For the sake of brevity, when $\funT$ is a monad, $\funT$-algebra will not mean algebra for the endofunctor $\funT$, but Eilenberg-Moore algebra for the monad $\funT$.

A \emph{monad morphism}  from a monad $(\funT,\eta,\mu)$ to a monad $(\funT',\eta',\mu')$ on the same category $\catC$ is a natural transformation $\kappa \colon \funT \to \funT'$ such that $\kappa \circ \eta = \eta'$ and $\kappa \circ \mu = \mu' \circ \kappa \kappa$.

A recurrent monad in our work is the powerset monad $\Pow$ bounded by a cardinal $\kappa$. It maps a set $X$ to the set $\Pow X = \{U \mid U \subseteq X, \,\, U \textrm{ has cardinality at most }\kappa\}$ and a function $f\colon X\to Y$ to $\Pow f\colon \Pow X\to \Pow Y$, $\Pow f (U) = \{f(u) \mid u\in U\}$. The unit $\eta$ of $\Pow$ is given by singleton, i.e., $\eta(x) = \{x\}$ and the multiplication $\mu$ is given by union, i.e., $\mu(S) = \bigcup_{U \in S} U$ for $S \in \Pow\Pow X$. 

For instance, with $\kappa = \omega$, $\Powf$ is the \emph{finite} powerset monad, mapping a set to its finite subsets.

\smallskip

\noindent \textbf{Coproduct of GSOS specifications.} Suppose to have two functors $\mathcal{S}_1,\mathcal{S}_2 \colon \catC \to \catC$ modelling two syntaxes and a functor $\funF\colon \catC \to \catC$ modelling the coalgebraic behaviour, such that there are two GSOS specifications
$$\lambda_1\colon \mathcal{S}_1 \funF \To \funF\mathcal{S}_1^\dagger \; \text{ and }\; \lambda_2\colon \mathcal{S}_2 \funF \To \funF\mathcal{S}_2^\dagger\text{.}$$
Then we can construct a GSOS specification 
$$\lambda_1\cdot \lambda_2 \colon (\mathcal{S}_1+\mathcal{S}_2) \funF \To  \funF  (\mathcal{S}_1+\mathcal{S}_2)^{\dagger}$$
as follows
$$\xymatrix{
& \mathcal{S}_1 \funF \ar[d]_{\gamma_1} \ar[r]^{\lambda_1} & \funF \mathcal{S}_1^{\dagger} \ar[d]^{\funF \iota_1} \\
(\mathcal{S}_1+\mathcal{S}_2)\funF  \ar@{}[r]|{\cong}&  \mathcal{S}_1 \funF  +  \mathcal{S}_2 \funF \ar@{-->}[r]& \funF (\mathcal{S}_1+\mathcal{S}_2)^{\dagger} \\
& \mathcal{S}_2 \funF \ar[u]^{\gamma_1}  \ar[r]_{\lambda_2} & \funF \mathcal{S}_2^{\dagger} \ar[u]_{\funF \iota_1}
}
$$
In the above diagram, the dashed map is given by universal property of the coproduct $\mathcal{S}_1 \funF  +  \mathcal{S}_2 \funF$. The definitions of $\gamma_1$ and $\gamma_2$ are as follows. For $X \in \catC$, $\gamma_1(X)$ is the unique $X + \mathcal{S}_1$-algebra map from the initial $X + \mathcal{S}_1$-algebra $\mathcal{S}_1^{\dagger}(X)$ to $(\mathcal{S}_1+\mathcal{S}_2)^{\dagger}(X)$. Indeed, since $(\mathcal{S}_1+\mathcal{S}_2)^{\dagger}(X)$ is the initial $X+\mathcal{S}_1+\mathcal{S}_2$-algebra, it is in particular a $X + \mathcal{S}_1$-algebra by precomposition with the coproduct map $X + \mathcal{S}_1 \tr{} X+\mathcal{S}_1+\mathcal{S}_2$. The definition of $\gamma_2(X)$ is analogous, using the fact that $(\mathcal{S}_1+\mathcal{S}_2)^{\dagger}(X)$ is also a $X + \mathcal{S}_2$-algebra.

\section{An Example of Bialgebraic Semantics: Non-Deterministic Automata}\label{app:exampleautomata}

In this appendix, we illustrate bialgebraic semantics on a well-known case study, namely non-deterministic automata~\cite{SilvaBBR10}. The intention is to provide in full a basic example that may serve as a roadmap for the approach to string diagrams proposed in the main text. 

\smallskip
\noindent{\bf{Deterministic automata as coalgebras.}} 
Let $2$ be the set $\{0,1\}$ and $A$ an alphabet of symbols.
A deterministic automata (DA) is a pair  $(S, \langle o,t\rangle)$ where $S$ is a set of state and $\langle o,t \rangle \colon S \to 2 \times S^A$ consists of the \emph{output function} $o\colon S \to 2$, defining whether a state $x\in S$ is accepting ($o(x)=1$) or not ($o(x)=0$), and the \emph{transition function} $t\colon S \to S^A$ mapping each state $x\in S$ and each $a\in A$ the successor state $t(x)(a)$.

DAs are in one to one correspondence with coalgebras for the functor $\funF\colon \SET \to \SET$ defined as $\funF(X)=2\times X^A$. The set of all languages over the alphabet $A$, hereafter denoted by $2^{A^*}$, carries a final coalgebra. For each deterministic automaton $(S, \langle o,t\rangle)$, there is a unique coalgebra homomorphism $\sem{\cdot} \colon S \to 2^{A^*}$ assigning to each state in $x$ the language that it accepts (defined for all words $w\in A^*$ as $\sem{x}(\epsilon)=o(x)$ and $\sem{x}(aw)=\sem{t(x)(a)}(w)$).

\smallskip
\noindent
{\bf{Non deterministic automata give rise to bialgebras.}} 
A non deterministic automata (NDA) is a pair  $(S, \langle o,t\rangle)$ where $S$ and $o$ are like for deterministic automata, but the transition function $t$  has now type $S \to \Powf{S}^A$, where $\Powf$ is the finite powerset functor, see Appendix \ref{app:preliminaries}. $t$ maps each state $x\in S$ and each $a\in A$ to a (finite) set of possible successor states $t(x)(a)$.

Therefore NDA are coalgebras for the functor $\funF \Powf$ where $\funF$ is the functor for deterministic automata explained above. Traditionally the semantics of NDA is also defined in terms of languages, but the set $2^{A^*}$ does not carry a final coalgebra for $\funF \Powf$. The solution is to view NDAs as DAs, by a construction that in automata theory is usually called determinisation (or powerset construction). Categorically, this amounts to transform an NDA $\beta \colon S \to 2\times \Powf(S)^A$ into a DA $\beta^{\sharp} \colon \Powf(S) \to 2\times \Powf(S)^A$, i.e. an $\funF \Powf$-coalgebra into an $\funF$-coalgebra. For the determinised NDA $\beta^{\sharp}$, there exists a unique $\Fbeh$-coalgebra homomorphism $\sem{\cdot} \colon \Powf(S) \to 2^{A^*}$. This yields language semantics $S \to 2^{A^*}$ for the original statespace $S$, by precomposing with the unit for the monad $\eta_S \colon S \to \Powf(S)$, as in the diagram below.
\begin{equation}\label{eq:NDAexample}
\vcenter{
\xymatrix{
S \ar[r]^{\eta} \ar[dd]_{\beta} & \Powf(S) \ar[ddl]|{\beta^{\sharp}} \ar@{-->}[rr]^{\sem{\cdot}} & & 2^{A^*} \ar[dd]\\
\\
\funF\Powf(S) \ar[rrr]_{\funF(\sem{\cdot})} & & & \funF\Powf(2^{A^*})
}
}
\end{equation}
In automata theory, there is a concrete way of constructing $\beta^{\sharp}$ from $\beta$. In the categorical abstraction, the general principle underpinning this construction is that defining $\beta^{\sharp}$ requires the introduction of a \emph{distributive law} $\lambda \colon\Powf \Fbeh \Rightarrow\Fbeh \Powf$. For all sets $X$, $\lambda_X\colon \Powf(2 \times S^A) \to 2\times \Powf(X)^A$ is defined for any $b_1, \dots, b_n \in 2$ and for all $\phi_1,\dots, \phi_n \in S^A$,
\begin{equation}\label{eq:lawdet}\lambda_X (\{\langle b_1,\phi_1\rangle , \dots , \langle b_n,\phi_n\rangle \}) = \langle \bigsqcup_{i \in 1\dots n} b_i, \bigsqcup_{i \in 1\dots n}  \phi_i \rangle  \end{equation}
where the leftmost $\sqcup$ is the obvious join in $2$ (regarded as the partial order $0\sqsubseteq 1$), and the rightmost one is defined as the point-wise union: namely for all $a\in A$, $\bigsqcup_{i \in 1\dots n}  \phi_i (a)=\{\phi_1(a), \dots , \phi_n(a)\}$. Observe that, in the particular case of the empty set $\emptyset$, $\alpha(\emptyset)=\langle 0, \phi_{\emptyset}\rangle$ where $\phi_{\emptyset}(a)=\emptyset$ for all $a\in A$.

Given $\lambda$, $\beta^{\sharp}$ is defined as $\Powf(S) \tr{\Powf{\beta}} \Powf\Fbeh\Powf(S) \tr{\lambda_{\Powf(S)}} \Fbeh\Powf\Powf(S) \tr{\Fbeh\mu} \Fbeh\Powf(S)$. The reader could also check that, with this definition, $\beta^{\sharp}$ is a $\lambda$-bialgebra, where the algebraic structure on the state space $\Powf(S)$ is given by the multiplication $\mu_S \colon \Powf\Powf(S) \to \Powf(S)$. 

\smallskip

\noindent{\bf{Algebraic presentation of $\Powf$.}} In concrete, determinisation is about obtaining the language semantics of NDAs. The essence of its categorical abstraction is moving from coalgebras to bialgebras, thus taking into account the \emph{algebraic} structure of the statespace $\Powf(S)$. Indeed, it is well-known that the monad $\Powf$ is presented by the algebraic theory of join semilattice with bottom: these consist of a set $S$ equipped with an operation $\otimes \colon S \times S \to S$ and an element $\bm{0} \colon 1\to S$ such that 
\begin{equation} \label{eq:axiomssemilattices}
(x\otimes y) \otimes z = x \otimes (y \otimes z) \quad x\otimes y= y \otimes z \quad x\otimes x=x \quad x\otimes \bm{0}=x 
\end{equation}
for all $x,y,z\in S$. 
To make formal the link with $\Powf$, let us introduce the functor $\mathcal{T}_{\mathcal{S}}$ which maps every set $S$ to the set $\mathcal{T}_{\mathcal{S}}(S)$ of terms built with $\otimes$, $\bm{0}$ and variables in $S$. This functor carries the structure of a monad, where the unit $\eta_S\colon S \to \mathcal{T}_{\mathcal{S}}(S)$is given by inclusion (each variable is a term) and the multiplication $\mu_S \colon \mathcal{T}_{\mathcal{S}}\mathcal{T}_{\mathcal{S}}(S) \to \mathcal{T}_{\mathcal{S}}(S)$ by substitution.
The fact that $\Powf$ is presented by the algebraic theory of join semilattice with bottom means that $\Powf$ is (isomorphic to) the quotient of the monad $\mathcal{T}_{\mathcal{S}}$ by equations \eqref{eq:axiomssemilattices}.

\smallskip
\noindent{\bf{Modular construction of the distributive law.}} The monad $\Powf$ is associated with a rather elementary algebraic theory, which allows $\lambda$ to be defined in a rather straightforward way. The monads that we use in the main text to encode categorical structures are way more involved. For this reason, it is important to modularise the task of finding a distributive law. Again, we are going to use $\lambda \colon\Powf \funF \Rightarrow \funF \Powf$ in \eqref{eq:lawdet} as a proof of concept, and show how its definition can be divided into different stages. 

By the above discussion, $\Powf$ is the quotient of $\mathcal{T}_{\mathcal{S}}$ by equations \eqref{eq:axiomssemilattices}, but we can decompose this even further: $\mathcal{T}_{\mathcal{S}}$ is the free monad (see \S~\ref{sec:background})  $(\mathcal{S}_1+\mathcal{S}_2)^{\dagger}$ over the endofunctor expressing the semilattice signature: here $\mathcal{S}_1 \colon S \mapsto \{\bm{0}\}$ stands for the element $\bm{0}$, and $\mathcal{S}_2 \colon S \mapsto S \times S$ encodes the operation $\otimes$.

We can now divide the construction of $\lambda$ in three  stages:
\begin{description}
	\item[Distributive law on the signature.] The first step is to give separate distributive laws of $\mathcal{S}_1$ and $\mathcal{S}_2$ over $\Fbeh$. The first one is $\lambda^1 \colon \mathcal{S}_1\Fbeh \Rightarrow \Fbeh \mathcal{S}_1$. Given a set $S$, we let $\lambda^1_S \colon \{\bm{0}\} \to 2 \times \{\bm{0}\}^A$ map $\bm{0}$ to the pair $\langle 0, ! \rangle$ where $!$ is the unique function of type $ A \to \{\bm{0}\}$. This distributive law can also be conveniently presented as the following pair of derivation rules. The first describes the element $0$ (= $\bm{0}$ is not a terminating state) of the pair $\lambda^1_S\{\bm{0}\}$; the second describes the element $! \colon A \to \{0\}$ of the pair $\lambda^1_S\{\bm{0}\}$.
	\begin{equation}\label{eq:NDA-DerRule1}
\derivationRule{}
{\bm{0} \not \downarrow }{} 	 \qquad  \qquad 	\derivationRule{a \in A}
{\bm{0} \dtrans{a}{} \bm{0}}{}
	\end{equation}

The second distributive law is $\lambda^1 \colon \mathcal{S}_2\Fbeh \to \Fbeh \mathcal{S}_2$. Given a set $S$, we let $\lambda^2_S \colon (2\times S^A) \times (2\times S^A)  \to 2 \times (S\times S)^A$ be defined for all $\langle b_1,\phi_1\rangle, \langle b_2,\phi_2\rangle \in  2\times S^A$ as $\lambda^2 (\langle b_1,\phi_1\rangle, \langle b_2,\phi_2\rangle) = \langle b_1\sqcup b_2, \langle \phi_1, \phi_2\rangle \rangle$. The representation in terms of derivation rules is:
	\begin{equation}\label{eq:NDA-DerRule2}
\derivationRule{s_1 \downarrow}
{ s_1 \otimes s_2 \downarrow }{} 	 
\qquad  \qquad
\derivationRule{s_2 \downarrow}
{ s_1 \otimes s_2 \downarrow }{} 	 
\qquad  \qquad
 	\derivationRule{s_1 \dtrans{a}{} s_1' \quad s_2 \dtrans{a}{} s_2'}
{s_1 \otimes s_2 \dtrans{a}{} s_1' \otimes s_2'}{} 
	\end{equation}

	\item[Distributive law on the free monad.] We can now put together $\lambda^1$ and $\lambda^2$ to obtain a distributive law of type $\colon \mathcal{T}_{\mathcal{S}} \funF \Rightarrow \funF \mathcal{T}_{\mathcal{S}}$. First, by post-composing $\lambda^1$ and $\lambda^2$ with the unit of the free monads $\eta\colon \mathcal{S}_i \Rightarrow \mathcal{S}_i^{\dagger}$ associated with endofunctors $\mathcal{S}_i$, one obtains two GSOS specifications   $\overline{\lambda^1}\colon \mathcal{S}_1 \funF \Rightarrow \funF\mathcal{S}_1^{\dagger}$ and 
$\overline{\lambda^2}\colon \mathcal{S}_2 \funF \Rightarrow \funF\mathcal{S}_2^{\dagger}$. As explained in Appendix \ref{app:preliminaries}, one can take the coproduct of GSOS specification, so to obtain $\overline{\lambda} \colon \mathcal{S} \funF \Rightarrow \funF \mathcal{S}^{\dagger}$ and thus, by Proposition \ref{prop:GSOSToDistrLaw} , a distributive law $\overline{\lambda}^{\dagger} \colon \mathcal{S}^{\dagger} \funF \Rightarrow \funF \mathcal{S}^{\dagger}$. Since $\mathcal{S}^{\dagger} = \mathcal{T}_{\mathcal{S}}$, this distributive law is effectively of the desired type $\mathcal{T}_{\mathcal{S}} \funF \Rightarrow \funF \mathcal{T}_{\mathcal{S}}$.
	\item[Distributive law on the quotient.] As $\Powf$ is a quotient of $\mathcal{T}_{\mathcal{S}}$ by equations \eqref{eq:axiomssemilattices}, the last step is quotienting the $\overline{\lambda}^{\dagger} \colon \mathcal{T}_{\mathcal{S}} \funF \Rightarrow \funF \mathcal{T}_{\mathcal{S}}$ by such equations. There is a categorical approach, described in \S~\ref{sec:background}, which allows to turn $\overline{\lambda}^{\dagger}$ into a distributive law $\lambda$ on the quotient of $\mathcal{T}_{\mathcal{S}}$, provided that $\overline{\lambda}^{\dagger}$ is compatible with  equations  \eqref{eq:axiomssemilattices}. This is indeed the case for $\overline{\lambda}^{\dagger}$, which thus yields by Proposition \ref{prop:distrlawquotient} a distributive law $\lambda \colon\Powf \funF \Rightarrow \funF \Powf$ of the desired type. The definition \eqref{eq:lawdet} of $\lambda$ given above can be equivalently described by the derivation rules \eqref{eq:NDA-DerRule1}-\eqref{eq:NDA-DerRule2}, modulo the algebraic representation of $\Powf$ in terms of join semilattices.
\end{description}

\section{More Details on Bialgebraic Semantics of String Diagrams}

\subsection{Existence of Final Coalgebras}
\label{sec:final}

\begin{proposition}\label{prop:finalcoalg}
	There exists a final coalgebra $\Omega \to \Fbeh(\Omega)$.
\end{proposition}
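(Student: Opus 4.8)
The plan is to present $\Fbeh = \overline{\Pow}(L\poi Id \poi L)$ as an \emph{accessible} endofunctor of $\SIG$ and then appeal to the standard fact that an accessible endofunctor on a locally presentable category admits a final coalgebra. The ambient category cooperates: since $\SIG \cong \SET^{\N\times\N}$ is a presheaf category over a discrete, hence small, index, it is complete, cocomplete and locally finitely presentable, and both limits and colimits are computed pointwise in $\SET$. So the whole task reduces to checking accessibility of $\Fbeh$.

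For this I would split $\Fbeh$ as $\overline{\Pow}\circ G$ with $G \df L\poi Id \poi L$ and treat the two factors separately. Unfolding the definition of sequential composition of spans, $G$ acts pointwise by $G(\Sigma)(n,m) = A^{n}\times \Sigma(n,m)\times A^{m}$; at each sort this is just the product with a fixed set, and since $-\times K$ is a left adjoint on $\SET$ for every $K$ and colimits in $\SET^{\N\times\N}$ are pointwise, $G$ preserves all colimits. The outer functor $\overline{\Pow}$ is post-composition with the $\kappa$-bounded powerset $\Pow$, so, colimits being pointwise, $\overline{\Pow}$ preserves $\mu$-filtered colimits exactly when $\Pow$ does; and $\Pow$ preserves $\kappa^{+}$-filtered colimits because any subset of cardinality at most $\kappa$ of a $\kappa^{+}$-filtered colimit in $\SET$ factors through one of the stages of the diagram. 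This is exactly where boundedness of $\Pow$ is indispensable, cf.\ the footnote in \S~\ref{sec:functors}: the full powerset functor is not accessible and, by Lambek's lemma, has no final coalgebra. Combining the two factors, $\Fbeh = \overline{\Pow}\circ G$ preserves $\kappa^{+}$-filtered colimits and is therefore accessible.

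Given accessibility, I would conclude with the general principle that an accessible endofunctor on a locally presentable category has a final coalgebra --- concretely, the transfinite terminal sequence $1 \leftarrow \Fbeh 1 \leftarrow \Fbeh^{2}1 \leftarrow \cdots$ stabilises, and its stationary value carries the final $\Fbeh$-coalgebra; equivalently, $\Coalg{\Fbeh}$ is itself locally presentable, hence complete, hence has a terminal object. I expect the only real obstacle to be the accessibility bookkeeping around $\Pow$: identifying the regular cardinal that controls everything (in the running example $\kappa = |\Rig|$ and $A = \Rig$) and verifying that post-composition and the span construction genuinely transport accessibility pointwise from $\SET$ to $\SET^{\N\times\N}$ --- routine once the pointwise description of (co)limits in $\SIG$ is in hand, but worth spelling out.
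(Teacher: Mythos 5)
Your proof is correct and shares the paper's overall strategy --- reduce everything to accessibility of $\Fbeh$, checked componentwise via the pointwise description of (co)limits in $\SIG \cong \SET^{\N\times\N}$ --- but the two arguments part ways at the last step. The paper invokes Worrell's terminal-sequence construction \cite{Worrell99}, which requires a second condition besides accessibility, namely that $\Fbeh$ preserve monomorphisms; accordingly the paper also verifies mono preservation for each component ($(-\poi-)$ is a pullback, $\overline{\Pow}$ preserves monos for the same reason the powerset functor does, and so on). You instead conclude from the fact that $\Coalg{\Fbeh}$ is accessible (as an inserter of accessible functors) and cocomplete, hence locally presentable, hence complete, hence has a terminal object; this genuinely dispenses with the mono condition and is the more robust route. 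One caveat: your first-mentioned justification, that ``the transfinite terminal sequence stabilises,'' is not an automatic consequence of accessibility alone in a general locally presentable category --- the convergence argument in \cite{Worrell99} is precisely where mono preservation enters --- so it is not literally ``equivalent'' to the local-presentability argument. If you want the concrete description of $\Omega$ as a set of $\kappa$-branching labelled trees, which the paper extracts from the terminal sequence and uses in the remark following the proposition, you would still need to check that $\Fbeh$ preserves monos (easy here, since everything is computed pointwise in $\SET$). Finally, your observation that $L \poi Id \poi L$ acts as $\Sigma(n,m) \mapsto A^n \times \Sigma(n,m) \times A^m$ and is therefore cocontinuous is a nice sharpening of the paper's appeal to pullbacks commuting with filtered colimits.
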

\begin{proof}
    Being isomorphic to a presheaf category, $\SIG$ is accessible. Then one may construct $\Omega$ via a so-called terminal sequence \cite{Worrell99}, provided that  (i) $\Fbeh$ preserves monomorphisms and (ii) is accessible. Because functor composition preserve these properties, it suffices to check them componentwise on $\Fbeh = \overline{\Pow}(L \poi Id \poi L)$. First, $(- \poi -)$ is defined by pullbacks, which preserve monomorphisms and (filtered) colimits; it follows that $(- \poi -)$ satisfies (i) and (ii). Next, the functor $\overline{\Pow}$ satisfies (i) for the same reason as the arbitrary powerset functor, and it satisfies (ii) because it is a $\kappa$-bounded functor. Finally, satisfaction of (i) and (ii) is completely obvious for $Id$ and the constant functor $L$. 
\end{proof}

Note that the proof of Proposition~\ref{prop:finalcoalg} is constructive: following \cite{Worrell99}, for a fixed $(n,m) \in \N \times \N$, we can visualise the elements of $\Omega$ as $\kappa$-branching trees with edges labelled with pairs of labels $(l_1,l_2) \in L \times L$ with $|l_1| = n, |l_2| = m$. The unique $\Fbeh$-coalgebra map $\sem{\cdot}_{\beta} \colon \SPROP(\Sigma) \to \Omega$ sends a $\Sigma$-term $t$ to the tree whose branching describes all the executions from $t$ according to the transition rules defined by $\beta$.

\subsection{Failure of Compositionality for Lawvere Theories}
\label{sec:Lawvere-fail}

We can define a monad $\SL$ for Lawvere theories following the usual recipe. The signature of a Lawvere theory is that of a prop with the additional comonoids. As for the comonoid part of Frobenius bimonoids (Fig.\ref{fig:Frob}), we only need to introduce this comonoid on the generating object $1$ since all the others can be obtained by parallel and sequential composition of these basic morphisms and the symmetries. Let $\Bcomult\from \SIG\to\SIG$ and $\Bcounit\from \SIG\to\SIG$ be the constant functors defined exactly as in~\S~\ref{sec:cw-monad}. 
$\SL$ is is then obtained as the quotient of $\SPROP + \Bcomult+\Bcounit$ by the defining equations of cocommutative comonoids (first line of Fig.~\ref{fig:Frob}), and the requirement that the comonoids be natural in the sense that every morphism $f\from m\to n$ should be a comonoid homomorphism:
\begin{equation}
  \label{eq:natural-comonoid}
  
\tikzset{x=1em, y=2.1ex}
\begin{tikzpicture}[show background rectangle, {background rectangle/.style}={fill=backcolour}, scale=.7, baseline=-.5ex]
	\begin{pgfonlayer}{nodelayer}
		\node [style=black] (0) at (-0.75, -0) {};
		\node [style=none] (1) at (-2, -0) {};
		\node [style=none] (2) at (1, -1.25) {};
		\node [style=none] (3) at (1, 1.25) {};
		\node [style=none] (4) at (1.5, -1.25) {};
		\node [style=none] (5) at (1.5, 1.25) {};
		\node [style=box] (6) at (-2.5, -0) {$f$};
		\node [style=none] (7) at (-3, -0) {};
		\node [style=none] (8) at (-4.5, -0) {};
		\node [style=none] (9) at (-4.25, 0.5) {\scriptsize $m$};
		\node [style=none] (10) at (1.25, -0.75) {\scriptsize $n$};
		\node [style=none] (11) at (1.25, 1.75) {\scriptsize $n$};
	\end{pgfonlayer}
	\begin{pgfonlayer}{edgelayer}
		\draw (0) to (1.center);
		\draw [in=75, out=180, looseness=1.00] (3.center) to (0);
		\draw [in=-75, out=180, looseness=1.00] (2.center) to (0);
		\draw (3.center) to (5.center);
		\draw (2.center) to (4.center);
		\draw (8.center) to (7.center);
	\end{pgfonlayer}
\end{tikzpicture}}
\tikzset{x=1em, y=1.5ex}
\;\; \approx\;\; 
\tikzset{x=1em, y=2.1ex}
\begin{tikzpicture}[show background rectangle, {background rectangle/.style}={fill=backcolour}, scale=.7, baseline=-.5ex]
	\begin{pgfonlayer}{nodelayer}
		\node [style=black] (0) at (-0.75, -0) {};
		\node [style=none] (1) at (2, -1.25) {};
		\node [style=none] (2) at (1, 1.25) {};
		\node [style=none] (3) at (1, -1.25) {};
		\node [style=none] (4) at (3.5, 1.25) {};
		\node [style=none] (5) at (3.5, -1.25) {};
		\node [style=none] (6) at (2, 1.25) {};
		\node [style=none] (7) at (-2.75, -0) {};
		\node [style=box] (8) at (1.5, 1.25) {\scriptsize $f$};
		\node [style=box] (9) at (1.5, -1.25) {\scriptsize $f$};
		\node [style=none] (10) at (3.25, 1.75) {\scriptsize $n$};
		\node [style=none] (11) at (3.25, -0.75) {\scriptsize $n$};
		\node [style=none] (12) at (-2.5, 0.5) {\scriptsize $m$};
	\end{pgfonlayer}
	\begin{pgfonlayer}{edgelayer}
		\draw (6.center) to (4.center);
		\draw [in=75, out=180, looseness=1.00] (2.center) to (0);
		\draw [in=-75, out=180, looseness=1.00] (3.center) to (0);
		\draw (7.center) to (0);
		\draw (1.center) to (5.center);
	\end{pgfonlayer}
\end{tikzpicture}}
\tikzset{x=1em, y=1.5ex}
 \qquad\qquad 
\tikzset{x=1em, y=2.1ex}
\begin{tikzpicture}[show background rectangle, {background rectangle/.style}={fill=backcolour}, scale=.7, baseline=-.5ex]
	\begin{pgfonlayer}{nodelayer}
		\node [style=black] (0) at (2, -0) {};
		\node [style=none] (1) at (0.5, -0) {};
		\node [style=box] (2) at (0, -0) {\scriptsize $f$};
		\node [style=none] (3) at (-0.5, -0) {};
		\node [style=none] (4) at (-2, -0) {};
		\node [style=none] (5) at (-1.75, 0.5) {\scriptsize $m$};
	\end{pgfonlayer}
	\begin{pgfonlayer}{edgelayer}
		\draw (0) to (1.center);
		\draw (4.center) to (3.center);
	\end{pgfonlayer}
\end{tikzpicture}}
\tikzset{x=1em, y=1.5ex}
\; \approx\; \Bcounitn{\scriptsize $m$}
\end{equation}
The intuitive interpretation is that every morphism can be copied and deleted using the comonoid structure.





To give a GSOS specification for $\SL$ it is natural to extend the specification for props with specifications for the comonoids. These are entirely analogous to the black interpretation of the comonoid part of CW props as copying and deleting operations as given in Fig.~\ref{sos1}. 
However, this specification turns out to be incompatible with the required naturality of comonoids. To see this, consider the following counter-example.

Let $\Sigma$ be the signature with a single symbol $d\from 0\to 1$. We specify its behaviour as the coalgebra $\beta\from \Sigma\to \Fbeh\SL\Sigma$ given by the two transitions
\[d \dtrans{\epsilon}{a} d \qquad d \dtrans{\epsilon}{b} d\]
Note the use of nondeterminism here: it is this form of nondeterminism combined with the naturality requirement of~\eqref{eq:natural-comonoid} that will lead to a contradiction. The derivation rules of \S~\ref{sec:comp-prop} give us only two possible transitions for $d\poi\Bcomult$:
\[\derivationRule{\diagEffect{}{d}\dtrans{\varepsilon}{a} \diagEffect{}{d} \quad \Bcomult\dtrans{a}{a \labelSep a} \Bcomult}
{
\tikzset{x=1em, y=2.1ex}
\begin{tikzpicture}[{background rectangle/.style}={fill=backcolour}, show background rectangle, scale=.7, baseline=-.5ex]
	\begin{pgfonlayer}{nodelayer}
		\node [style=black] (0) at (-0.75, -0) {};
		\node [style=none] (1) at (-1.75, -0) {};
		\node [style=none] (2) at (1, -1) {};
		\node [style=none] (3) at (1, 1) {};
		\node [style=coreg] (4) at (-2.5, -0) {\scriptsize $d$};
	\end{pgfonlayer}
	\begin{pgfonlayer}{edgelayer}
		\draw (0) to (1.center);
		\draw [in=75, out=180, looseness=1.00] (3.center) to (0);
		\draw [in=-75, out=180, looseness=1.00] (2.center) to (0);
	\end{pgfonlayer}
\end{tikzpicture}}
\tikzset{x=1em, y=1.5ex}
 \dtrans {\varepsilon}{a \labelSep a} 
\tikzset{x=1em, y=2.1ex}
\begin{tikzpicture}[{background rectangle/.style}={fill=backcolour}, show background rectangle, scale=.7, baseline=-.5ex]
	\begin{pgfonlayer}{nodelayer}
		\node [style=black] (0) at (-0.75, -0) {};
		\node [style=none] (1) at (-1.75, -0) {};
		\node [style=none] (2) at (1, -1) {};
		\node [style=none] (3) at (1, 1) {};
		\node [style=coreg] (4) at (-2.5, -0) {\scriptsize $d$};
	\end{pgfonlayer}
	\begin{pgfonlayer}{edgelayer}
		\draw (0) to (1.center);
		\draw [in=75, out=180, looseness=1.00] (3.center) to (0);
		\draw [in=-75, out=180, looseness=1.00] (2.center) to (0);
	\end{pgfonlayer}
\end{tikzpicture}}
\tikzset{x=1em, y=1.5ex}
}{\lambdaseqcomp} \qquad \derivationRule{\diagEffect{}{d}\dtrans{\varepsilon}{b} \diagEffect{}{d} \quad \Bcomult\dtrans{b}{b \labelSep b} \Bcomult}
{
\tikzset{x=1em, y=2.1ex}
\begin{tikzpicture}[{background rectangle/.style}={fill=backcolour}, show background rectangle, scale=.7, baseline=-.5ex]
	\begin{pgfonlayer}{nodelayer}
		\node [style=black] (0) at (-0.75, -0) {};
		\node [style=none] (1) at (-1.75, -0) {};
		\node [style=none] (2) at (1, -1) {};
		\node [style=none] (3) at (1, 1) {};
		\node [style=coreg] (4) at (-2.5, -0) {\scriptsize $d$};
	\end{pgfonlayer}
	\begin{pgfonlayer}{edgelayer}
		\draw (0) to (1.center);
		\draw [in=75, out=180, looseness=1.00] (3.center) to (0);
		\draw [in=-75, out=180, looseness=1.00] (2.center) to (0);
	\end{pgfonlayer}
\end{tikzpicture}}
\tikzset{x=1em, y=1.5ex}
 \dtrans {\varepsilon}{b \labelSep b} 
\tikzset{x=1em, y=2.1ex}
\begin{tikzpicture}[{background rectangle/.style}={fill=backcolour}, show background rectangle, scale=.7, baseline=-.5ex]
	\begin{pgfonlayer}{nodelayer}
		\node [style=black] (0) at (-0.75, -0) {};
		\node [style=none] (1) at (-1.75, -0) {};
		\node [style=none] (2) at (1, -1) {};
		\node [style=none] (3) at (1, 1) {};
		\node [style=coreg] (4) at (-2.5, -0) {\scriptsize $d$};
	\end{pgfonlayer}
	\begin{pgfonlayer}{edgelayer}
		\draw (0) to (1.center);
		\draw [in=75, out=180, looseness=1.00] (3.center) to (0);
		\draw [in=-75, out=180, looseness=1.00] (2.center) to (0);
	\end{pgfonlayer}
\end{tikzpicture}}
\tikzset{x=1em, y=1.5ex}
}{\lambdaseqcomp} 
\]
whereas $f\tns f$ can also perform the following transitions:
\[\derivationRule{\diagEffect{}{d}\dtrans{\varepsilon}{a} \diagEffect{}{d} \quad \diagEffect{}{d}\dtrans{\varepsilon}{b} \diagEffect{}{d}}
{
\tikzset{x=1em, y=2.1ex}
\begin{tikzpicture}[show background rectangle, {background rectangle/.style}={fill=backcolour}]
	\begin{pgfonlayer}{nodelayer}
		\node [style=coreg] (0) at (-0.5, 0.75) {\scriptsize $d$};
		\node [style=none] (1) at (1.5, 0.75) {};
		\node [style=coreg] (2) at (-0.5, -0.75) {\scriptsize $d$};
		\node [style=none] (3) at (1.5, -0.75) {};
	\end{pgfonlayer}
	\begin{pgfonlayer}{edgelayer}
		\draw (1.center) to (0);
		\draw (3.center) to (2);
	\end{pgfonlayer}
\end{tikzpicture}}
\tikzset{x=1em, y=1.5ex}
\dtrans{\varepsilon}{a \labelSep b} 
\tikzset{x=1em, y=2.1ex}
\begin{tikzpicture}[show background rectangle, {background rectangle/.style}={fill=backcolour}]
	\begin{pgfonlayer}{nodelayer}
		\node [style=coreg] (0) at (-0.5, 0.75) {\scriptsize $d$};
		\node [style=none] (1) at (1.5, 0.75) {};
		\node [style=coreg] (2) at (-0.5, -0.75) {\scriptsize $d$};
		\node [style=none] (3) at (1.5, -0.75) {};
	\end{pgfonlayer}
	\begin{pgfonlayer}{edgelayer}
		\draw (1.center) to (0);
		\draw (3.center) to (2);
	\end{pgfonlayer}
\end{tikzpicture}}
\tikzset{x=1em, y=1.5ex}
}{\lambdaparcomp} \qquad \derivationRule{\diagEffect{}{d}\dtrans{\varepsilon}{b} \diagEffect{}{d} \quad \diagEffect{}{d}\dtrans{\varepsilon}{a} \diagEffect{}{d}}
{
\tikzset{x=1em, y=2.1ex}
\begin{tikzpicture}[show background rectangle, {background rectangle/.style}={fill=backcolour}]
	\begin{pgfonlayer}{nodelayer}
		\node [style=coreg] (0) at (-0.5, 0.75) {\scriptsize $d$};
		\node [style=none] (1) at (1.5, 0.75) {};
		\node [style=coreg] (2) at (-0.5, -0.75) {\scriptsize $d$};
		\node [style=none] (3) at (1.5, -0.75) {};
	\end{pgfonlayer}
	\begin{pgfonlayer}{edgelayer}
		\draw (1.center) to (0);
		\draw (3.center) to (2);
	\end{pgfonlayer}
\end{tikzpicture}}
\tikzset{x=1em, y=1.5ex}
\dtrans{\varepsilon}{b \labelSep a} 
\tikzset{x=1em, y=2.1ex}
\begin{tikzpicture}[show background rectangle, {background rectangle/.style}={fill=backcolour}]
	\begin{pgfonlayer}{nodelayer}
		\node [style=coreg] (0) at (-0.5, 0.75) {\scriptsize $d$};
		\node [style=none] (1) at (1.5, 0.75) {};
		\node [style=coreg] (2) at (-0.5, -0.75) {\scriptsize $d$};
		\node [style=none] (3) at (1.5, -0.75) {};
	\end{pgfonlayer}
	\begin{pgfonlayer}{edgelayer}
		\draw (1.center) to (0);
		\draw (3.center) to (2);
	\end{pgfonlayer}
\end{tikzpicture}}
\tikzset{x=1em, y=1.5ex}
}{\lambdaparcomp} 
\]
This is in contradiction with the leftmost equation of~\eqref{eq:natural-comonoid} which requires that $$
\tikzset{x=1em, y=2.1ex}
\begin{tikzpicture}[{background rectangle/.style}={fill=backcolour}, show background rectangle, scale=.7, baseline=-.5ex]
	\begin{pgfonlayer}{nodelayer}
		\node [style=black] (0) at (-0.75, -0) {};
		\node [style=none] (1) at (-1.75, -0) {};
		\node [style=none] (2) at (1, -1) {};
		\node [style=none] (3) at (1, 1) {};
		\node [style=coreg] (4) at (-2.5, -0) {\scriptsize $d$};
	\end{pgfonlayer}
	\begin{pgfonlayer}{edgelayer}
		\draw (0) to (1.center);
		\draw [in=75, out=180, looseness=1.00] (3.center) to (0);
		\draw [in=-75, out=180, looseness=1.00] (2.center) to (0);
	\end{pgfonlayer}
\end{tikzpicture}}
\tikzset{x=1em, y=1.5ex}
 \approx 
\tikzset{x=1em, y=2.1ex}
\begin{tikzpicture}[show background rectangle, {background rectangle/.style}={fill=backcolour}]
	\begin{pgfonlayer}{nodelayer}
		\node [style=coreg] (0) at (-0.5, 0.75) {\scriptsize $d$};
		\node [style=none] (1) at (1.5, 0.75) {};
		\node [style=coreg] (2) at (-0.5, -0.75) {\scriptsize $d$};
		\node [style=none] (3) at (1.5, -0.75) {};
	\end{pgfonlayer}
	\begin{pgfonlayer}{edgelayer}
		\draw (1.center) to (0);
		\draw (3.center) to (2);
	\end{pgfonlayer}
\end{tikzpicture}}
\tikzset{x=1em, y=1.5ex}
$$.

\section{Proof of Theorem \ref{thmHMbw}}

\begin{proposition} Let $n \vdash P$.
\begin{enumerate}
\item If $P\Htr{\alpha} P'$ then $\semProc{n}{P}\Bdtrans{\zerodv}{n\vdash \alpha}\semProc{n}{P'}$
\item If $P\Mtr{\alpha}P'$ then $\semProc{n}{P}\Wdtrans{\zerodv}{n \vdash \alpha}\semProc{n}{P'}$
\end{enumerate}
\end{proposition}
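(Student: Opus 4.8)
The plan is to prove both items by induction on the derivation of the labelled transition $P\tr{\alpha}P'$, using the rules of \eqref{eq:opgeneral} together with those of \eqref{eq:Hoare} for item~(1) and those of \eqref{eq:Milner} for item~(2). For item~(1) I would strengthen the claim so that its conclusion reads $\enc{n\vdash P}\Bdtrans{\zerodv}{n\vdash\gamma}\enc{n\vdash P'}$ for \emph{every} $\gamma\colon\{a_1,\dots,a_n\}\to 2$ agreeing with $\alpha$ on $\Alphabet(P)$, not only for $\gamma=n\vdash\alpha$; this extra freedom off $\Alphabet(P)$ is what makes the parallel rule go through. For item~(2) the claim as written already serves as a workable invariant, the left label being always exactly $n\vdash\alpha$. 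As a preliminary step I would prove, by induction on the typing derivation \eqref{eq:typingdiscipline} and using the counit laws of the Frobenius structure, that $\enc{(n{+}1)\vdash P}\stcong(\id_n\tns\Gcounit)\poi\enc{n\vdash P}$ whenever both typings are valid; this settles the weakening rule uniformly, since under $\beta_b^{\sharp}$ the gray counit $\Gcounit$ absorbs an arbitrary value whereas under $\beta_w^{\sharp}$ it forces the value $0$, which matches the fact that weakening a transition only adds a coordinate $a_{n+1}$ on which $\alpha$ vanishes (as $\Alphabet(\alpha)\subseteq\Alphabet(P)\subseteq\{a_1,\dots,a_n\}$). Throughout, a transition out of $\enc{n\vdash P}$ is computed from $\beta_b^{\sharp}$ (resp.\ $\beta_w^{\sharp}$); since $\enc{\cdot}$ is built solely from the prop operations and the gray Frobenius generators, this reduces to combining the rules of Fig.~\ref{sos2} with the black (resp.\ white) column of Fig.~\ref{sos1}, and all such manipulations are well defined on $\stcong$-classes because $\lambda_{\bullet/\scriptscriptstyle{\mathsf{CW}}}^{\dagger}$ and $\lambda_{\circ/\scriptscriptstyle{\mathsf{CW}}}^{\dagger}$ are genuine distributive laws over $\SCW$ (Proposition~\ref{prop:distrlawquotient}).

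\textbf{The cases.}
The base cases are the two process-variable rules of \eqref{eq:opgeneral}: the required transitions are exactly the triples $(n\vdash 0,\procvar{f},\zerodv)$ and $(n\vdash\alpha_i,\enc{P_i},\zerodv)$ placed in $\beta_b(\procvar{f})=\beta_w(\procvar{f})$ by definition (the $\gamma$-freedom in the black case being supplied by the counits on the spare wires introduced by the weakenings in the typing of $\procvar{f}$). For the permutation rule, $\enc{P\perm}\stcong\overline{\perm}\poi\enc{P}$, and applying $\lambdaseqcomp$ together with the behaviour of the wiring $\overline{\perm}$ --- obtained from $\lambdasym$ and $\lambdaid$ by induction on $degree(\perm)$, and amounting to relabelling a left word by $\perm$ --- and the inductive hypothesis, one gets the transition with left label $n\vdash(\alpha\circ\perm)$, the invariant persisting since $\Alphabet(P\perm)=\perm[\Alphabet(P)]$. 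For parallel composition, $\enc{P|Q}$ is, up to $\stcong$, the $n$-fold copy of the inputs (built from $\Gcomult$, identities and symmetries) followed by $\enc{n\vdash P}\tns\enc{n\vdash Q}$. Under $\beta_b^{\sharp}$ the copy passes a left word identically to both components, so it suffices to exhibit a \emph{single} $\gamma$ agreeing with $\alpha$ on $\Alphabet(P)$ and with $\beta$ on $\Alphabet(Q)$; here $\gamma=\alpha\cup\beta$ works precisely because the side condition $\alpha\cap\Alphabet(Q)=\beta\cap\Alphabet(P)$ --- together with $\Alphabet(\alpha)\subseteq\Alphabet(P)$ and $\Alphabet(\beta)\subseteq\Alphabet(Q)$ --- forces $\alpha\cup\beta$ to restrict to $\alpha$ on $\Alphabet(P)$ and to $\beta$ on $\Alphabet(Q)$; the strengthened hypotheses on $P$ and $Q$, applied with $\gamma$, then close the case, and the invariant persists as $\Alphabet(P|Q)=\Alphabet(P)\cup\Alphabet(Q)$. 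Under $\beta_w^{\sharp}$ the copy splits a left word additively, so feeding $n\vdash\alpha$ and $n\vdash\beta$ to the two components yields left label $n\vdash(\alpha+\beta)$ --- exactly rule \eqref{eq:Milner}, with no side condition needed. Finally, for restriction, $\enc{\nu a_{n+1}(P)}\stcong(\id_n\tns\Gunit)\poi\enc{(n{+}1)\vdash P}$ with the unit feeding wire $a_{n+1}$: under $\beta_b^{\sharp}$ the unit $\Gunit$ emits an arbitrary value $k$ on that wire, so the strengthened hypothesis on $(n{+}1)\vdash P$ (with $\gamma$ extending $\alpha$) gives, via $\lambdaseqcomp$, the transition with left label $n\vdash(\alpha\setminus\{a_{n+1}\})$; under $\beta_w^{\sharp}$, $\Gunit$ emits exactly $0$ on wire $a_{n+1}$, which is precisely the side condition $\alpha(a_{n+1})=0$ of the rightmost rule of \eqref{eq:Milner}, so the induction hypothesis closes the case.

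\textbf{Main obstacle.}
I expect the parallel case to be the crux: one must match, \emph{on the nose}, the Frobenius-mediated sharing of the $n$ wires of $\enc{P|Q}$ with the Hoare/Milner synchronisation rules. For Hoare this requires tracking which names are shared between $P$ and $Q$, and it is exactly to route the ``missing'' part of $\alpha\cup\beta$ through wires that the other component discards --- via $\Gcounit$, which under $\beta_b^{\sharp}$ absorbs any value --- that the auxiliary label $\gamma$ must be built into the induction hypothesis; the white side needs no such strengthening, because addition in $\mathbb{Z}$ together with the $0$-emitting unit and the $0$-requiring counit already produce the label $\alpha+\beta$ without constraint. A secondary care point, recurring in every inductive case, is that the diagram identities used hold only up to $\stcong$, so one must justify peeling the top-level Frobenius generators off $\enc{P|Q}$, $\enc{\nu a_{n+1}(P)}$ and $\enc{(n{+}1)\vdash P}$; this is licensed by the explicit recursive definition of $\enc{\cdot}$ together with the well-definedness of $\beta_b^{\sharp}$ and $\beta_w^{\sharp}$ on $\SCW(\Sigma)$. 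Everything else amounts to routine unfolding of Figs.~\ref{sos1}--\ref{sos2}.
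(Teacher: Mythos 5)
Your proof is correct and follows the same overall strategy as the paper's: structural induction over the typing rules \eqref{eq:typingdiscipline}, peeling the outermost Frobenius generator off the encoding in each case and combining the congruence rules of Fig.~\ref{sos2} with the black (resp.\ white) column of Fig.~\ref{sos1}. The one genuine difference is how the Hoare parallel case is made to close. The paper keeps the induction hypothesis exactly as stated and instead proves, as the standalone Lemma~\ref{lemma:namesHoare}, that transitions of $\enc{n\vdash P}$ are insensitive to the label carried by any wire $a_i\notin\Alphabet(P)$; that lemma in turn rests on a structural fact (Lemma~\ref{lemma:i-disconnect}) that such a wire is literally disconnected in the diagram, i.e.\ factors through the counit, proved by a second induction on the typing derivation. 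You instead bake the same flexibility into the induction hypothesis, quantifying over all $\gamma$ agreeing with $\alpha$ on $\Alphabet(P)$. The two are interderivable: your strengthened statement is exactly the conjunction of the paper's statement with Lemma~\ref{lemma:namesHoare}, and your verification of the strengthened claim in the variable, weakening and restriction cases reproduces the content of Lemma~\ref{lemma:i-disconnect}. Your packaging saves the two auxiliary lemmas and one extra induction at the price of a heavier invariant; the paper's packaging isolates the reusable diagrammatic fact, which it needs again for the converse implication of Theorem~\ref{thmHMbw}. Your observation that the white case needs no strengthening --- because $\Wcounit$ forces the value $0$ and $\alpha$ vanishes off $\Alphabet(P)$ --- matches the paper, which likewise invokes Lemma~\ref{lemma:namesHoare} only on the black side.
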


\begin{proof}
We prove 1; the proof for 2 is analogous. Hereafter, given $n\vdash \alpha$, we write $n+n \vdash \alpha \tns \alpha$ for the function mapping $a_i$ to $\alpha(a_i)$ if $i\leq n$ and to $\alpha(a_{i-n})$ if $i>n$.

Suppose that $P\Htr{\alpha} P'$. We prove by induction on the rules of~\eqref{eq:typingdiscipline} that $$\semProc{n}{P}\Bdtrans{\zerodv}{n\vdash \alpha}\semProc{n}{P'}.$$
\begin{itemize}
\item For the rule
$$\derivationRule{n\vdash P \quad n\vdash Q}{n\vdash P|Q}{}$$
we observe that by the leftmost rule in \eqref{eq:Hoare}, $(P|Q) \Htr{\gamma}R$ iff $R=P'|Q'$, $P \Htr{\alpha} P'$ and $Q \Htr{\beta} Q'$ with $\gamma=\alpha\cup \beta$ and  $\alpha\cap \Alphabet(Q)=\beta \cap \Alphabet(P)$. By the induction hypothesis we have that $\semProc{n}{P}\Bdtrans{\zerodv}{n\vdash \alpha}\semProc{n}{P'}$ and $\semProc{n}{Q}\Bdtrans{\zerodv}{n\vdash \beta}\semProc{n}{Q'}$. Since $\alpha\cap \Alphabet(Q)=\beta \cap \Alphabet(P)$ then $(\beta \setminus \alpha) \cap \Alphabet(P)=\emptyset$: by Lemma \ref{lemma:namesHoare}, we have that $\semProc{n}{P}\Bdtrans{\zerodv}{n\vdash \gamma}\semProc{n}{P'}$. By a symmetric argument, $\semProc{n}{Q}\Bdtrans{\zerodv}{n\vdash \gamma}\semProc{n}{Q'}$. Now, $\enc{n\vdash P|Q}$ is equal by definition to $\parProc{black}{n}{P}{Q}$. Since $\Bcomultn{n} \Bdtrans{n+n \vdash \gamma \tns \gamma}{n \vdash \gamma}\Bcomultn{n}$, then 
$$\parProc{black}{n}{P}{Q}\Bdtrans{\zerodv}{n \vdash \gamma} \parProc{black}{n}{P'}{Q'}\text{.}$$

\item For the rule
$$\derivationRule{n+1\vdash P }{n\vdash \nu a_{n+1}(P)}{}$$
we observe that by the rightmost rule in \eqref{eq:Hoare}, $\nu a_{n+1}(P) \Htr{\alpha}P'$ iff $P'=\nu a_{n+1} (P'')$ and $P \Htr{\beta} P''$ with $\beta \setminus \{a_{n+1}\} = \alpha$. By the induction hypothesis 
$\semProc{n+1}{P}\Bdtrans{\zerodv}{n+1\vdash \beta}\semProc{n+1}{P''}$. Now $\enc{n\vdash \nu a_{n+1}(P)}$ is equal by definition to $\hideProc{black}{n}{P}$. Since $
\tikzset{x=1em, y=2.1ex}
\begin{tikzpicture}
	\begin{pgfonlayer}{nodelayer}
		\node [style=none] (0) at (1, -0.25) {};
		\node [style=black] (1) at (-0.25, -0.25) {};
		\node [style=none] (2) at (-1.75, 0.25) {};
		\node [style=none] (3) at (1, 0.25) {};
		\node [style=none] (4) at (-1.25, 0.75) {\scriptsize $n$};
	\end{pgfonlayer}
	\begin{pgfonlayer}{edgelayer}
		\draw (0.center) to (1.center);
		\draw (2.center) to (3.center);
	\end{pgfonlayer}
\end{tikzpicture}}
\tikzset{x=1em, y=1.5ex}
\Bdtrans{n+1\vdash \beta}{n \vdash \alpha}
\tikzset{x=1em, y=2.1ex}
\begin{tikzpicture}
	\begin{pgfonlayer}{nodelayer}
		\node [style=none] (0) at (1, -0.25) {};
		\node [style=black] (1) at (-0.25, -0.25) {};
		\node [style=none] (2) at (-1.75, 0.25) {};
		\node [style=none] (3) at (1, 0.25) {};
		\node [style=none] (4) at (-1.25, 0.75) {\scriptsize $n$};
	\end{pgfonlayer}
	\begin{pgfonlayer}{edgelayer}
		\draw (0.center) to (1.center);
		\draw (2.center) to (3.center);
	\end{pgfonlayer}
\end{tikzpicture}}
\tikzset{x=1em, y=1.5ex}
$, then 
$$\hideProc{black}{n}{P} \Bdtrans{\zerodv}{n \vdash \alpha} \hideProc{black}{n}{P''} = \semProc{n+1}{P'}\text{.}$$

\item For the rule 
$$\derivationRule{ar(\procvar{f})=n}{n\vdash \procvar{f}}{}$$ the result is immediate by the definition of $\beta_b$ and the two leftmost rules in \eqref{eq:opgeneral}.

\item For the rule $$\derivationRule{n\vdash P\quad degree(\perm)\leq n}{n\vdash P\perm}{}$$ we observe that by the rightmost rule in \eqref{eq:opgeneral}, $P\perm \Htr{\alpha} P'$ iff $P'=P'' \perm$ and $P'\Htr{\alpha \circ \perm^{-1}}P''$. By the induction hypothesis, we have that $\semProc{n}{P}\Bdtrans{\zerodv}{n\vdash \alpha  \circ \perm^{-1}}\semProc{n}{P''}$. Observe that $n\vdash \alpha  \circ \perm^{-1}$, since $n \vdash \alpha$ and both $\perm$ and $\perm^{-1}$ have degree $n$.
Now, $\enc{n\vdash P\perm}$ is equal by definition to $\permProc{n}{\sigma}{P}$. Since $ \overline{\perm} \Bdtrans{n\vdash \alpha  \circ \perm^{-1}}{n\vdash \alpha} \overline{\perm} $, then 
$$\semProc{n}{P\perm} \Bdtrans{\zerodv}{n \vdash \alpha} \semProc{n}{P''\perm}= \semProc{n}{P'}\text{.}$$

\item For the rule
$$\derivationRule{n\vdash P}{n+1\vdash P}{}$$
observe that for all processes $P$, if $P \Htr{\alpha}P'$, then $\Alphabet(\alpha)\subseteq \Alphabet(P)$. Since $n\vdash P$, $\Alphabet(P)\subseteq \{a_1,\dots, a_n\}$, then $\Alphabet(\alpha) \subseteq \{a_1,\dots a_n\}$. So $n\vdash \alpha$. By the induction hypothesis $\semProc{n}{P}\Bdtrans{\zerodv}{n\vdash \alpha}\semProc{n}{P'}$. Now, $\enc{n+1\vdash P}$ is equal by definition to $\newvarProc{black}{n}{P}  $. Since $
\tikzset{x=1em, y=2.1ex}
\begin{tikzpicture}
	\begin{pgfonlayer}{nodelayer}
		\node [style=none] (0) at (-1.75, -0.25) {};
		\node [style=black] (1) at (-0.5, -0.25) {};
		\node [style=none] (2) at (-0.75, 0.25) {};
		\node [style=none] (3) at (0.5, 0.25) {};
		\node [style=none] (4) at (-1.25, 0.75) {\scriptsize $n$};
		\node [style=none] (5) at (-1.75, 0.25) {};
	\end{pgfonlayer}
	\begin{pgfonlayer}{edgelayer}
		\draw (0.center) to (1.center);
		\draw [in=180, out=0, looseness=1.25] (2.center) to (3.center);
		\draw (5.center) to (2.center);
	\end{pgfonlayer}
\end{tikzpicture}}
\tikzset{x=1em, y=1.5ex}
\Bdtrans{n\vdash \alpha}{n+1 \vdash \alpha}
\tikzset{x=1em, y=2.1ex}
\begin{tikzpicture}
	\begin{pgfonlayer}{nodelayer}
		\node [style=none] (0) at (-1.75, -0.25) {};
		\node [style=black] (1) at (-0.5, -0.25) {};
		\node [style=none] (2) at (-0.75, 0.25) {};
		\node [style=none] (3) at (0.5, 0.25) {};
		\node [style=none] (4) at (-1.25, 0.75) {\scriptsize $n$};
		\node [style=none] (5) at (-1.75, 0.25) {};
	\end{pgfonlayer}
	\begin{pgfonlayer}{edgelayer}
		\draw (0.center) to (1.center);
		\draw [in=180, out=0, looseness=1.25] (2.center) to (3.center);
		\draw (5.center) to (2.center);
	\end{pgfonlayer}
\end{tikzpicture}}
\tikzset{x=1em, y=1.5ex}
$, then 
$$\newvarProc{black}{n}{P} \Bdtrans{\zerodv}{n+1 \vdash \alpha} \newvarProc{black}{n}{P'}\text{.}$$
\end{itemize}

\end{proof}

\begin{proposition} Let $n \vdash P$, $n \vdash \alpha$ such that $\Alphabet(\alpha)\subseteq \Alphabet(P)$
\begin{enumerate}
\item If $\semProc{n}{P}\Bdtrans{\zerodv}{n\vdash \alpha} \diagState{n}{d}$, then there exists $n\vdash P'$ such that $P\Htr{\alpha} P'$ and $\semProc{n}{P'}=\diagState{n}{d}$.
\item If $\semProc{n}{P} \!\Wdtrans{\zerodv}{n\vdash \alpha}\diagState{n}{d}$
then there is $n\vdash P'$ such that $P\!\Mtr{\alpha}\! P'$ and $\semProc{n}{P'}\!=\!\diagState{n}{d}$.
\end{enumerate}
\end{proposition}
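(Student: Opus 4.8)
We prove item~1; item~2 is entirely analogous, replacing the black Frobenius structure and the set-theoretic bookkeeping of Hoare synchronisation \eqref{eq:Hoare} by the white Frobenius structure and the group operation of $\mathbb{Z}$ governing Milner synchronisation \eqref{eq:Milner}. The proof is a structural induction on the derivation of $n \vdash P$, run dually to the preceding proposition. The key preliminary remark is that, since $\beta_b^{\sharp}$ is the lifting of $\beta_b$ along the black distributive law $\lambda_{\bullet/\mathsf{CW}}^{\dagger}$ of \S~\ref{sec:comp-cw}, and since that law preserves the SMC and special Frobenius equations (checked in \S~\ref{sec:comp-prop} and \S~\ref{sec:comp-cw}), the set $\beta_b^{\sharp}(c)$ is, for \emph{any} representative term of $c$, exactly the set of transitions derivable from the sub-transitions by the rules of Fig.~\ref{sos2} together with the black-node rules of Fig.~\ref{sos1} — this is the description already used in Example~\ref{ex:finalsemleadingex}. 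Because each defining clause of $\enc{\cdot}$ exhibits $\enc{P}$ as such a term built from the sub-encodings, the whole argument amounts to \emph{inverting} these rules, feeding the resulting sub-transitions to the induction hypothesis, and reassembling a process transition $P \Htr{\alpha} P'$ together with the identity $\enc{P'} = d$.

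The cases follow the shape of \eqref{eq:typingdiscipline}. For $P = \procvar{f}$ the transitions of $\enc{\procvar{f}}$ are precisely $\beta_b(\procvar{f})$, so either $d = \enc{P_i}$ with $\alpha = \alpha_i$ (take $P' = P_i$, using $\procvar{f} \tr{\alpha_i} P_i$ from \eqref{eq:opgeneral}) or $d = \procvar{f}$ with $\alpha = 0$ (take $P' = \procvar{f}$). For $P = P_1 \,|\, P_2$, where $\enc{P}$ glues $\enc{P_1}$ and $\enc{P_2}$ along an $n$-fold black comultiplication, inverting $\lambdaseqcomp$ and $\lambdaparcomp$ through the only behaviour $\Bcomultn{n}\dtrans{\gamma}{\gamma\gamma}\Bcomultn{n}$ forces $\enc{P_1}$ and $\enc{P_2}$ to transition to some $d_1, d_2$ on the \emph{common} label $\alpha$, with $d = \Bcomultn{n}\poi(d_1 \tns d_2)$; restricting $\alpha$ to $\Alphabet(P_1)$ and to $\Alphabet(P_2)$ (the lemma discussed below) puts us in range of the induction hypothesis, which yields $P_i \Htr{\alpha \cap \Alphabet(P_i)} P_i'$ with $d_i = \enc{P_i'}$, and the Hoare rule \eqref{eq:Hoare} — whose side condition is symmetric in $\Alphabet(P_1)\cap\Alphabet(P_2)$ and whose union recovers $\alpha$ because $\Alphabet(\alpha) \subseteq \Alphabet(P)$ — gives $P_1 \,|\, P_2 \Htr{\alpha} P_1' \,|\, P_2'$ with $\enc{P_1' \,|\, P_2'} = d$. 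For $P = \nu a_{n+1}(P_1)$ the extra wire of $\enc{P_1}$ is supplied by a black unit, so $\Bunit \dtrans{\varepsilon}{k} \Bunit$ lets any value pass; the transition of $\enc{P}$ on $\alpha$ comes from one of $\enc{P_1}$ on an extension $\beta$ of $\alpha$, and the restriction rule of \eqref{eq:Hoare} finishes the case. The cases $P = P_1\perm$ and the weakening step $n{+}1 \vdash P$ from $n \vdash P$ are similar, inverting respectively the permutation arrow $\overline{\perm}$ and the black-counit gadget and invoking the last rule of \eqref{eq:opgeneral} and preservation of alphabets by $\Htr{}$ to keep the typing judgements in order. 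Note that injectivity of $\enc{\cdot}$ is never needed: the witness $P'$ is \emph{built} from the witnesses the induction hypothesis supplies, and $\enc{P'} = d$ then holds by the relevant clause of $\enc{\cdot}$.

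The step I expect to be the main obstacle is the label bookkeeping in the parallel case, concentrated in the auxiliary fact — the converse of the padding lemma used in the previous proposition — that $\semProc{n}{P}\Bdtrans{\zerodv}{n\vdash\gamma}\diagState{n}{d}$ implies $\semProc{n}{P}\Bdtrans{\zerodv}{n\vdash\gamma\cap\Alphabet(P)}\diagState{n}{d}$, \emph{with the same target}. This holds because the wires outside $\Alphabet(P)$ feed only into black counits inside $\enc{P}$, and $\Bcounit\dtrans{k}{\varepsilon}\Bcounit$ for every $k$ (in particular $k=0$), so replacing those padding values by $0$ changes neither the existence of the transition nor its target; making this precise calls for its own induction on $P$, dual to the padding direction, and it is here — rather than in the outer induction — that the interaction between the Frobenius structure and the alphabet discipline really has to be argued. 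For the Milner variant this point is milder: the white unit $\Wunit\dtrans{\varepsilon}{0}\Wunit$ and white counit $\Wcounit\dtrans{0}{\varepsilon}\Wcounit$ only ever carry $0$, so the restriction side condition $\alpha(a_i)=0$ of \eqref{eq:Milner} is enforced by the diagram directly, and the group structure of $\mathbb{Z}$ makes the decomposition of a sum label through $\Wcomult$ the obvious one, matching the parallel rule of \eqref{eq:Milner}.
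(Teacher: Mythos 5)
Your proposal matches the paper's proof in both structure and substance: a structural induction on the typing rules of \eqref{eq:typingdiscipline} that inverts the derivation rules of Figs.~\ref{sos1} and~\ref{sos2}, with the crux correctly located in the auxiliary label-restriction lemma for the parallel case (the paper's Lemma~\ref{lemma:namesHoare}), justified exactly as you say — wires outside $\Alphabet(P)$ are disconnected and terminate in a black counit, which the paper establishes by its own structural induction (Lemma~\ref{lemma:i-disconnect}). The only difference is organisational: the paper states the restriction lemma one name at a time as an iff and isolates the ``disconnected wire'' fact as a separate lemma, whereas you fold both into a single statement, but the argument is the same.
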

\begin{proof}
We prove 1, the proof for 2 is analogous. As before, given $n\vdash \alpha$, we write $n+n \vdash \alpha \tns \alpha$ for the function mapping $a_i$ to $\alpha(a_i)$ if $i\leq n$ and to $\alpha(a_{i-n})$ if $i>n$.

Suppose that $\semProc{n}{P}\Bdtrans{\zerodv}{n\vdash \alpha} \diagState{n}{d}$. We prove by induction on the rules of~\eqref{eq:typingdiscipline} that there exists $n\vdash P'$ such that $P\Htr{\alpha} P'$ and $\semProc{n}{P'}=\diagState{n}{d}$. 

\begin{itemize}
\item For the rule
$$\derivationRule{n\vdash P \quad n\vdash Q}{n\vdash P|Q}{}$$
we observe that by definition $\enc{n\vdash P|Q} = \parProc{black}{n}{P}{Q}$. Since $\Bcomultn{n} \Bdtrans{n+n\vdash \alpha \tns \alpha }{n\vdash \alpha} \Bcomultn{n}$, then $\parProc{black}{n}{P}{Q} \Bdtrans{\zerodv}{\alpha} \diagState{n}{d}$ iff $\semProc{n}{P}\Bdtrans{\zerodv}{n\vdash \alpha} \diagState{n}{d_1}$ and $\semProc{n}{Q}\Bdtrans{\zerodv}{n\vdash \alpha} \diagState{n}{d_2}$ with $\diagState{n}{d}=\parStates{black}{n}{d_1}{d_2}$. 
Now take $\beta'=\alpha \cap \Alphabet(Q)$ and $\alpha'=\alpha \cap \Alphabet(P)$. We have that $\beta' \subseteq \Alphabet(Q)$ and $\alpha'\subseteq \Alphabet(P)$. Moreover, since $\alpha\subseteq \Alphabet(P) \cup \Alphabet(Q)$ by hypothesis, it holds that $\alpha' \cup \beta'=\alpha$ and that  $\alpha'\cap \Alphabet(Q)=\beta' \cap \Alphabet(P)$.

 By Lemma \ref{lemma:namesHoare} and $\semProc{n}{P}\Bdtrans{\zerodv}{n\vdash \alpha} \diagState{n}{d_1}$ and $\semProc{n}{Q}\Bdtrans{\zerodv}{n\vdash \alpha} \diagState{n}{d_2}$, we obtain that $\semProc{n}{P}\Bdtrans{\zerodv}{n\vdash \alpha'} \diagState{n}{d_1}$ and $\semProc{n}{Q}\Bdtrans{\zerodv}{n\vdash \beta'} \diagState{n}{d_2}$ with $\diagState{n}{d}=\parStates{black}{n}{d_1}{d_2}$.
We can now use induction hypothesis to get a $P'$ and $Q'$ such that $\semProc{n}{P'}=\diagState{n}{d_1}$, $\semProc{n}{Q'}=\diagState{n}{d_2}$, $P \tr{\alpha' }P'$ and $Q \tr{\beta'}Q'$. 
By the leftmost rule in \eqref{eq:Hoare}, we have that $$P|Q \tr{\alpha} P'|Q'\text{.}$$
Finally, by definition of  $\enc{\cdot}$,  $$\enc{n \vdash P'|Q' }=\parProc{black}{n}{P'}{Q'}  =  \parStates{black}{n}{d_1}{d_2} = \diagState{n}{d}\text{.}$$

\item For the rule
$$\derivationRule{n+1\vdash P }{n\vdash \nu a_{n+1}(P)}{}$$
we observe that by definition $\enc{n\vdash \nu a_{n+1} P} =  \hideProc{black}{n}{P}$. Since $
\tikzset{x=1em, y=2.1ex}
\begin{tikzpicture}
	\begin{pgfonlayer}{nodelayer}
		\node [style=none] (0) at (1, -0.25) {};
		\node [style=black] (1) at (-0.25, -0.25) {};
		\node [style=none] (2) at (-1.75, 0.25) {};
		\node [style=none] (3) at (1, 0.25) {};
		\node [style=none] (4) at (-1.25, 0.75) {\scriptsize $n$};
	\end{pgfonlayer}
	\begin{pgfonlayer}{edgelayer}
		\draw (0.center) to (1.center);
		\draw (2.center) to (3.center);
	\end{pgfonlayer}
\end{tikzpicture}}
\tikzset{x=1em, y=1.5ex}
 \Bdtrans{n+1\vdash \beta}{n\vdash \alpha} 
\tikzset{x=1em, y=2.1ex}
\begin{tikzpicture}
	\begin{pgfonlayer}{nodelayer}
		\node [style=none] (0) at (1, -0.25) {};
		\node [style=black] (1) at (-0.25, -0.25) {};
		\node [style=none] (2) at (-1.75, 0.25) {};
		\node [style=none] (3) at (1, 0.25) {};
		\node [style=none] (4) at (-1.25, 0.75) {\scriptsize $n$};
	\end{pgfonlayer}
	\begin{pgfonlayer}{edgelayer}
		\draw (0.center) to (1.center);
		\draw (2.center) to (3.center);
	\end{pgfonlayer}
\end{tikzpicture}}
\tikzset{x=1em, y=1.5ex}
$ with $\alpha=\beta \setminus \{a_{n+1}\}$, then $\hideProc{black}{n}{P}\Bdtrans{\zerodv}{\alpha} \diagState{n}{d}$ iff $\procplusone{n}{\enc{P}}\Bdtrans{\zerodv}{n+1\vdash \beta}\stateplusone{n}{d'}$ with $\diagState{n}{d}= \hideState{black}{n}{d'}$. We can now use the induction hypothesis to get a $P'$ such that $\procplusone{n}{\enc{P'}}=\stateplusone{n}{d'}$ and $P \tr{\alpha }P'$. By the fact that $\alpha=\beta \setminus \{a_{n+1}\}$ and the rightmost rule in \eqref{eq:Hoare}, we have that $$\nu a_{n+1}(P) \tr{\alpha} \nu a_{n+1}(P')\text{.}$$
By definition of  $\enc{\cdot}$,  $$\enc{n \vdash \nu a_{n+1}(P') }=  \hideProc{black}{n}{P'} = \hideState{black}{n}{d'} =\diagState{n}{d}\text{.}$$

\item For the rule 
$$\derivationRule{ar(\procvar{f})=n}{n\vdash \procvar{f}}{}$$ the result is immediate by the definition of $\beta_b$ and the two leftmost rules in \eqref{eq:opgeneral}.

\item For the rule $$\derivationRule{n\vdash P\quad degree(\perm)\leq n}{n\vdash P\perm}{}$$ we observe that by definition $ \enc{n\vdash P\perm} = \permProc{n}{\sigma}{P}$. Since $ \overline{\perm} \Bdtrans{n\vdash \alpha  \circ \perm^{-1}}{n\vdash \alpha} \overline{\perm} $, then 
$\permProc{n}{\sigma}{P} \Bdtrans{\zerodv}{n\vdash \alpha} \diagState{n}{d}$ iff $\semProc{n}{P} \Bdtrans{\zerodv}{n\vdash \alpha \circ \perm^{-1}} \diagState{n}{d'}$ with $\diagState{n}{d}= \permState{n}{\sigma}{d'}$. We can now use the induction hypothesis to get a $P'$ such that $\semProc{n}{P'}=\diagState{n}{d'}$ and $P \tr{\alpha \circ \perm^{-1}}P'$. By the rightmost rule in \eqref{eq:opgeneral}, we have that $$P\perm \tr{\alpha} P' \perm$$ and, by the definition of $\enc{\cdot}$,  $$\enc{n\vdash P' \perm}=  \permProc{n}{\sigma}{P'} = \permState{n}{\sigma}{d'} = \diagState{n}{d}\text{.}$$

\item For the rule
$$\derivationRule{n\vdash P}{n+1\vdash P}{}$$
we observe that since $n\vdash P$, then $\Alphabet(P)\subseteq \{a_1,\dots, a_n\}$ and thus $a_{n+1}\notin \alpha$. By definition, $\enc{n+1\vdash P} = \newvarProc{black}{n}{P}$. Since $
\tikzset{x=1em, y=2.1ex}
\begin{tikzpicture}
	\begin{pgfonlayer}{nodelayer}
		\node [style=none] (0) at (-1.75, -0.25) {};
		\node [style=black] (1) at (-0.5, -0.25) {};
		\node [style=none] (2) at (-0.75, 0.25) {};
		\node [style=none] (3) at (0.5, 0.25) {};
		\node [style=none] (4) at (-1.25, 0.75) {\scriptsize $n$};
		\node [style=none] (5) at (-1.75, 0.25) {};
	\end{pgfonlayer}
	\begin{pgfonlayer}{edgelayer}
		\draw (0.center) to (1.center);
		\draw [in=180, out=0, looseness=1.25] (2.center) to (3.center);
		\draw (5.center) to (2.center);
	\end{pgfonlayer}
\end{tikzpicture}}
\tikzset{x=1em, y=1.5ex}
 \Bdtrans{n\vdash \alpha}{n+1\vdash \alpha} 
\tikzset{x=1em, y=2.1ex}
\begin{tikzpicture}
	\begin{pgfonlayer}{nodelayer}
		\node [style=none] (0) at (-1.75, -0.25) {};
		\node [style=black] (1) at (-0.5, -0.25) {};
		\node [style=none] (2) at (-0.75, 0.25) {};
		\node [style=none] (3) at (0.5, 0.25) {};
		\node [style=none] (4) at (-1.25, 0.75) {\scriptsize $n$};
		\node [style=none] (5) at (-1.75, 0.25) {};
	\end{pgfonlayer}
	\begin{pgfonlayer}{edgelayer}
		\draw (0.center) to (1.center);
		\draw [in=180, out=0, looseness=1.25] (2.center) to (3.center);
		\draw (5.center) to (2.center);
	\end{pgfonlayer}
\end{tikzpicture}}
\tikzset{x=1em, y=1.5ex}
$, then $\newvarProc{black}{n}{P} \dtrans{\zerodv}{n+1\vdash \alpha}{} \stateplusone{n}{d}$ iff $\semProc{n}{P} \Bdtrans{\zerodv}{n\vdash \alpha} \diagState{n}{d'}$ with $\stateplusone{n}{d}= \newvarState{black}{n}{d'}$. We can now use induction hypothesis to get a $P'$ such that $\semProc{n}{P'}=\diagState{n}{d'}$ and $$P \tr{\alpha }P'.$$ By 
definition of  $\enc{\cdot}$,  $$\enc{n+1 \vdash P' }= \newvarProc{black}{n}{P'}  =\newvarState{black}{n}{d'} =\diagState{n}{d}\text{.}$$
\end{itemize}

\end{proof}

\begin{lemma}\label{lemma:namesHoare}
Let $n\vdash P$, $n\vdash \alpha$, $i\leq n$ and $a_i \notin \Alphabet(P)$.
$$\enc{n\vdash P}\Bdtrans{\zerodv}{n\vdash \alpha}\enc{n\vdash Q} \quad \text{iff}\quad \enc{n\vdash P}\Bdtrans{\zerodv}{n\vdash \alpha\cup\{a_i\}} \enc{n\vdash Q}.$$
\end{lemma}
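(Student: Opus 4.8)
The plan is to prove the lemma by induction on the derivation of $n\vdash P$ according to the typing rules~\eqref{eq:typingdiscipline}. Throughout I would use that the bialgebra $\beta_b^{\sharp}$ obtained from the black distributive law $\lambda_{\bullet/\scriptscriptstyle{\mathsf{CW}}}^{\dagger}$ is, by construction, compositional: a transition of a diagram built from $\poi$, $\tns$, identities, symmetries and the black Frobenius generators holds in $\beta_b^{\sharp}$ exactly when it is derivable from the rules of Fig.~\ref{sos2} together with the black-generator rules of Fig.~\ref{sos1}. The key observation is then that every black generator occurring in an encoding $\enc{\cdot}$ --- namely the copies $\Bcomultn{n}$ (from parallel composition), a discard $\Bcounit$ (from weakening), and a unit $\Bunit$ (from restriction) --- leaves a wire free to carry any label: in the monoid $(2,\cup,0)$ one has $\Bcounit\Bdtrans{\zerodv}{k}\Bcounit$ and $\Bunit\Bdtrans{k}{\zerodv}\Bunit$ for \emph{both} $k\in 2$, while $\Bcomult\Bdtrans{k\,k}{k}\Bcomult$ and the symmetries and identities making up a rewiring $\overline{\perm}$ merely transmit along each wire whatever flows into it. Hence it will suffice to locate, inside $\enc{n\vdash P}$, the $i$-th dangling wire and to check that, when $a_i\notin\Alphabet(P)$, nothing on its path pins down its label; the derivation carrying $\alpha(a_i)$ on that wire and the one carrying $1$ are then in bijection and reach the same target diagram.

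I would run the induction as follows. The rule for $\procvar{f}$ yields a vacuous case, since $\Alphabet(\procvar{f})=\{a_1,\dots,a_{ar(\procvar{f})}\}=\{a_1,\dots,a_n\}$ contains every $a_i$ with $i\le n$. For weakening ($n+1\vdash P$ from $n\vdash P$), up to $\stcong$ the encoding $\enc{n+1\vdash P}$ feeds wire $n+1$ straight into a $\Bcounit$ and passes the remaining wires to $\enc{n\vdash P}$; if $i=n+1$ the $\Bcounit$ absorbs $\alpha(a_i)$ and $1$ alike, and if $i\le n$ then $a_i\notin\Alphabet(P)$ and the induction hypothesis on $\enc{n\vdash P}$ applies. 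For restriction ($n\vdash\nu a_{n+1}(P)$ from $n+1\vdash P$), up to $\stcong$ the encoding composes $\enc{n+1\vdash P}$ with a $\Bunit$ on wire $n+1$; only the first $n$ wires are dangling, and $a_i\notin\Alphabet(\nu a_{n+1}(P))=\Alphabet(P)\setminus\{a_{n+1}\}$ with $i\le n$ gives $a_i\notin\Alphabet(P)$, so the induction hypothesis applies to $\enc{n+1\vdash P}$ at wire $i$. For parallel ($n\vdash P|Q$ from $n\vdash P$ and $n\vdash Q$), up to $\stcong$ the encoding routes the $i$-th dangling wire through $\Bcomultn{n}$ into the $i$-th input of both $\enc{n\vdash P}$ and $\enc{n\vdash Q}$; since $a_i\notin\Alphabet(P|Q)=\Alphabet(P)\cup\Alphabet(Q)$, by the induction hypothesis each of the two sub-transitions is insensitive to the label on that wire, the copy rule transmits the common choice, and the target $\enc{n\vdash P'|Q'}$ is unaffected. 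For the permutation rule ($n\vdash P\perm$ from $n\vdash P$), up to $\stcong$ the encoding is $\overline{\perm}\poi\enc{n\vdash P}$ with $\overline{\perm}$ a rewiring of the $n$ dangling wires; because $\Alphabet(P\perm)=\perm[\Alphabet(P)]$, this rewiring carries the free wire $i$ of $\enc{n\vdash P\perm}$ to a free wire $j$ of $\enc{n\vdash P}$, and since a rewiring only relays labels, the induction hypothesis on $\enc{n\vdash P}$ at $j$ concludes.

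The main obstacle I expect is purely bookkeeping: one must spell out the structural decompositions of $\enc{\cdot}$ used above --- all immediate from the definition of the encoding --- and chase transitions through them using compositionality of $\beta_b^{\sharp}$; no case is conceptually hard. A seemingly delicate point is that $\enc{\cdot}$ lands in $\SCW(\Sigma)$, i.e.\ in string diagrams taken modulo the special Frobenius equations (Fig.~\ref{fig:Frob}), so one might worry whether ``routing a wire'' is well defined; but $\beta_b^{\sharp}$ is by construction a coalgebra on $\SCW(\Sigma)$ (Proposition~\ref{prop:distrlawquotient}), hence its transition relation is already invariant under $\stcong$ and there is nothing further to verify. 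An equivalent and arguably cleaner organisation would prove first --- by the same induction, now using the counit and comultiplication laws of Fig.~\ref{fig:Frob}, e.g.\ $\Bcomult\poi(\Bcounit\tns\Bcounit)\stcong\Bcounit$ in the parallel case --- that whenever $a_i\notin\Alphabet(P)$ one has $\enc{n\vdash P}\stcong r\poi(\Bcounit\tns c)$ for a rewiring $r$ bringing wire $i$ to the front and some $c\in\SCW(\Sigma)(n-1,0)$; given this factorisation, the lemma is immediate from $\Bcounit\Bdtrans{\zerodv}{0}\Bcounit$, $\Bcounit\Bdtrans{\zerodv}{1}\Bcounit$ and compositionality of $\beta_b^{\sharp}$.
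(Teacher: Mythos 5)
Your proposal is correct, and the ``cleaner organisation'' you sketch in your final sentences is in fact exactly the route the paper takes: it first proves, by structural induction on the typing rules, a factorisation lemma (Lemma~\ref{lemma:i-disconnect}) stating that $a_i\notin\Alphabet(P)$ forces $\enc{n\vdash P}$ to decompose, up to $\stcong$ and a rewiring, as a $\Bcounit$ on wire $i$ in parallel with some $d$ of sort $\sort{n-1}{0}$; the main lemma then follows in one step from $\Bcounit\Bdtrans{\zerodv}{0}\Bcounit$ and $\Bcounit\Bdtrans{\zerodv}{1}\Bcounit$ (the paper additionally conjugates by the transposition $(i\;n)$ to put the discarded wire last before reading off the transition, but that is cosmetic). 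Your primary route --- inducting directly on the typing derivation and chasing transitions through the encoding --- is also sound and avoids stating the factorisation explicitly, but it has one bookkeeping cost the factorisation approach does not: as stated, the lemma speaks of transitions into $\enc{n\vdash Q}$, whereas in the parallel case the sub-transitions of $\enc{n\vdash P_1}$ and $\enc{n\vdash P_2}$ land in arbitrary diagrams that are not known (at that point) to be encodings, so the induction hypothesis must be strengthened to allow arbitrary targets $d\in\SCW(\Sigma)(n,0)$. That generalisation is harmless and indeed matches how the lemma is actually applied in the proof of Theorem~\ref{thmHMbw}, but it needs to be said. The factorisation route buys a purely static, transition-free induction whose single dynamic step is the behaviour of one $\Bcounit$; your direct route buys brevity at the price of re-examining the operational rules in every case.
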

\begin{proof}
First, using Lemma~\ref{lemma:i-disconnect}, we can find $\diagState{n-1}{d}$ such that
\[\semProc{n}{P} \;= \;\vardisconnect{black}{n}{i}{d}\]
Then, we use a permutation to relocate the $\Bcounit$: let $\perm$ be the transposition $(i\;\; n)$. By definition of $\enc{-}$ we have
\[\semProc{n}{P\perm}\;=\;\permProc{n}{\perm}{P} \;=\; \newvarState{black}{n-1}{d}\]
From the derivation rules for props (\S~\ref{sec:comp-prop}), we see that any transition out of $\semProc{n}{P\sigma}$ must come from the derivation rule for $\oplus$, namely:
\[\derivationRule{\diagState{n-1}{d}\Bdtrans{\zerodv}{\alpha} \diagState{n-1}{e}\quad \Bcounit\Bdtrans{\zerodv}{v} \Bcounit}
{\newvarState{black}{n-1}{d} \Bdtrans{\zerodv}{\alpha \labelSep v} \newvarState{black}{n-1}{e}}{\lambda^3}\]
Hence, we must have $\semProc{n}{Q\sigma} = \newvarState{black}{n-1}{e}$ for some $\diagState{n}{e}$. Now, notice that we have $\Bcounit\Bdtrans{\zerodv}{1}\:\Bcounit$ and $\Bcounit\Bdtrans{\zerodv}{0}\:\Bcounit$ so that
\[\semProc{n}{P\perm}\Bdtrans{\zerodv}{n\vdash \alpha}\semProc{n}{Q\perm}\;\text{ iff }\;\semProc{n}{P\perm}\Bdtrans{\zerodv}{n\vdash \alpha\cup\{a_n\}} \semProc{n}{Q\perm}.\]
Applying the transposition $(i\;\; n)$ again we can conclude that
\[\semProc{n}{P}\Bdtrans{\zerodv}{n\vdash \alpha}\semProc{n}{Q}\;\text{ iff }\;\semProc{n}{P}\Bdtrans{\zerodv}{n\vdash \alpha\cup\{a_i\}} \semProc{n}{Q}.\]
\end{proof}

\begin{lemma}\label{lemma:i-disconnect}
Let $n\vdash P$, $i\leq n$ and $a_i \notin \Alphabet(P)$. Then there exists $\diagState{n-1}{d}$ such that
\[\semProc{n}{P} \;= \;\vardisconnect{black-faded}{n}{i}{d}\]
\end{lemma}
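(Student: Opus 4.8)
The statement is a normal-form property of the encoding $\enc{\cdot}$, so the plan is to induct on (a fixed) derivation of $n\vdash P$ using the typing rules~\eqref{eq:typingdiscipline}; since only the existence of $d$ is asserted, there is no need to worry about $\enc{\cdot}$ being independent of the chosen derivation. All the diagrammatic manipulations take place in the CW prop $\SCW(\Sigma)$ where $\enc{\cdot}$ lives, so the special Frobenius equations of Fig.~\ref{fig:Frob} are available; the only two facts actually used are the counitality law ($\Gcomult$ followed by $\Gcounit$ on one leg is the identity) and naturality of the symmetries~\eqref{eq:equationPROPs3}. The rule $ar(\procvar{f})=n\Rightarrow n\vdash\procvar{f}$ contributes no work: $\Alphabet(\procvar{f})=\{a_1,\dots,a_n\}$ already contains every $a_i$ with $i\le n$, so the hypothesis $a_i\notin\Alphabet(P)$ is never met and the case is vacuous.

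\noindent\textbf{The easy cases.} The rules for weakening and restriction both have the shape $\enc{n\vdash P}=C\poi\enc{m\vdash P_0}$ (up to iso) with $C$ a small prop arrow acting on wires other than the critical one. For weakening $n\vdash P\Rightarrow n+1\vdash P$: if $i=n+1$ then by definition $\enc{n+1\vdash P}$ is $\enc{n\vdash P}$ with the fresh $(n+1)$-st wire connected to a $\Gcounit$, so $d:=\enc{n\vdash P}$ works at once; if $i\le n$, apply the induction hypothesis to $n\vdash P$ and absorb the fresh counit-terminated wire into the new $d$. For restriction $n+1\vdash P\Rightarrow n\vdash\nu a_{n+1}(P)$: from $a_i\notin\Alphabet(\nu a_{n+1}(P))$ and $i\le n$ we get $a_i\notin\Alphabet(P)$, and since $i\le n<n+1$ the induction hypothesis decomposes $\enc{n+1\vdash P}$; as $\enc{n\vdash\nu a_{n+1}(P)}$ only modifies the $(n+1)$-st wire (plugging it with the appropriate Frobenius unit), it commutes past the $\Gcounit$ sitting on wire $i$ by the prop laws, leaving a residual diagram of type $(n-1,0)$ as the new $d$.

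\noindent\textbf{The permutation case.} Here $\enc{n\vdash P\perm}=\overline{\perm}\poi\enc{n\vdash P}$ with $\overline{\perm}\from n\to n$. From $a_i\notin\Alphabet(P\perm)=\perm[\Alphabet(P)]$ pick the index $j$ with $\perm(a_j)=a_i$, i.e.\ $a_j=\perm^{-1}(a_i)\notin\Alphabet(P)$; moreover $j\le n$ (if $a_j\in supp(\perm)$ then $j\le degree(\perm)\le n$, otherwise $\perm(a_j)=a_j$ forces $j=i\le n$). Since $\enc{\cdot}$ is set up so that wire positions track names, $\overline{\perm}$ routes the $i$-th left wire to the $j$-th right wire. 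Apply the induction hypothesis to $n\vdash P$ at index $j$, then slide the $\Gcounit$ on wire $j$ leftwards through $\overline{\perm}$ using naturality of the symmetries~\eqref{eq:equationPROPs3}: this rewrites $\overline{\perm}\poi(\Gcounit\text{ on }j)$ as $(\Gcounit\text{ on }i)\poi\overline{\perm'}$ for a permutation $\overline{\perm'}\from n-1\to n-1$, and $d:=\overline{\perm'}\poi d'$ completes the case.

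\noindent\textbf{The parallel case (the crux).} By definition $\enc{n\vdash P|Q}=\Gcomultn{n}\poi(\enc{n\vdash P}\tns\enc{n\vdash Q})$. From $\Alphabet(P|Q)=\Alphabet(P)\cup\Alphabet(Q)$ the hypothesis gives $a_i\notin\Alphabet(P)$ and $a_i\notin\Alphabet(Q)$, so the induction hypothesis makes both $\enc{n\vdash P}$ and $\enc{n\vdash Q}$ terminate their $i$-th wire with a $\Gcounit$. Splitting off the $i$-th strand of $\Gcomultn{n}$ with the prop laws, the $i$-th input of $P|Q$ is comultiplied and then both resulting wires are discarded, i.e.\ it carries the subdiagram $\Gcomult\poi(\Gcounit\tns\Gcounit)$, which collapses to a single $\Gcounit$ by counitality; what remains is $\Gcomultn{n-1}\poi(d_P\tns d_Q)\from n-1\to 0$, which we take as $d$. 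I expect the real effort, here and throughout, to be purely the planar bookkeeping: $\Gcomultn{n}$ and the various interface wirings are assembled from $1$-ary generators and symmetries, so exposing ``the $i$-th strand'' and aligning the two counits coming from $P$ and $Q$ requires a small but fiddly rewrite with the prop and comonoid equations — conceptually every case is routine once the correct strand has been isolated.
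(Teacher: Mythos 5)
Your proposal is correct and follows essentially the same route as the paper's proof: structural induction on the typing rules~\eqref{eq:typingdiscipline}, with the process-variable case vacuous, the parallel case resolved by counitality ($\Gcomult$ followed by two discards collapsing to one), the permutation case by transporting the discard through $\overline{\perm}$ at index $j=\perm^{-1}(i)$, and the restriction/weakening cases by direct appeal to the induction hypothesis. Your explicit split of the weakening case into $i=n+1$ versus $i\le n$ is slightly more careful than the paper's write-up, but the argument is the same.
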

\begin{proof}
We prove this by structural induction on the rules of~\eqref{eq:typingdiscipline}.
\begin{itemize}
\item For the rule
$$\derivationRule{n\vdash P_1 \quad n\vdash P_2}{n\vdash P_1|P_2}{}$$
$\Alphabet(P_1|P_2) = \Alphabet(P_1)\cup \Alphabet(P_2)$ and therefore $a_i \notin \Alphabet(P)$ and $a_i \notin \Alphabet(P_2)$. We can apply the induction hypothesis to obtain $d_1$ and $d_2$ such that
\[\semProc{n}{P_1}\; = \; \vardisconnect{black-faded}{n}{i}{d_1}\qquad \text{ and } \qquad \semProc{n}{P_2}\; = \; \vardisconnect{black-faded}{n}{i}{d_2}\]
and we can deduce
\[\semProc{n}{P_1|P_2} \; = \; \parProc{black-faded}{n}{P_1}{P_2} \; =\;  
\tikzset{x=1em, y=2.1ex}
\begin{tikzpicture}[show background rectangle, {background rectangle/.style}={fill=backcolour}]
	\begin{pgfonlayer}{nodelayer}
		\node [style=black-faded] (0) at (0.25, 1.5) {};
		\node [style=none] (1) at (0.5, 2.25) {};
		\node [style=none] (2) at (2.25, 1.75) {};
		\node [style=none] (3) at (0.25, -2.75) {\scriptsize $n-i$};
		\node [style=none] (4) at (0.25, 2.25) {};
		\node [style=none] (5) at (0.25, 0.75) {};
		\node [style=none] (6) at (0.5, 0.75) {};
		\node [style=none] (7) at (2.25, 1.25) {};
		\node [style=reg] (8) at (2.5, 1.5) {$d_1$};
		\node [style=black-faded] (9) at (0.25, -1.5) {};
		\node [style=none] (10) at (0.25, -2.25) {};
		\node [style=none] (11) at (0.25, 2.75) {\scriptsize $i-1$};
		\node [style=none] (12) at (0.5, -0.75) {};
		\node [style=none] (13) at (2.25, -1.75) {};
		\node [style=none] (14) at (2.25, -1.25) {};
		\node [style=none] (15) at (0.25, -0.75) {};
		\node [style=none] (16) at (0.5, -2.25) {};
		\node [style=reg] (17) at (2.5, -1.5) {$d_2$};
		\node [style=black-faded] (18) at (-1.75, -0) {};
		\node [style=black-faded] (19) at (-1.75, 1) {};
		\node [style=black-faded] (20) at (-1.75, -1) {};
		\node [style=none] (21) at (-3, -1) {};
		\node [style=none] (22) at (-3, -0) {};
		\node [style=none] (23) at (-3, 1) {};
	\end{pgfonlayer}
	\begin{pgfonlayer}{edgelayer}
		\draw [in=180, out=0, looseness=1.25] (1.center) to (2.center);
		\draw (4.center) to (1.center);
		\draw [in=180, out=0, looseness=1.25] (6.center) to (7.center);
		\draw (5.center) to (6.center);
		\draw [in=180, out=0, looseness=1.25] (12.center) to (14.center);
		\draw (15.center) to (12.center);
		\draw [in=180, out=0, looseness=1.25] (16.center) to (13.center);
		\draw (10.center) to (16.center);
		\draw (23.center) to (19);
		\draw (22.center) to (18);
		\draw (21.center) to (20);
		\draw [in=180, out=-60, looseness=1.00] (20) to (10.center);
		\draw [in=180, out=60, looseness=0.75] (20) to (5.center);
		\draw [in=165, out=-60, looseness=1.00] (18) to (9);
		\draw [in=-165, out=60, looseness=1.00] (18) to (0);
		\draw [in=180, out=60, looseness=1.00] (19) to (4.center);
		\draw [in=180, out=-75, looseness=0.75] (19) to (15.center);
	\end{pgfonlayer}
\end{tikzpicture}}
\tikzset{x=1em, y=1.5ex}
 \; = \; 
\tikzset{x=1em, y=2.1ex}
\begin{tikzpicture}[show background rectangle, {background rectangle/.style}={fill=backcolour}]
	\begin{pgfonlayer}{nodelayer}
		\node [style=none] (0) at (0.5, 2.25) {};
		\node [style=none] (1) at (2.25, 1.75) {};
		\node [style=none] (2) at (0.25, -2.75) {\scriptsize $n-i$};
		\node [style=none] (3) at (0.25, 2.25) {};
		\node [style=none] (4) at (0.25, 0.75) {};
		\node [style=none] (5) at (0.5, 0.75) {};
		\node [style=none] (6) at (2.25, 1.25) {};
		\node [style=reg] (7) at (2.5, 1.5) {$d_1$};
		\node [style=none] (8) at (0.25, -2.25) {};
		\node [style=none] (9) at (0.25, 2.75) {\scriptsize $i-1$};
		\node [style=none] (10) at (0.5, -0.75) {};
		\node [style=none] (11) at (2.25, -1.75) {};
		\node [style=none] (12) at (2.25, -1.25) {};
		\node [style=none] (13) at (0.25, -0.75) {};
		\node [style=none] (14) at (0.5, -2.25) {};
		\node [style=reg] (15) at (2.5, -1.5) {$d_2$};
		\node [style=black-faded] (16) at (-1.75, -0) {};
		\node [style=black-faded] (17) at (-1.75, 1) {};
		\node [style=black-faded] (18) at (-1.75, -1) {};
		\node [style=none] (19) at (-3, -1) {};
		\node [style=none] (20) at (-3, -0) {};
		\node [style=none] (21) at (-3, 1) {};
	\end{pgfonlayer}
	\begin{pgfonlayer}{edgelayer}
		\draw [in=180, out=0, looseness=1.25] (0.center) to (1.center);
		\draw (3.center) to (0.center);
		\draw [in=180, out=0, looseness=1.25] (5.center) to (6.center);
		\draw (4.center) to (5.center);
		\draw [in=180, out=0, looseness=1.25] (10.center) to (12.center);
		\draw (13.center) to (10.center);
		\draw [in=180, out=0, looseness=1.25] (14.center) to (11.center);
		\draw (8.center) to (14.center);
		\draw (21.center) to (17);
		\draw (20.center) to (16);
		\draw (19.center) to (18);
		\draw [in=180, out=-60, looseness=1.00] (18) to (8.center);
		\draw [in=180, out=60, looseness=0.75] (18) to (4.center);
		\draw [in=180, out=60, looseness=1.00] (17) to (3.center);
		\draw [in=180, out=-75, looseness=0.75] (17) to (13.center);
	\end{pgfonlayer}
\end{tikzpicture}}
\tikzset{x=1em, y=1.5ex}
\]
\item For the rule
$$\derivationRule{n+1\vdash P }{n\vdash \nu a_{n+1}(P)}{}$$
we have $\Alphabet(\nu a_{n+1}(P)) = \Alphabet(P)\setminus \{a_{n+1}\}$ which implies that $a_i \notin \Alphabet(P)$. We can therefore apply the induction hypothesis to find $d$ such that
\[\semProc{n+1}{P}\; = \;\;\vardisconnect{black-faded}{n+1}{i}{d}\]
and we can deduce
\[\semProc{n}{\nu a_{n+1}P} \; = \; \hideProc{black-faded}{n}{P} \; = \; 
\tikzset{x=1em, y=2.1ex}
\begin{tikzpicture}[show background rectangle, {background rectangle/.style}={fill=backcolour}]
	\begin{pgfonlayer}{nodelayer}
		\node [style=none] (0) at (1, -0.25) {};
		\node [style=none] (1) at (-0.75, -1.25) {};
		\node [style=none] (2) at (-2, -0) {};
		\node [style=reg] (3) at (1.5, -0.25) {$d$};
		\node [style=none] (4) at (1, -0) {};
		\node [style=black-faded] (5) at (-0.5, -0) {};
		\node [style=none] (6) at (-1, 1.25) {\scriptsize $i-1$};
		\node [style=none] (7) at (-2, 0.75) {};
		\node [style=none] (8) at (-0.75, 0.75) {};
		\node [style=none] (9) at (-2, -1.25) {};
		\node [style=none] (10) at (-1, -0.5) {\scriptsize $n-i$};
		\node [style=black-faded] (11) at (-1.25, -2) {};
		\node [style=none] (12) at (-0.75, -2) {};
		\node [style=none] (13) at (1, -0.5) {};
	\end{pgfonlayer}
	\begin{pgfonlayer}{edgelayer}
		\draw (2.center) to (5);
		\draw [in=180, out=0, looseness=1.25] (8.center) to (4.center);
		\draw (7.center) to (8.center);
		\draw [in=180, out=0, looseness=1.25] (1.center) to (0.center);
		\draw (9.center) to (1.center);
		\draw [in=180, out=0, looseness=1.25] (12.center) to (13.center);
		\draw (11.center) to (12.center);
	\end{pgfonlayer}
\end{tikzpicture}}
\tikzset{x=1em, y=1.5ex}
 \]
\item $P$ cannot be equal to a process variable $\procvar{f}$ since, for $n\vdash \procvar{f}$ we have $\Alphabet(\procvar{f}) = ar(\procvar{f}) = n$.
\item For the rule $$\derivationRule{n\vdash P\quad degree(\perm)\leq n}{n\vdash P\perm}{}$$ we have $\Alphabet(P\sigma) = \perm[\Alphabet(P)]$. Therefore $a_{j}\notin \Alphabet(P)$ for $j\df \perm^{-1}(i)$. By the induction hypothesis, we can find $d$ such that
\[\semProc{n}{P} = \vardisconnect{black-faded}{n}{j}{d}\]
and conclude that
\[\semProc{n}{P\perm}\;=\; \permProc{n}{\perm}{P}\; =\;
\tikzset{x=1em, y=2.1ex}
\begin{tikzpicture}[{background rectangle/.style}={fill=backcolour}, show background rectangle]
	\begin{pgfonlayer}{nodelayer}
		\node [style=none] (0) at (1, -0.25) {};
		\node [style=none] (1) at (-0.75, -0.75) {};
		\node [style=none] (2) at (-2, -0) {};
		\node [style=reg] (3) at (1.5, -0) {$d$};
		\node [style=none] (4) at (1, 0.25) {};
		\node [style=black-faded] (5) at (-0.75, -0) {};
		\node [style=none] (6) at (-0.75, 1.25) {\scriptsize $j-1$};
		\node [style=none] (7) at (-2, 0.75) {};
		\node [style=none] (8) at (-0.75, 0.75) {};
		\node [style=none] (9) at (-2, -0.75) {};
		\node [style=none] (10) at (-0.75, -1.25) {\scriptsize $n-j$};
		\node [style=tallbox] (11) at (-2.5, -0) {\scriptsize $\overline\perm$};
		\node [style=none] (12) at (-3, -0.75) {};
		\node [style=none] (13) at (-3, -0) {};
		\node [style=none] (14) at (-3, 0.75) {};
		\node [style=none] (15) at (-4.5, 0.75) {};
		\node [style=none] (16) at (-4.5, -0) {};
		\node [style=none] (17) at (-4.5, -0.75) {};
		\node [style=none] (18) at (-4, -1.25) {\scriptsize $n-i$};
		\node [style=none] (19) at (-4, 1.25) {\scriptsize $i-1$};
	\end{pgfonlayer}
	\begin{pgfonlayer}{edgelayer}
		\draw (2.center) to (5);
		\draw [in=180, out=0, looseness=1.25] (8.center) to (4.center);
		\draw (7.center) to (8.center);
		\draw [in=180, out=0, looseness=1.25] (1.center) to (0.center);
		\draw (9.center) to (1.center);
		\draw (17.center) to (12.center);
		\draw (16.center) to (13.center);
		\draw (15.center) to (14.center);
	\end{pgfonlayer}
\end{tikzpicture}}
\tikzset{x=1em, y=1.5ex}
\; =\;
\tikzset{x=1em, y=2.1ex}
\begin{tikzpicture}[{background rectangle/.style}={fill=backcolour}, show background rectangle]
	\begin{pgfonlayer}{nodelayer}
		\node [style=none] (0) at (1, -0.25) {};
		\node [style=none] (1) at (-0.75, -0.75) {};
		\node [style=none] (2) at (-1, -0) {};
		\node [style=reg] (3) at (1.5, -0) {$d$};
		\node [style=none] (4) at (1, 0.25) {};
		\node [style=none] (5) at (0.25, 1.25) {\scriptsize $j-1$};
		\node [style=none] (6) at (-1, 0.75) {};
		\node [style=none] (7) at (-0.75, 0.75) {};
		\node [style=none] (8) at (-1, -0.75) {};
		\node [style=none] (9) at (0.25, -1.25) {\scriptsize $n-j$};
		\node [style=tallbox] (10) at (-1.5, -0) {\scriptsize $\overline{\perm}'$};
		\node [style=none] (11) at (-2, -0.75) {};
		\node [style=black-faded] (12) at (-3, -0) {};
		\node [style=none] (13) at (-2, 0.75) {};
		\node [style=none] (14) at (-4.25, 0.75) {};
		\node [style=none] (15) at (-4.25, -0) {};
		\node [style=none] (16) at (-4.25, -0.75) {};
		\node [style=none] (17) at (-3.5, 1.25) {\scriptsize $i-1$};
		\node [style=none] (18) at (-3.5, -1.25) {\scriptsize $n-i$};
	\end{pgfonlayer}
	\begin{pgfonlayer}{edgelayer}
		\draw [in=180, out=0, looseness=1.25] (7.center) to (4.center);
		\draw (6.center) to (7.center);
		\draw [in=180, out=0, looseness=1.25] (1.center) to (0.center);
		\draw (8.center) to (1.center);
		\draw (16.center) to (11.center);
		\draw (15.center) to (12);
		\draw (14.center) to (13.center);
	\end{pgfonlayer}
\end{tikzpicture}}
\tikzset{x=1em, y=1.5ex}
 \]
for some permutation $\perm'\from n-1\to n-1$.
\item For the rule
$$\derivationRule{n+1\vdash P }{n\vdash \nu a_{n+1}(P)}{}$$
we have $\semProc{n}{P}\; = \; \newvarProc{black-faded}{n-1}{P}$
and, using the induction hypothesis we can find $d$ such that
\[\semProc{n-1}{P} \;= \;\vardisconnect{black-faded}{n-1}{i}{d}\]
The conclusion follows immediately. 
\end{itemize}
\end{proof}

\end{document}